\def\RR{{\mathbb R}}
\def\CC{{\mathbb C}}
\def\ZZ{{\mathbb Z}}
\def\A{{\mathcal A}}
\def\B{{\mathcal B}}
\def\D{{\mathcal D}}
\def\E{{\mathcal E}}
\def\F{{\mathcal F}}
\def\H{{\mathcal H}}
\def\K{{\mathcal K}}
\def\M{{\mathcal M}}
\def\N{{\mathcal N}}
\def\P{{\mathcal P}}
\def\R{{\mathcal R}}
\def\S{{\mathcal S}}
\def\U{{\mathcal U}}
\def\a{\alpha}
\def\b{\beta}
\def\d{\delta}
\def\f{\varphi}
\def\g{\gamma}
\def\k{\kappa}
\def\L{{\mathrm L}}
\def\Q{\Omega}
\def\R{{\mathrm R}}
\def\s{\sigma}
\def\Ad{{\hbox{\rm Ad\,}}}
\def\dim{{\hbox{dim}\,}}
\def\sp{{\rm sp}\,}
\def\1{{\mathbbm 1}}
\def\uone{{\rm U(1)}}
\def\u1net{{\A^{(0)}}}
\def\diff{{\rm Diff}}
\def\diffs1{\diff(S^1)}
\def\mob{{\rm M\ddot{o}b}}
\def\mob2{{\rm M\ddot{o}b}^{(2)}}
\def\supp{{\rm supp}}
\def\psl2r{{\rm PSL}(2,\RR)}
\def\sl2r{{\rm SL}(2,\RR)}
\def\su11{{\rm SU}(1,1)}
\def\2dmob{{\overline{\psl2r}\times\overline{\psl2r}}}
\def\<{\langle}
\def\>{\rangle}
\DeclareMathOperator{\Mob}{M\ddot ob}
\newcommand{\dd}{\,\mathrm{d}} 
\newcommand{\ee}{\mathrm{e}} 
\newcommand{\ima}{\mathrm{i}}
\newcommand{\slot}{\,\cdot\,} 
\newcommand{\FerC}{ \mathrm{Fer}_\CC}
\newcommand{\Mr}{\mathcal{M}_{\mathrm{r}}} 
\newcommand{\Hr}{\mathcal{H}_{\mathrm{r}}}
\newcommand{\Trr}{T_{\mathrm{r}}} 
\newcommand{\Omr}{\Omega_{\mathrm{r}}} 
\newcommand{\Mc}{\mathcal{M}_{\mathrm{c}}} 
\newcommand{\Tc}{T_{\mathrm{c}}} 
\newcommand{\Qc}{Q_{\mathrm{c}}} 
\newcommand{\Omc}{\Omega_{\mathrm{c}}} 
\newcommand{\Mtc}{\widetilde{\mathcal{M}}_{\mathrm{c}, \k}} 
\newcommand{\Ttc}{\widetilde{T}_{\mathrm{c}}} 
\newcommand{\Omtc}{\widetilde{\Omega}_{\mathrm{c}}} 
\newcommand{\und}{\ushort}
\newtheorem{theorem}{Theorem}[section]
\newtheorem{definition}[theorem]{Definition}
\newtheorem{corollary}[theorem]{Corollary}
\newtheorem{proposition}[theorem]{Proposition}
\newtheorem{lemma}[theorem]{Lemma}
\theoremstyle{remark}
\newtheorem{remark}[theorem]{Remark}
\title{Integrable QFT and Longo-Witten endomorphisms}
\date{}
\author{{\bf Marcel Bischoff}%
\footnote{Supported by the German Research Foundation (Deutsche Forschungsgemeinschaft (DFG))
by the DFG Research Training Group 1493 ``Mathematical Structures in Modern Quantum Physics''.
Supported in part by the ERC Advanced Grant 227458
OACFT ``Operator Algebras and Conformal Field Theory''.
} \\
e-mail: {\tt bischoff@theorie.physik.uni-goettingen.de}\\
Institut f\"ur Theoretische Physik, Universit\"at G\"ottingen \\
Friedrich-Hund-Platz 1, 37077 G\"ottingen, Germany.\\
{\bf Yoh Tanimoto}%
\footnote{Supported
by Deutscher Akademischer Austauschdienst until August 2012,
by Hausdorff Institut f\"ur Mathematik until December 2012,
by Alexander von Humboldt Stiftung until March 2013,
and by Grant-in-Aid for JSPS fellows 25-205 since April 2013.} \\
e-mail: {\tt hoyt@ms.u-tokyo.ac.jp}\\
Graduate School of Mathematical Sciences, The University of Tokyo\\
and Institut f\"ur Theoretische Physik, Universit\"at G\"ottingen\\
JSPS SPD postdoctoral fellow
}
\begin{document}
\maketitle
\begin{abstract}
Our previous constructions of Borchers triples are extended to massless scattering with
nontrivial left and right components. A massless Borchers triple is constructed from
a set of left-left, right-right and left-right scattering functions.
We find a correspondence between massless left-right scattering S-matrices and
massive block diagonal S-matrices. We point out a simple class of S-matrices with
examples.

We study also the restriction of two-dimensional models to the lightray.
Several arguments for constructing strictly local two-dimensional nets are presented
and possible scenarios are discussed.
\end{abstract}

\tableofcontents

\section{Introduction}\label{introduction}
Here we further study our operator-algebraic approach to constructing quantum field models
in the two-dimensional spacetime. In the previous works we have established the general theory
of (wedge-local) massless excitations \cite{DT11, Tanimoto12-2} and constructed
several families of examples \cite{Tanimoto12-2, BT12}. It has been revealed that
from a pair of chiral components of conformal field theory and an appropriate
S-matrix one can construct the von Neumann algebra corresponding to the wedge-shaped region.
The operators in strictly local regions are to be determined through the intersection
of such wedges \cite{Borchers92}.
In our previous result, we considered only simple particle spectrum. Here we allow
multiple particle spectrum. Given a set of massless S-matrices, we construct
a Borchers triple, which is a weakened notion of Haag-Kastler net.
A corresponding massive result has been obtained in \cite{LS12}.
We show also that given a set of massless S-matrices, it is possible to construct
a massive Borchers triple. This provides a simple class of massive models.
In addition, with this transparent formulation we exhibit a family of concrete examples
of S-matrices, both massless and massive.
Finally, we consider a restriction of a two-dimensional model on the lightray.
A novel strategy to construct two-dimensional models is proposed and
several candidates for this program are discussed.

To integrable quantum field theory there is another approach,
the so-called form factor bootstrap program \cite{Smirnov92, FS94}. One takes a Lagrangian,
and after discussing its symmetry, one conjectures the S-matrix. The Hilbert space is
identified with the Fock space twisted by the S-matrix and the local operators are
obtained when one finds the set of matrix components which satisfy the so-called
form factor equations. This program has seen many interesting developments, including
form factors of several S-matrices (e.g.\! \cite{ZZ92, Bernard92}
for massless S-matrices and \cite{DMS95, MS97} for form factors).
In massless models there are so-called left-left, right-right and left-right S-matrices
\cite{Bernard92}. We formulate the properties of S-matrices in terms of operator algebras
and construct corresponding Borchers triples.
By using an analogous twist as \cite{Tanimoto13}, it turns out that the same set of S-matrices can be
used to construct a massive Borchers triple.
In our approach, we construct first one-dimensional Borchers triples (defined below)
using the left-left and right-right S-matrices and the two-dimensional Borchers triple
is obtained by twisting with the left-right S-matrix.
In addition, we find a simple class of S-matrices which contains
an infinite family of concrete examples.

Conversely, for a given two-dimensional model, one can
simply restrict it to the lightray. In this way, one obtains a one-dimensional Borchers
triple. The full two-dimensional theory
is remembered through a one-parameter semigroup of Longo-Witten endomorphisms.
Under this restriction, several conjectures have been made for integrable models,
for example, the $\mathrm{SU}(2)$-Thirring model should correspond to the $\mathrm{SU}(2)$-current
algebra \cite{ZZ92}, or an asymptotically free theory should correspond to the free current (c.f.\! \cite{BLM11, LS12}).
We are not going to prove these conjectures. Rather, we will argue that any of such correspondence
would lead to further new two-dimensional Haag-Kastler nets.
Although we still do not have any nontrivial example to which this program applies,
it could in principle go beyond integrable models in which the particle number is always conserved.

This paper is organized as follows. In Section \ref{preliminaries} we collect the notions
in the operator-algebraic approach to QFT, especially those oriented to scattering theory
and conformal field theory.
Section \ref{massless} treats massless integrable models. We define two-particle S-matrix
and construct the corresponding Borchers triples. It is shown that a class of massive S-matrices
can be used to construct massless S-matrices.
Then we observe that such a massless S-matrix can be turned into a massive S-matrix in Section \ref{massive}.
We exhibit the correspondence between one- and two-dimensional models in Section \ref{holography}.
Several existing conjectures are explained and a possible strategy for new two-dimensional models
is presented.
We gather open problems in Section \ref{outlook}.

Parts of this paper are based on the Ph.D.\! thesis of the author (M.B.) \cite{Bischoff12}.

\section{Preliminaries}\label{preliminaries}
Here we collect fundamental notions in the operator-algebraic approach to
scattering theory. Many of them are generalizations of the ones
which we considered before \cite{Tanimoto12-2, BT12}.
Some properties remain valid for such generalizations.

\subsection{Algebraic QFT and Borchers triples}\label{borchers-triples}
A {\bf Haag-Kastler net} $(\mathcal{A}, U, \Omega)$ is an axiomatization of local observables
in quantum field theory.
It is an assignment of a von Neumann algebra $\mathcal{A}(O)$ on a common Hilbert space $\mathcal{H}$
to each open region $O \subset \mathbb{R}^d$.
It should be covariant with respect to a unitary representation $U$
of the Poincar\'e group on $\mathcal{H}$ and possess an invariant ground state given by the vacuum vector $\Omega$.
The triple $(\mathcal{A},U,\Omega)$ is subject to standard axioms and considered as
a model of quantum field theory \cite{Haag96}. Each von Neumann algebra $\A(O)$ is considered
to be the algebra generated by the observables measured in the region $O$.
For example, if one has a quantum field in the sense of Wightmann given
by an operator valued distribution $\phi(f)$ acting on a Hilbert space $\H$, one obtains---provided the fields 
commute for functions with spacelike separated supports in a strong sense---a Haag-Kastler net on $\H$ by 
taking $\A(O) = \{\ee^{\ima \phi(f)} : \supp f\subset O\}''$.

It holds by the general Reeh-Schlieder argument that $\Omega$ is cyclic and separating 
for $\A(W_\mathrm{R})$, the algebra associated with the standard right-wedge $W_\R := \{(a_0,a_1) \in\RR^2: a_1 > |a_0|\}$.
Then there is a one-parameter group of unitaries $\{\Delta^{\ima t}\}$ canonically associated with the pair 
$(\A(W_\mathrm{R}),\Omega)$ by Tomita-Takesaki modular theory \cite{TakesakiII}.
These and the spacetime translations have the same commutation relation as that of Lorentz boosts and translations,
and in many cases they actually coincide, $\Delta^{\ima t} =U(\Lambda(-2\pi t))$
(Bisognano-Wichmann property).

It seems quite difficult to construct such an infinite family $\{{\mathcal A}(O)\}$ of von Neumann algebras
with compatibility conditions. Actually, Borchers proved that for $d=2$,
it is enough to consider a single von Neumann algebra $\M$
which is associated with $W_\R$, 
the spacetime translations $T$ and an invariant vector $\Omega$.
Such a triple $(M, T, \Omega)$ subject to several requirements
is called a Borchers triple and we will give its formal definition below.

If $({\mathcal A},U,\Omega)$ is a Haag-Kastler net, then $({\mathcal A}(W_{\mathrm R}), U|_{{\mathbb R}^2}, \Omega)$ is a Borchers triple,
and we consider the restriction of $U$ to the spacetime translations.
Conversely, if one has a Borchers triple $({\mathcal M},T,\Omega)$, it is possible to define a net
as follows:
one first defines a net for every wedge by 
\begin{align}
	\label{eq:WedgeNet}
	\A(W_{\mathrm R} +a) &= \Ad T(a)(\M)\,,&
	\A(W_{\mathrm L} +b) &= \Ad T(b)(\M')\,,
\end{align}
 where $W_{\mathrm L}$ is the standard left-wedge.
To pass to bounded regions one just has to take intersections, more precisely
any double cone (diamond) in two dimensional spacetime can be represented as the intersection
of two-wedges: $(W_{\mathrm R}+a)\cap (W_{\mathrm L}+b) =: D_{a,b}$, where $W_{\mathrm L}$ is the standard left-wedge.
Then the von Neumann algebras for double cones $D_{a,b}$ are defined 
by ${\mathcal A}(D_{a,b}) := {{\rm Ad\,}} T(a)({\mathcal M}) \cap {{\rm Ad\,}} T(b)({\mathcal M}')$.
For a general region $O$ one takes the union from the inside:
${\mathcal A}(O) := \left(\bigcup_{D_{a,b}\subset O} {\mathcal A}(D_{a,b})\right)''$.
Furthermore, one can extend the representation of the translation group to a representation of the whole Poincar\'e group by using the Tomita-Takesaki theory of von Neumann algebras. 
Namely one defines the representation guided by the Bisagnono-Wichmann property above and Borchers' theorem ensures that this really defines a representation of the 
Poincar\'e group \cite{Borchers92}.
More precisely, the one-parameter unitary group $\{\Delta^{\ima t}\}$ canonically associated with the pair
of a von Neumann algebra $\M$ and $\Omega$ represents the Lorentz boosts.

Then one can show that this ``net'' $({\mathcal A}, U, \Omega)$
satisfies almost all of the properties of Haag-Kastler net.
But, while the wedge algebras are by definition always sufficiently large, i.e.\ they generate the whole Hilbert space $\H$ from the vacuum $\Omega$,
it is in general difficult to show that for local algebras ${\mathcal A}(D_{a,b})$
and it can actually fail \cite[Theorem 4.16]{Tanimoto12-2}.
But if it is the case, then the triple indeed defines a Haag-Kastler net by the above structure. 
This program has been accomplished in some cases and obtained families of interacting models
\cite{Lechner08, Tanimoto13}.
It might happen that $\A(D_{a,b})$ just contains the scalars and one would not have any local observables.
If there are non-trivial local observables in $\A(D_{a,b})$ one gets at least a Haag-Kastler net on a smaller Hilbert space $\H_0=\overline{\A(D_{a,b})\Omega}$.

A {\bf Borchers triple} on a Hilbert space $\H$ is a triple $(\M, T, \Omega)$
of a von Neumann algebra $\M$, a unitary representation $T$ of $\RR^2$
and a unit vector $\Omega$, such that
\begin{enumerate}[ {(}1{)} ]
 \item If $a\in W_R$, then $\Ad T(a)(\M) \subset \M$.
 \item The joint spectrum of $T$ is contained in the closed forward lightcone
 $\overline{V_+} := \{(a_0,a_1)\in\RR^2: a_0 \ge |a_1|\}$.
 \item $\Omega$ is cyclic and separating for $\M$.
\end{enumerate}
In the sense explained above, a Borchers triple gives a Poincar\'e covariant, wedge-local net defined by equation (\ref{eq:WedgeNet}) and can be considered to be a ``net of observables localized in wedges''.
If $\Omega$ is cyclic for the von Neumann algebra $\M \cap \Ad T(a)(\M')$ for any
$a \in W_\R$, one can construct a Haag-Kastler net on the original Hilbert space $\H$ and in this case we say
that the Borchers triple $(\M,T,\Omega)$ is {\bf strictly local}.
In Sections \ref{massless} and \ref{massive} we construct Borchers triples and
Section \ref{holography} is concerned with strictly local triples.

\subsubsection*{The massive scalar free field}
The simplest Borchers triple is constructed from the simplest quantum field.
The one-particle Hilbert space of the free scalar field of mass $m>0$
is given by $\H_m := L^2(\RR, \dd\theta)$ and 
the translation acts by $(T_m(a)\psi)(\theta) = \ee^{\ima p_m(\theta)\cdot a}\psi(\theta)$,
where $p_m(\theta) := (m\cosh(\theta), m\sinh(\theta))$ parametrizes the mass shell.
We need the unsymmetrized Hilbert space $\H^\Sigma_m := \bigoplus \H_m^{\otimes n}$ and
the symmetrized Hilbert space $\Hr := \bigoplus P_{n,\mathrm{sym}} \H_m^{\otimes n}$,
where $P_{n,\mathrm{sym}}$ is the projection onto the symmetric subspace.

Let $a_{\mathrm{r}}^\dagger$ and $a_{\mathrm{r}}$ be the creation and annihilation operators as usual
(see \cite[Section 2.3]{Tanimoto13}. In our notation,
$a_{\mathrm{r}}^\dagger(\psi)$ is linear and $a_{\mathrm{r}}(\psi)$ is antilinear with respect to $\psi$).
The (real) free field $\phi_{\mathrm{r}}$ is defined by
\[
\phi_{\mathrm{r}}(f) := a_{\mathrm{r}}^\dagger(f^+) + a_{\mathrm{r}}(J_mf^-), \;\;\;\;\;\;\;\;\; f^\pm(\theta) = \frac{1}{2\pi}\int \dd^2af(a)\ee^{\pm \ima p_m(\theta)\cdot a},
\]
where $f$ is a test function in $\mathscr{S}(\RR^2)$ and $J_m\psi(\theta) = \overline{\psi(\theta)}$.
Our von Neumann algebra is
\[
\Mr := \{\ee^{\ima\phi_{\mathrm{r}}(f)}: \supp f \subset W_\R\}''.
\]
The translation on the full space is the second quantized representation
$\Trr := \Gamma(T_m)$ and there is the Fock vacuum vector $\Omr \in \Hr$.
This triple $(\Mr,\Trr,\Omr)$ is the Borchers triple of the free field.
Of course this is strictly local and the corresponding net is the familiar
free field net. A more abstract definition of this free field construction
starting from a general positive energy representation of the Poincar\'e is given in \cite{BGL02}.

\subsubsection*{Examples from integrable models}
The form factor bootstrap program, an approach to integrable quantum field theory,
can be briefly summarized as follows \cite{Smirnov92}. First a model with infinitely
many conserved current is considered.
The scattering matrix turns out to be factorizing, then the explicit form of it is
speculated by a symmetry argument. Finally, one finds solutions of the so-called
form factor equation, which is given in terms of the two-particle scattering function.
A solution of the form factor equation is a series of functions. It is supposed
to serve as the matrix coefficients of a local observable. The convergence of the
series as an operator is expected in a wide class of models but remains open.

An alternative approach has been initiated by Schroer \cite{Schroer97, Schroer99}
and worked out by Lechner \cite{Lechner08}. In this approach, given an S-matrix,
the operators localized in a wedge are constructed and the local observables
are obtained as the intersection of left and right wedges. The determination of
the intersection, which in the form factor program would correspond to finding
form factors (and proving the convergence), has been done with the help of
operator algebraic methods including the Tomita-Takesaki theory of von Neumann
algebras \cite{BL04}.

For the case of single species of particle of mass $m>0$ (the scalar case) treated in \cite{Lechner08}
one takes a bounded analytic function
$S_2(\theta)$ on the strip $\RR+\ima(0,\pi)$, continuous on the boundary,
such that
\[
S_2(\theta)^{-1} = \overline{S_2(\theta)} = S_2(-\theta) = S_2(\theta+i\pi)
\]
for $\theta \in \RR$.

The one-particle space $\H_1$ is the same as that of the free field.
On $n$-particle space one defines the $S_2$-permutation by
\[
(D_{S_2,n}(\tau_j)\Psi)(\theta_1,\cdots, \theta_n) = S_2(\theta_{j+1}-\theta_j)\Psi(\theta_1,\cdots,\theta_{j+1},\theta_j,\cdots,\theta_n).
\]

This time $P_{n,S_2}$ is the orthogonal projection onto the subspace of $\H_1^{\otimes n}$ invariant
under $\{D_{S_2,n}(\tau_j): i \le j \le n\}$.
We take the Hilbert space $\H_{S_2} := \bigoplus P_{n,S_2} \H_1^{\otimes n}$,
the representation $T_{S_2}$ is the second quantized promotion of $T^1$
and the Fock vacuum is denoted by $\Omega_{S_2}$.
The creation and annihilation operators are given by
$(z_{S_2}^\dagger(\psi)\Phi)_n = \sqrt n P_{n,S_2}(\psi\otimes \Phi_{n-1})$
and $z_{S_2}(\psi) = z_{S_2}^\dagger(\psi)^*$. 
For a test function $f$ on $\RR^2$, the wedge-local field is defined also as
\[
\phi_{S_2}(f) := z_{S_2}^\dagger(f^+) + z_{S_2}(J_1f^-), \qquad 
f^\pm(\theta) = \frac{1}{2\pi}\int \dd^2af(a)\ee^{\pm \ima p_m(\theta)\cdot a}\,\text.
\]

The von Neumann algebra $\M_{S_2}$ is given by
\[
\M_{S_2} := \{\ee^{\ima\phi_{S_2}(f)}: \supp f \subset W_\L\}'.
\]
The triple $(\M_{S_2}, U_{S_2}, \Omega_{S_2})$ is a Borchers triple \cite{Lechner03}
and strictly local if $S_2$ is regular and fermionic ($S_2(0)=-1$) \cite{Lechner08}.

\subsection{One-dimensional Borchers triple}\label{half-line}
Let $\H_0$ be a Hilbert space. A triple $(\M_0, T_0, \Q_0)$ of a von Neumann
algebra $\M_0$, a unitary representation $T_0$ of $\RR$ with positive generator
and a unit vector $\Q_0$ is said to be a {\bf one-dimensional Borchers triple} if
$\Q_0$ is cyclic and separating for $\M_0$ and it holds that $\Ad T_0(t)(\M_0) \subset \M_0$
for $t\ge 0$. Note that this notion is equivalent to that of {\bf half-sided modular inclusion}
\cite{Wiesbrock93, AZ05} if one considers the inclusion $\Ad T_0(1)(\M_0) \subset \M_0$.

If $\Q_0$ is cyclic for the intersection $\M_0 \cap \Ad T_0(1)(\M_0)$, then we say
that the triple $(\M_0, T_0, \Q_0)$ is {\bf strictly local}. The corresponding
notion in half-sided modular inclusion is the {\bf standardness}.
If one has a strictly local one-dimensional Borchers triple, then one can construct
a M\"obius covariant net of von Neumann algebras on $S^1$ (see below), in which $\M_0$ and $T_0$
correspond to the algebra of the half-line $\RR_+$ and the translation, respectively
\cite{GLW98}.

After this remark it is natural to introduce the following concept (see \cite{LW11, Tanimoto12-2}):
a {\bf Longo-Witten endomorphism} of the triple $(\M_0, T_0, \Q_0)$ is an endomorphism
of $\M_0$ which is implemented by a unitary $V_0$, which commutes with $T_0$ and
preserves the vacuum state $\<\Q_0, \cdot\,\Q_0\>$. If we require that $V_0\Q_0 = \Q_0$,
such an implementation is unique.

\subsubsection*{Examples from nets}
An important class of examples comes from M\"obius covariant nets on $S^1$.
A {\bf M\"obius covariant net} of von Neumann algebras on $S^1$ defined on $\H_0$
is a triple $(\A_0, U_0, \Q_0)$, where $\A_0$ assigns a von Neumann algebra $\A_0(I)$
to each proper open interval $I\subset S^1$,
$U_0$ is a unitary representation of the M\"obius group $\psl2r$ and $\Q_0$,
which satisfy certain properties (see the preliminary sections in \cite{Tanimoto12-2, BT12}).
Then $(\A_0(\RR_+), U_0|_{\RR}, \Q_0)$ is a one-dimensional, strictly local
Borchers triple, where $\RR_+ \subset \RR$ is understood as a subset of $S^1$ by the
stereographic projection and $U_0$ is restricted to the translation subgroup of $\psl2r$
under this identification. Conversely, if one has a strictly local triple, one can
construct a M\"obius covariant net. 
The correspondence is one-to-one if one
assumes the M\"obius covariant nets to be strongly additive \cite{GLW98}.

Similarly, if we take a (two-dimensional) Borchers triple $(\M, T, \Q)$,
then one can consider the restriction of $T$ to the positive lightray $\{(t,t)\in\RR^2: t\in \RR\}$,
which we denote by $T_+$.
It is immediate that the triple $(\M,T_+,\Q)$ is a one-dimensional Borchers triple
($W_\R$ is by definition an open wedge, hence does not include the lightrays,
but the inclusion relation for Borchers triple is immediate from the strong continuity of $T$
and strong closedness of $\M$).
We will discuss this class with examples in detail in Section \ref{holography}.

\subsection{Massless scattering theory}\label{scattering}
Usually the existence of massless particles is a source of difficulty in
scattering theory. We have seen that an additional assumption, 
asymptotic completeness, greatly reduces the problem \cite{DT11, Tanimoto12-1, Tanimoto12-2}.
Of particular importance is the result \cite[Section 3]{Tanimoto12-2} that
a Haag-Kastler net which is asymptotically complete with respect to waves
(the corresponding notion of massless particles in the two-dimensional spacetime)
can be easily reconstructed from its asymptotic (free) behavior and the S-matrix.
In this paper we are concerned only with such models.

\subsubsection*{Borchers triples by tensor product}
A (two-dimensional) Borchers triple can be constructed out of a pair of one-dimensional
Borchers triples $(\M_\pm, T_\pm, \Q_\pm)$ as follows.
Let $(t_+,t_-)$ be the lightray coordinates of $\RR^2$, where $t_+ = t_0 - t_1$
and $t_- = t_0 + t_1$ (the indices might look unnatural, but are consistent with
the scattering theory \cite{Buchholz75, DT11}).
One takes a triple $(\M, T, \Omega)$ where
\begin{itemize}
 \item $\M := \M_+'\otimes \M_-$,
 \item $T(t_+,t_-) = T_+(t_+)\otimes T_-(t_-)$,
 \item $\Q = \Q_+\otimes\Q_-$.
\end{itemize}
Then it is immediate to see that this is a Borchers triple.
The representation $T$ is said to contain waves, in the sense that
there are nontrivial spectral projections concentrated in the lightrays.
This triple naturally turns out not to interact, namely the S-matrix is
the identity operator $I$ \cite{Tanimoto12-2}.

\begin{figure}[ht]
    \centering
    \definecolor{mygrey}{HTML}{d2d2d2}
    \begin{tikzpicture}[scale=0.8]
        \begin{scope}
            \fill[color=mygrey,decoration={random steps, segment length=6pt, amplitude=2pt}]
           (0,0)--(3.8,3.8) decorate{-- (3.8,-3.8)} --(0,0);
            \draw [dashed] (-3.5,-3.5)--(3.5,3.5);
            \draw [dashed] (3.5,-3.5)--(-3.5,3.5) node [right] {$t_+$};
            \draw [thick, ->] (-4,0)--(4,0) node [above left] {$t_1$};
            \draw [thick, ->] (0,-4)--(0,4) node [below right] {$t_0$};
            \node at (2.25,.725) {$W_\mathrm{R}$};
            \draw [ultra thick] (0,0)-- node [right] {$\scriptstyle \1 \otimes \M_-$} (3.5,3.5) node [left] {$t_-$};
            \draw [ultra thick,dotted] (3.5,3.5)--(4,4);
            \draw [ultra thick] (0,0)-- node [ right]  {$\scriptstyle \M_+' \otimes \1$} (3.5,-3.5);
            \draw [ultra thick,dotted] (3.5,-3.5)--(4,-4);
        \end{scope}
        \begin{scope}[shift={(9,0)}]
            \fill[color=mygrey,decoration={random steps, segment length=6pt, amplitude=2pt}]
            (1.5,1)--(3.8,3.3) decorate{-- (3.8,-1.3)} --(1.5,1);
   	    \draw [dashed] (1.15,1.35)--(1.5,1)--(0.15,-.35);
            \draw [dashed] (4,3.5)--(1.5,1)--(4,-1.5);
            \node at (2.25,1.25) [right] {$W_\mathrm{R} +(t_+,t_-)$};
            \draw [ultra thick] (1.25,1.25)-- node [right] {$\scriptstyle (\1\otimes \mathrm{Ad\,} T_-(t_-)(\M_-)$} (3.5,3.5) node [left] {$t_-$};
            \draw [ultra thick,dotted] (3.5,3.5)--(4,4);
            \draw [ultra thick] (0.25,-0.25)--node [right] {$\scriptstyle \mathrm{Ad\,} T_+(t_+)(\M_+')\otimes  \1$} (3.5,-3.5);
            \draw [ultra thick,dotted] (3.5,-3.5)--(4,-4);
            \draw [thick, ->] (-4,0)--(4,0) node [above left] {$t_1$};
            \draw [thick, ->] (0,-4)--(0,4) node [below right] {$t_0$};
           \draw [dashed] (-3.75,-3.75)--(4,4);
            \draw [dashed] (4,-4)--(-3.75,3.75) node [right] {$t_+$};
        \end{scope}
    \end{tikzpicture}
    \caption{On the definition of the tensor product Borchers triple}
    \label{fig:wedge-local net}
\end{figure}
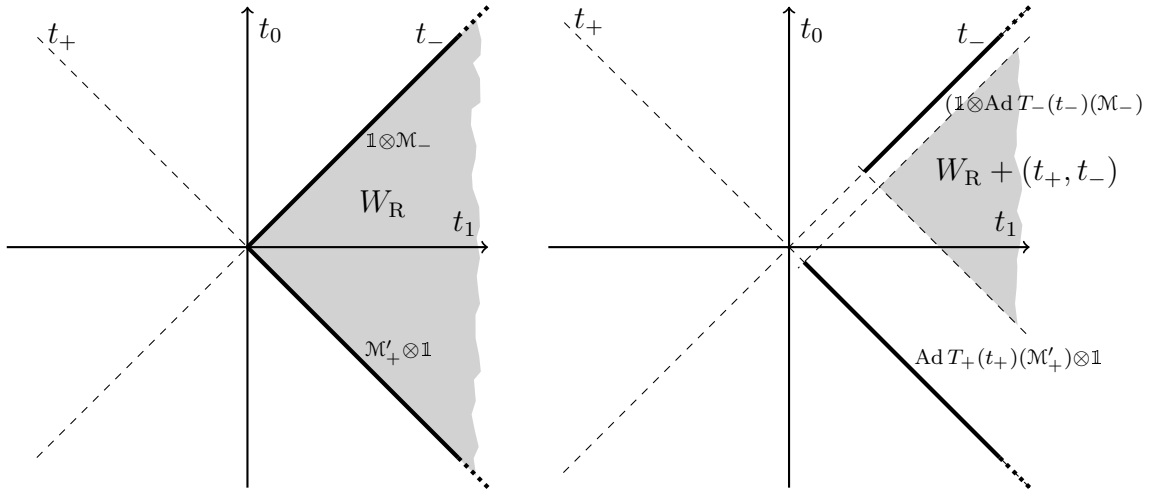

\subsubsection*{How to construct interacting models}
We do not repeat the definition of asymptotic completeness for waves \cite{Buchholz75, DT11}.
By repeating the proofs of \cite[Section 3]{Tanimoto12-2}\cite[Proposition 2.2]{BT12},
one can show the following.

\begin{proposition}\label{pr:massless-structure}
There is a one-to-one correspondence between
\begin{itemize}
 \item asymptotically complete (for massless waves) Borchers triples $\{(\M,T,\Q)\}$,
 \item 7-tuples $\{((\M_+, T_+, \Q_+), (\M_-, T_-, \Q_-), S)\}$, where $(\M_\pm, T_\pm, \Q_\pm)$ are
 one-dimensional Borchers triples and $S$ is a unitary operator on $\H_+\otimes\H_-$ commuting with
$T_+\otimes T_-$, leaving $\H_+\otimes \Omega_-$ and $\Omega_+\otimes\H_-$
pointwise invariant,
such that $x'\otimes \1$ commutes with $\Ad S(x\otimes \1)$ where
$x \in \M_+$ and $x' \in \M_+'$, and $\Ad S(\1\otimes y)$ commutes with $\1\otimes y'$ where
$y \in \M_-$ and $y \in \M_-'$.
\end{itemize}
\end{proposition}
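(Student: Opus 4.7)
The plan is to follow the blueprint of \cite[Section 3]{Tanimoto12-2} and \cite[Proposition 2.2]{BT12}, replacing the chiral Möbius covariant factors used there by general one-dimensional Borchers triples; the overall architecture of the argument is unchanged, so I would spell out only the parts where the weaker hypotheses require attention.

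For the forward direction, starting from an asymptotically complete Borchers triple $(\M,T,\Q)$, I would invoke asymptotic completeness for waves to get an isometric identification $\H \cong \H_+\otimes\H_-$ via either the incoming or the outgoing wave operator, under which the restriction of $T$ to the two lightrays factorizes as $T_+\otimes T_-$ and $\Q$ factorizes as $\Q_+\otimes\Q_-$. The asymptotic algebras localized in the positive lightray furnish one-dimensional Borchers triples $(\M_\pm, T_\pm, \Q_\pm)$ via the general theory of \cite{Tanimoto12-2}. The $S$-matrix is then the unitary intertwining the in- and out-identifications; translation invariance gives $[S, T_+\otimes T_-]=0$, and the fact that a one-particle wave is asymptotically equal to itself gives that $S$ acts as the identity on $\H_+\otimes\Q_-$ and on $\Q_+\otimes\H_-$. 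The two commutation relations with $\M_\pm$ and $\M_\pm'$ are precisely the translation of the geometric wedge-locality of $\M$ into the asymptotic tensor product picture.

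For the reverse direction, I would form $\H:=\H_+\otimes\H_-$, $T:=T_+\otimes T_-$, $\Q:=\Q_+\otimes\Q_-$, and define
\[
\M := (\M_+'\otimes\1) \vee \Ad S(\1\otimes \M_-).
\]
The spectrum condition holds because both $T_+$ and $T_-$ are positive. Covariance $\Ad T(a)\M\subset\M$ for $a\in W_\R$ (equivalently $t_+\le 0$, $t_-\ge 0$ in lightray coordinates) follows from the one-sided covariance in each factor together with $[S,T_+\otimes T_-]=0$. The two hypothesized commutation relations ensure that the two generating algebras of $\M$ each commute with the two generating algebras of the candidate commutant $(\M_+\otimes\1)\vee \Ad S(\1\otimes \M_-')$, which in conjunction with the tensor product Tomita--Takesaki theory gives that $\M'$ really has that form and hence that $\Q$ is both cyclic and separating for $\M$. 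Asymptotic completeness of the reconstructed triple and the identity of the reconstructed $S$-matrix with the given $S$ are then obtained by running the forward construction and checking that the outgoing wave operator is the identity while the incoming one is $S^*$, as in \cite[Section 3]{Tanimoto12-2}.

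The main obstacle, and the only place where one has to be careful beyond bookkeeping, is verifying that $\Q$ is cyclic and separating for $\M$ after the twist by $S$. Separating is straightforward from the description of $\M'$ via the commutation assumptions. Cyclicity requires showing that vectors of the form $(x'\otimes\1)\Ad S(\1\otimes y)\Q$ with $x'\in\M_+'$, $y\in\M_-$ span a dense subspace; here one uses that $S$ fixes $\Q_+\otimes\H_-$ pointwise to reduce the leftmost action of $\Ad S(\1\otimes y)$ on $\Q$ to $(\1\otimes y)\Q$, so that the span equals $\M_+'\Q_+\otimes\M_-\Q_-$, which is dense by cyclicity of $\Q_\pm$ for $\M_\pm$ (noting that $\M_+'$ is also standard with respect to $\Q_+$). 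The verification that the two constructions are mutually inverse is then a direct check of compositions, essentially identical to the corresponding argument in \cite{BT12}.
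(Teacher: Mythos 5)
Your overall route coincides with the paper's, which itself only records the correspondence formula $\M = \M_+'\otimes \1 \vee \Ad S(\1\otimes \M_-)$ and defers the verification to \cite[Section 3]{Tanimoto12-2} and \cite[Proposition 2.2]{BT12}, noting that strict locality enters those proofs only through the M\"obius covariance of the chiral components. So at the level of strategy there is nothing to object to, and your treatment of covariance, the spectrum condition, and the cyclicity of $\Q$ (using that $S$, hence also $S^*=S^{-1}$, fixes $\Q_+\otimes\H_-$ and $\H_+\otimes\Q_-$ pointwise) is correct.

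There is, however, one step that fails as written: your candidate commutant $(\M_+\otimes\1)\vee \Ad S(\1\otimes \M_-')$ does \emph{not} commute with $\M$ under the stated hypotheses. The two cross-commutators are exactly the ones the hypotheses do not cover: hypothesis (b) only places $\Ad S(\1\otimes \M_-)$ inside $\B(\H_+)\otimes\M_-$, which says nothing about its commutation with $\M_+\otimes\1$; and neither hypothesis controls $[\M_+'\otimes\1, \Ad S(\1\otimes \M_-')]$. The algebra you want is instead $\N := \Ad S(\M_+\otimes\1)\vee(\1\otimes \M_-')$: then $[\M_+'\otimes\1,\Ad S(\M_+\otimes\1)]=0$ is precisely hypothesis (a), $[\Ad S(\1\otimes\M_-),\1\otimes\M_-']=0$ is precisely hypothesis (b), and the remaining two commutators are trivial (one because $\Ad S$ is an automorphism applied to commuting tensor factors, the other because the factors act on different legs). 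Since $\Q$ is cyclic for $\N$ by the same fixed-point argument you use for $\M$ (noting $\Q_-$ is cyclic for $\M_-'$), the inclusion $\N\subset\M'$ already gives that $\Q$ is separating for $\M$; in particular you do not need, and should not claim without proof, the duality statement that $\M'$ \emph{equals} the candidate algebra. With this correction the argument closes and matches the one the paper imports from the cited references.
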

The correspondence is given by
\begin{itemize}
\item $\M := \M_+'\otimes \1 \vee \Ad S(\1\otimes \M_-)$,
\item $T(t_+,t_-) := T_+(t_+)\otimes T_-(t_-)$,
\item $\Q := \Q_+\otimes \Q_-$.
\end{itemize}
Indeed, the properties of net (strict locality) are used only to show
the M\"obius covariance of the one-dimensional components, which we do not
claim here and the rest of the proofs works.

Our program to construct massless Borchers triples is now split into two steps:
first prepare a pair of one-dimensional Borchers triples, then
find an appropriate operator $S$ to make them interact.
We carry out this program in Section \ref{massless}. We do not investigate
strict locality in the present paper.

\section{Massless models with nontrivial scattering}\label{massless}
Here we construct massless Borchers triples following the program described
in Section \ref{scattering}. As an input we take so-called left-left, right-right
and left-right scattering matrices (c.f.\! \cite{Bernard92}).

Usually the form factor bootstrap program is carried out for massive models.
Massless limit makes worse the behavior of the form factors in the momentum
space and even the fundamental ``local commutativity theorem'' \cite{Smirnov92}
fails when applied to concrete cases.
As for the operator algebraic approach, the modular nuclearity has
been proved through a careful estimate \cite{Lechner08},
which will no longer be valid for the massless case.

Yet in operator-algebraic approach, half of the program can be carried out: one can
construct certain operators to be interpreted as observables in a wedge.
This has been done in \cite{LS12} for the massive case with multiple particle spectrum
and in \cite{Tanimoto12-2, BT12} for the massless case with simple spectrum.
In this Section we exhibit a massless construction which includes several kinds of particles.
It is also interesting to observe at which point the Yang-Baxter equation enters.

\subsection{Scattering matrices and operators}
As in massless bootstrap program, we need two kinds of input: left-left and right-right
scattering and left-right scattering.
While the former governs the asymptotic behavior of the model,
the latter is directly related to the S-matrix.

\subsubsection{Scattering matrices for chiral parts}
One-dimensional Borchers triples can be obtained 
by second quantization of so-called standard pairs,
similarly to the algebraic construction of massive models 
with factorizing $S$-matrices \cite{LS12} and the free field construction in \cite{BGL02}. 
This will be done on a $R$-symmetric Fock space (defined in Section \ref{sec:2ndQBP}),
where $R$ is a certain operator.
We give an abstract definition for suitable operators $R$ and characterize them in terms of 
usual scattering matrices. They are called left-left or right-right scattering operator
in physics literature from a formal similarity to S-matrix,
but the physical meaning of $R$ remains unclear, c.f.\! \cite{BLM11}.

Let $\H$ be a Hilbert space.
For operator $A \in \B(\H\otimes\H)$ we denote by 
$A_{ij}$ the operator on $\B(\H^{\otimes n})$ ($n\geq i,j$) which 
acts by $A$ on the product of the $i$-th and the $j$-th tensor factors.
For example, if $A = A_1\otimes A_2$, then
$A_{ij} = \1\otimes\cdots\otimes \underset{i\mbox{-th}}{A_1}\otimes\cdots\otimes\underset{j\mbox{-th}}{A_2}\otimes\cdots\1$.

A closed, real linear subspace $H\subset\H$ with 
$H\cap\ima H=\{0\}$ and $\overline{H+\ima H}=\H$ is called {\bf standard}.
We denote by 
$H'=\{x\in\H: \Im \langle H,x\rangle=0\}$, where $\Im$ is the imaginary
part, the symplectic complement of a closed real 
linear space,
which is standard if and only if $H$ is standard. 
With a standard subspace $H$ we can associate modular objects, 
i.e.~an antiunitary involution $J_{H}$ and 
a unitary one-parameter group $\{\Delta_{H}^{\ima t}\}_{t\in\RR}$ by the 
polar decomposition $S_{H}=J_{H}\Delta_{H}^{1/2}$ of the 
densely defined, closed, antilinear involution $S_{H}:f+\ima g\mapsto f-\ima g$ 
for $f,g\in H$. A (simpler) one-particle version of Tomita-Takesaki
theory says that $\Delta_{H}^{\ima t} H=H$ and $J_{H}H=H'$ hold \cite{Longo08}.

Let $H$ be a standard subspace of a Hilbert space 
$\H$ and let us assume that there exists a one-parameter group 
$T(t)=\ee^{\ima t P}$ on $\H$ such that
\begin{itemize}
    \item $T(t) H \subset H$ for all $t\geq 0$,
    \item $P$ is positive and $P$ has no point spectrum in $0$.
\end{itemize}
Then we call the pair $(H,T)$ a ({non-degenerate})
\textbf{standard pair}. 
A standard pair is called 
\textbf{irreducible} if it cannot be written as a non-trivial direct sum 
of two standard pairs.

There exists a unique (up to unitary equivalence) irreducible standard pair
$(H_0,T_0)$ whose ``Schr\"odinger representation'' is given as follows.
We realize  $(H_0,T_0)$ on $\H_0=L^2(\RR)$ 
and $T_0(t)=\ee^{\ima t P_0}$, where $Q_0=\ln P_0$ with 
$(\ee^{\ima t Q_0}f)(q) =  \ee^{\ima t q}f(q)$.
A function $f \in L^2(\RR)$ is in $H_0$ if and only if $f$ admits an analytic continuation
on the strip $\RR + \ima(0,\pi)$, such that
for every $a\in(0,\pi)$ it is: $f(\slot + \ima a)\in L^2(\RR)$ with 
boundary value $f(q+\ima \pi)=(J_{H_0}f)(q):=\overline{f(q)}$.
One defines $(\Delta_{H_0}^{-\ima s } f)(q)=f(q+2\pi s)$ and it can be easily checked 
that $(J_{H_0},\Delta_{H_0}^{\ima t})$ are indeed the modular objects for $H_0$
\cite{Longo08}.

For a standard pair $(H,T)$ we give an abstract definition of an operator $R$,
which encodes the two-particle scattering process. 
\begin{definition} 
    \label{def:TwoParticleScatteringOperator}
    Let $(H,T)$ be a standard pair in $\H$.
	Let $\S(H,T)$ be the set of all 
    unitary operators $R\in \B(\H\otimes\H)$
    such that the following properties hold:
    \begin{enumerate}[ {(}1{)} ]
        \item Reflection property: $R_{21}=R^\ast$.
        \item Yang-Baxter equation: $R_{12}R_{13}R_{23} = R_{23}R_{13}R_{12}$ on $\H^{\otimes 3}$.
        \item Translation covariance: $[R,T(t_1)\otimes T(t_2)] =0$.
        \item Dilation+TCP covariance:
            $[\Delta_H^{\ima t}\otimes \Delta_H^{\ima t}, R] = 0$
		and $R(J_H \otimes J_H) = (J_H\otimes J_H)R^\ast$.
    	\item Half-line locality:
  		$\langle g'\otimes \eta, R(f\otimes \xi)\rangle = \langle f\otimes \eta, R^\ast (g'\otimes \xi)
	\rangle$ for 
        all $f\in H, g\in H'$ and $\xi,\eta\in\H$ or equivalently:
            the operator $A^R_{f,g'}$ 
            defined by 
            \begin{align*}
                \xi\mapsto \sum_k \langle g' \otimes e_k , R(f\otimes \xi)\rangle \cdot e_k
            \end{align*} with $\{e_k\}$ an orthonormal basis of $\H$,
            is self-adjoint for all $f\in H, g'\in H'$.
    \end{enumerate}
\end{definition}
We will see that the locality assumption follows from the requirement that, 
on two-particle level, certain generators of the wedge-algebra fulfill half-line
locality in Lemma \ref{lm:FieldLocality}. 

We remember that each (non-degenerate) standard pair $(H,T)$ is a direct sum of the 
unique irreducible standard pair $(H_0,T_0)$ \cite{LW11}. A standard pair 
with multiplicity $n$ can be given as follows.
We can choose a Hilbert space $\K$ with $\dim\K=n$ and 
$\H=\H_0\otimes \K\cong L^2(\RR,\CC^n)$  and $T(t)=\ee^{\ima t P}:=T_0(t)\otimes \1$,
$\Delta_H^{\ima t}=\Delta_{H_0}^{\ima t}\otimes \1$.
To make contact with the physics literature, we choose some orthonormal basis
indexed by $\{\a\}$
of $\K$ and an involution $\alpha\mapsto \bar \alpha$ on the index set and define the antiunitary involution $J_H$ to be
\begin{align}
	(J_Hf)^\alpha(q)= \overline{ f^{\bar\alpha}(q)}
	~.
\end{align}
Then a function $f=(f^\alpha) \in L^2(\RR,\CC^n)$ is in $H$ if and only if $f$ admits an analytic continuation
on the strip $\RR + \ima(0,\pi)$, such that
for every $a\in(0,\pi)$ it is: $f^\alpha(\slot + \ima a)\in L^2(\RR)$ with 
boundary value $f_\alpha(q+\ima \pi)=\overline{f_{\bar\alpha}(q)}$. Every standard pair 
with finite multiplicity is of this form.

Due to unitarity, translation covariance and the fact that $R$ commutes with $\Delta_H^{\ima t}\otimes 
\Delta_H^{\ima t}$, a two-particle scattering operator is given by the spectral calculus by $\und{R}(Q_1-Q_2)$,
where $Q_1=Q\otimes \1$, $Q_2=\1\otimes Q$, $Q = \ln P$, $P$ is the generator of $T$ and
$q\mapsto \und{R}(q)$ is a operator-valued function from $\RR$ to $\B(\K\otimes\K)$ which is 
unitary almost everywhere. By fixing a basis on $\K$, we can represent $\und{R}(q)$ as a matrix
$\und{R}^{\a\b}_{\g\d}(q)$ (almost everywhere). In the above representation this reads
\begin{align}
	\label{eq:ansatz1}
	(R\xi)^{\alpha\beta}(q_1,q_2) 
	=\und{R}^{\alpha\beta}_{\gamma\delta}(q_1-q_2) \xi^{\gamma\delta}(q_1,q_2)
	=:\und{S}^{\beta\alpha}_{\gamma\delta}(q_1-q_2) \xi^{\gamma\delta}(q_1,q_2)
\end{align}
where it is sometimes common to use the matrix valued function $q\mapsto \und{S}(q)$ with interchanged indices, c.f.~\cite{LS12}.

{\it Note}. In the following, symbols with underline denote matrix-valued functions
or equivalently functions with operator-value on a finite dimensional Hilbert space.

Let us define the operator-valued function $\RR\ni q\mapsto R^{\delta}_{\beta}(q)\in\B(\H)$
by 
\begin{align}
	\label{eq:R_q}
	 R^{\b}_{\d}(s)&:=\und{R}^{\slot\b}_{\slot\d}(Q-s), \text{~i.e.}\quad
	(R^\b_\d(s)\xi)^\a(q) = \sum_{\g} \und{R}^{\a\b}_{\g\d}(q-s)\xi^\g(q)
	\,\text,
\end{align}
where $\xi \in \H$.
The partial disintegration of $R$ reads
\[
R = 
  \sum_{\beta, \delta}\int R^\beta_\delta(q)\otimes \dd E(q)^\beta_\delta
	\,\text,
\]
where $\dd E^\b_\d = \dd E_0 \otimes E^\b_\d$ and $\dd E_0$ is the spectral measure
of $Q_0=\ln P_0$ and $E^\b_\d$ is the operator corresponding on the fixed basis $\{\xi_\a\}$
to the matrix which has the value $1$ in $(\b,\d)$-component and $0$ in the others.

Before giving a characterization of the operators $R\in\S(H,T)$ 
we prove the following Lemma, which will reduce the argument of half-line locality 
to two-particle processes.
\begin{lemma}
    \label{lem:WedgeLocality}
    Let $(H,T)$ be a standard pair, $R\in\S(H,T)$ 
	and 
    $\tilde R=R_{1,n+1}R_{1,n}\cdots R_{1,2}$ on $\H^{\otimes n+1}$. Then the operator
    $A^{\tilde R}_{f,g'}\in\B(\H^{\otimes n})$ given by 
    \begin{align*}
        \xi\mapsto \sum_{\tilde k} \< g'\otimes e_{\tilde k}, \tilde R(f\otimes \xi)\> \cdot e_{\tilde k}
    \end{align*}
    is self-adjoint for all $f\in H, g\in H'$,
    where $\{e_{\tilde k}\}$ is a basis on $\H^{\otimes n}$.
\end{lemma}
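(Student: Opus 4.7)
The assertion is equivalent to the bilinear identity
\[
\langle g' \otimes \eta,\, \tilde R(f \otimes \xi)\rangle = \langle f \otimes \eta,\, \tilde R^*(g' \otimes \xi)\rangle
\]
for all $f \in H$, $g \in H'$ and $\xi, \eta \in \H^{\otimes n}$, in perfect analogy with the reformulation of half-line locality for $R$ itself. I would proceed by induction on $n$; the base case $n = 1$ is $\tilde R = R_{1,2} = R$, which is exactly the half-line locality property in Definition \ref{def:TwoParticleScatteringOperator}(5).

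For the inductive step, the naive factorization $\tilde R_n = R_{1,n+1} \cdot (R_{1,n} \cdots R_{1,2})$ does not by itself close the induction: after moving $R_{1,n+1}^*$ onto the bra $g' \otimes \eta$ one obtains a sum whose position-$1$ components are in $\H$ but generically not in $H'$, so neither the two-particle locality nor the inductive hypothesis applies directly. The missing ingredient is the Yang-Baxter equation (Definition \ref{def:TwoParticleScatteringOperator}(2)).

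My plan is to use $R_{ab}R_{ac}R_{bc}=R_{bc}R_{ac}R_{ab}$ to commute the outermost factor $R_{1,n+1}$ successively past $R_{1,n}, R_{1,n-1}, \ldots$ Each such swap produces a spectator factor $R_{j,n+1}$ (or, by the reflection property $R_{21}=R^*$, its adjoint $R_{n+1,j}$) on positions $\{j,n+1\}$ with $j\neq 1$. Since these spectator factors do not touch position $1$, they commute with the vectors $f$ and $g'$ sitting there, and by unitarity they can be moved onto the bra side and absorbed into modified test vectors $\tilde\xi, \tilde\eta \in \H^{\otimes n}$ without altering the conditions $f\in H$ and $g\in H'$. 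After the reorganization, the outermost factor becomes a single $R_{1,j_\ast}$ acting directly on $f$; to this factor two-particle half-line locality applies, swapping $f$ and $g'$ on positions $\{1,j_\ast\}$, and the remaining product is of the form $\tilde R_{n-1}$ on a relabeled subset of $n-1$ positions, to which the inductive hypothesis applies. The translation and dilation covariance of $R$ (Definition \ref{def:TwoParticleScatteringOperator}(3)--(4)) guarantee that the rearrangement is compatible with the spectral calculus expression \eqref{eq:R_q}.

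The principal difficulty will be the precise bookkeeping of these spectator $R$-factors and of the modified test vectors on both sides of the identity. A cleaner alternative is to prove the stronger inductive statement that every ordered product of the form $R_{1,i_1}\cdots R_{1,i_k}$ with distinct $i_\ell \in \{2,\ldots,n+1\}$ satisfies the same half-line locality: this family is manifestly closed under appending one factor, and the Yang-Baxter equation ensures that any reordering ambiguity is absorbed into spectator $R$-factors among positions $\neq 1$, which can be shuffled onto $\xi$ and $\eta$. The induction then reduces to exactly one application of the two-particle half-line locality plus one Yang-Baxter reordering per step.
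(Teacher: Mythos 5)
There is a genuine gap in your inductive step, and the Yang--Baxter equation does not repair it. The two-particle half-line locality of Definition \ref{def:TwoParticleScatteringOperator}(5) is a statement about matrix elements between \emph{simple tensors} $f\otimes\xi$ and $g'\otimes\eta$ whose first-slot entries lie in $H$ and $H'$. In $\tilde R=R_{1,n+1}\cdots R_{1,2}$ every factor acts on slot $1$, so as soon as you apply (or move to the bra side) any single factor $R_{1,j}$, the slot-$1$ component of the resulting vector is entangled with slot $j$ and is no longer of the form (vector in $H$ or $H'$)$\,\otimes\,$(rest). Your Yang--Baxter reordering only produces spectator factors on slots $\geq 2$, which---as you correctly note---can be absorbed into $\xi$ and $\eta$; but it does nothing to restore the simple-tensor structure in slot $1$. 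Consequently neither the two-particle locality applied to the distinguished factor $R_{1,j_\ast}$ nor the inductive hypothesis applied to the remaining $n-1$ factors is licensed; this is exactly the obstruction you yourself identified for the naive factorization, merely relocated. The ``stronger inductive statement'' at the end inherits the same problem, since its step still requires peeling off one factor and applying two-particle locality to it in isolation.

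The mechanism that actually makes the factors compose---and which you mention only in passing as a compatibility check---is the partial disintegration (\ref{eq:R_q}): by translation and dilation covariance each $R_{1,j}$ disintegrates over the joint spectral measure of slots $2,\dots,n+1$ into operators $R^{\beta}_{\delta}(q)$ acting on slot $1$ alone, and by Lemma \ref{lem:LW} the half-line locality of $R$ is \emph{equivalent} to the operator relation $R^{\beta}_{\delta}(q)S_H\subset S_H R^{\delta}_{\beta}(q)$ for almost all $q$. This relation is manifestly stable under composition, $R^{\beta_1}_{\delta_1}(q_1)\cdots R^{\beta_n}_{\delta_n}(q_n)S_H\subset S_H R^{\delta_1}_{\beta_1}(q_1)\cdots R^{\delta_n}_{\beta_n}(q_n)$, and translating back via Lemma \ref{lem:LW} yields the self-adjointness of $A^{\tilde R}_{f,g'}$ directly, for the product taken in any order. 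In particular neither induction nor the Yang--Baxter equation enters the paper's argument at all; the Yang--Baxter equation is needed elsewhere (to make the $R$-symmetrization well defined), not here.
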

\begin{proof}
Because every standard pair is just a direct sum of the irreducible
standard pairs, we may assume $\bar\alpha=\alpha$ in the above decomposition. 
We can write $R$ as 
\begin{align*}
        R&=\sum_{\beta,\delta}\int R^\beta_{\delta}(q)\otimes \dd E(q)^\beta_\delta~, \\ 
        R^\ast&=\sum_{\beta,\delta}\int (R^\b_{\d}(q))^\ast\otimes \dd E(q)^\d_\b 
        =\sum_{\beta,\delta}\int (R^\delta_{\beta}(q))^\ast\otimes \dd E(q)^\beta_\delta
	\, \text.
\end{align*}
Then by the assumption that $R\in \S(H,T)$, for all $f\in H, g'\in H'$ we have 
\begin{align*}
        \sum_{\beta,\delta}\int \<g', R^\beta_{\delta}(q)f\> \dd E(q)^\beta_\delta
        &=
        A^R_{f,g'} 
        =
        (A^R_{f,g'})^\ast  =\sum_{\beta,\delta}\int \<R^\d_{\b}(q) f,g' \>
        \dd E(q)^\beta_\delta
	\,\text,
\end{align*}
which is equivalent to $R^\beta_{\delta}(q)S_H\subset S_H R^\delta_{\beta}(q)$ 
for almost all $q$ by Lemma \ref{lem:LW}.
But this implies that also
$R^{\beta_1}_{\delta_1}(q_1)\cdots R^{\beta_n}_{\delta_n}(q_n)S_H\subset
S_H R^{\delta_1}_{\beta_1}(q_1)\cdots R^{\delta_n}_{\beta_n}(q_n)$
holds, hence using again Lemma \ref{lem:LW} the equality of the following two operators follows
    \begin{align*}
        A^{\tilde R}_{f,g'} &= \sum_{\b_1,\ldots,\b_n,\d_1,\ldots,\d_n}\int 
        \<g', R^{\beta_1}_{\delta_1}(q_1)\cdots R^{\beta_n}_{\delta_n}(q_n) f\>
        \dd  E(q_n)^{\beta_n}_{\delta_n}\otimes \cdots \otimes \dd E(q_1)^{\beta_1}_{\delta_1}
        &\text{and}\\
        (A^{\tilde R}_{f,g'})^\ast &= \sum_{\b_1,\ldots,\b_n,\d_1,\ldots,\d_n}\int 
        \overline{\<g', R^{\d_1}_{\b_1}(q_1)\cdots R^{\d_n}_{\b_n}(q_n) f\>}
        \dd  E(q_n)^{\beta_n}_{\delta_n}\otimes \cdots \otimes \dd  E(q_1)^{\beta_1}_{\delta_1}
        \,\text,
    \end{align*}
 which proves the claim.
\end{proof}

We characterize the two-particle scattering operators $R$ in terms of matrix-valued function
and show that they indeed come from two-particle scattering matrices (c.f.\! \cite{LS12}). 

\begin{proposition} 
	\label{prop:Rmatrix}
	Let $(H,T)$ be a standard pair with finite multiplicity.
	Then $R\in\S(H,T)$ if and only if $R$ comes from a matrix valued functions $(\und{S}^{\alpha\beta}_{\gamma\delta})$ as in  (\ref{eq:ansatz1}),
        fulfilling the following relations:
\begin{enumerate}[ {(}1{)} ]
	\item \emph{Unitarity:} $\und{S}(q)$ is an unitary matrix
	for almost all $q\in\RR$.
	\item \emph{Hermitian analyticity:} $\und{S}(-q)=\und{S}(q)^\ast$	 
	for almost all $q\in\RR$.
	\item \emph{Yang-Baxter equation:}
	\begin{align*}
		\und{S}(q)_{12}\und{S}(q+q')_{23}\und{S}(q')_{12}&=
		\und{S}(q')_{23}\und{S}(q+q')_{12}\und{S}(q)_{23}\,\text.
	\end{align*}
	\item \emph{TCP:} $\und{S}^{\alpha\beta}_{\gamma\delta}(q)=\und{S}^{\bar\delta\bar\gamma}_
		{\bar\beta\bar\alpha}(q)$ for almost all $q\in\RR$.
	\item \emph{Analyticity:} $q\mapsto \und{S}(q)$ is boundary value of a bounded analytic function
	on $\RR+\ima(0,\pi)$.
	\item \emph{Crossing symmetry:} 
	$\und{S}^{\alpha\beta}_{\gamma\delta}(\ima \pi-q)= \und{S}^{\bar\gamma\alpha}_{\delta\bar\beta}(q)$.
\end{enumerate}
\end{proposition}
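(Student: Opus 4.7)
The plan is to translate each defining property of $\S(H,T)$ into a corresponding property of the matrix symbol $\und{S}$, and conversely. Translation covariance together with the dilation commutation $[\Delta_H^{\ima t}\otimes\Delta_H^{\ima t},R]=0$ (part of condition (4)) already forces $R$ to act by multiplication with an operator-valued function of $Q_1-Q_2$; after fixing a basis on $\K$ this is precisely (\ref{eq:ansatz1}) and defines $\und{S}$. Unitarity of $R$ on $\H\otimes\H$ is then equivalent to pointwise unitarity of $\und{S}(q)$ almost everywhere, giving property (1). The reflection property $R_{21}=R^\ast$ becomes, by a direct index computation from (\ref{eq:ansatz1}), the Hermitian analyticity $\und{S}(-q)=\und{S}(q)^\ast$, i.e.\ (2). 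The operator Yang--Baxter equation on $\H^{\otimes 3}$ transfers verbatim to the matrix Yang--Baxter identity (3) by evaluation on joint spectral subspaces of $Q_1,Q_2,Q_3$. Finally $R(J_H\otimes J_H)=(J_H\otimes J_H)R^\ast$ combined with the explicit formula $(J_Hf)^\a(q)=\overline{f^{\bar\a}(q)}$ yields the TCP relation (4) on $\und{S}$.

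The crux is to extract analyticity (5) and crossing symmetry (6) from half-line locality. Here I would exploit the fact that the standard subspace $H$ consists of functions $f$ admitting analytic continuation to the strip $\RR+\ima(0,\pi)$ with boundary value $f^\a(q+\ima\pi)=\overline{f^{\bar\a}(q)}$, and dually that $g'\in H'$ admits analytic continuation to $\RR-\ima(0,\pi)$ with the analogous boundary relation. Writing the locality identity $\<g'\otimes\eta,R(f\otimes\xi)\>=\<f\otimes\eta,R^\ast(g'\otimes\xi)\>$ as an integral in the spectral variables $q_1,q_2$ against the kernel $\und{S}(q_1-q_2)$, and testing with $f,g'$ in a dense class of analytic vectors (for instance with Fourier support in a bounded subset of the half-line), one may shift the $q_1$-contour by $\ima\pi$. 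The boundary relations for $H$ and $H'$ combined with the Hermitian analyticity and TCP already established then force $\und{S}$ to admit a bounded analytic continuation to the strip whose boundary value at $\ima\pi-q$ satisfies the crossing relation in (6). Boundedness of the continuation follows from unitarity on $\RR$ and the three-lines/Phragm\'en--Lindel\"of principle.

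For the converse direction I would start from $\und{S}$ satisfying (1)--(6), define $R$ by (\ref{eq:ansatz1}), and reverse each of the algebraic translations to obtain conditions (1)--(4) of $\S(H,T)$. Half-line locality is then established by the contour shift in the reverse direction: using Lemma~\ref{lem:LW} (in the manner already exploited in the proof of Lemma~\ref{lem:WedgeLocality}) one reduces the self-adjointness of $A^R_{f,g'}$ to the intertwining $R^\b_\d(q)S_H\subset S_H R^\d_\b(q)$, which follows by analytically continuing $f\in H$ into the strip, applying the crossing relation, and pairing against $g'\in H'$ on the shifted contour. The principal technical obstacle I anticipate is the rigorous justification of these contour shifts when $\und{S}$ is a priori only $L^\infty$ on $\RR$: one must approximate by a dense family of sufficiently regular test vectors in $H$ and $H'$ and pass to the limit so that the analyticity and crossing identities hold almost everywhere rather than just in a distributional sense.
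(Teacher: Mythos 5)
Your translation of the algebraic conditions is exactly the paper's: covariance under $T\otimes T$ and $\Delta_H^{\ima t}\otimes\Delta_H^{\ima t}$ plus unitarity yields the ansatz (\ref{eq:ansatz1}), and reflection, the operator Yang--Baxter equation and $R(J_H\otimes J_H)=(J_H\otimes J_H)R^\ast$ pass pointwise to Hermitian analyticity, the matrix Yang--Baxter identity and TCP. The analytic half is also the same in spirit, but the one step you flag as the ``principal technical obstacle'' is not merely technical: as written, your argument for (5)--(6) is circular. You propose to \emph{shift the $q_1$-contour by $\ima\pi$} in the integral $\int \overline{g'(q_1)}\,\und{S}(q_1-q_2)f(q_1)\cdots\dd q_1$, but at that stage $\und{S}$ is only known to be a bounded measurable function on $\RR$, so there is no contour to shift --- analyticity of $\und{S}$ is the \emph{conclusion}, and approximating $f,g'$ by more regular vectors cannot create it. The correct logical direction is the converse one: the coincidence of the two real-line pairings for all $f\in H$ and $g'\in H'$ (two total families of analytic vectors for $\Delta_H$) \emph{forces} the analytic continuation. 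This is precisely the content of Lemma \ref{lem:LW} ((1)$\Leftrightarrow$(4)), whose proof runs through entire vectors of exponential growth for $\Delta_H$ and the three-line theorem.

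The paper implements this by first disintegrating $A^R_{f,g'}$ along the spectrum of $Q$ in the second tensor factor, $(A^R_{f,g'}\xi)^\delta(q)=\sum_\beta\langle g',R^\delta_\beta(q)f\rangle\xi^\beta(q)$, so that self-adjointness for all $f\in H$, $g'\in H'$ becomes $\langle g',R^\delta_\beta(q)f\rangle=\langle R^\beta_\delta(q)f,g'\rangle$ for a.e.\ $q$ (note the index transposition coming from the adjoint of $E^\beta_\delta$, which is what produces the index exchange in the crossing relation). Lemma \ref{lem:LW} applied fiberwise then gives the bounded weakly continuous extension of $s\mapsto\Delta_H^{-\ima s}R^\delta_\beta(q)\Delta_H^{\ima s}$ to $\RR+\ima[0,1/2]$ with boundary value $J_HR^\beta_\delta(q)J_H$; since $\Delta_H^{-\ima s}$ acts by translation in $q$, this is exactly analyticity of $\und{R}$ on $\RR+\ima(0,\pi)$ with the stated boundary condition, and combined with Hermitian analyticity it yields crossing. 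Your converse direction is fine and is in fact the same chain of equivalences read backwards. So the repair is straightforward --- replace the contour shift by the fiberwise application of Lemma \ref{lem:LW}, as you already do in your treatment of the converse --- but without it the forward direction does not go through.
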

\begin{proof}
As discussed above the ansatz in equation (\ref{eq:ansatz1}) is equivalent to unitarity,
translation covariance and the fact that $R$ commutes with $\Delta_H^{\ima t}\otimes 
\Delta_H^{\ima t}$. It is straightforward to check that hermitian analyticity of $\und{S}(\slot)$ is equivalent 
to the reflection property of $R$; the property $R(J_H\otimes J_H)=(J_H\otimes J_H) R^\ast$ is equivalent to TCP, and 
Yang--Baxter equation of $R$ with the one for the matrices $\und{S}(q)$.

Using $R^{\delta}_\beta(s)$ defined in Equation (\ref{eq:R_q}) we write $A^R_{f,g'}$ as 
\begin{align*}
	(A^R_{f,g'}\xi)^\delta(q)&= \sum_\beta\langle g', R^{\delta}_{\beta}(q) f\rangle \xi^\beta(q) \,\text.
\end{align*}

It is self-adjoint for all $f\in H, g'\in H'$ if and only if
$\< g', R^\delta_\beta(q) f\>=
	\< R^\beta_\delta(q) f,g'\>$ for all $f\in H$ and $g'\in H'$, which is 
by Lemma \ref{lem:LW} equivalent to that
$\Delta_H^{-\ima s}R^{\delta}_\beta(q)\Delta_H^{\ima s}$ extends to
a bounded weakly continuous map on the strip $\RR+\ima[0,1/2]$
with boundary value $J_HR_{\d}^\b(q)J_H$ for $s=\ima/2$.
Like in \cite{LW11} this is equivalent to $\und{R}(\slot-q)$ being a bounded analytic matrix
valued function on 
$\RR+\ima(0,\pi)$ with 
boundary values 
$\und{R}^{\alpha\beta}_{\gamma\delta}(q+\ima\pi)=\overline{\und{R}^{\bar\alpha \delta}_{\bar\gamma\beta}(q)}$
almost everywhere,
which is by $\und{S}(q)^\ast=\und{S}(-q)$ equivalent to
$\und{S}^{\alpha\beta}_{\gamma\delta}(\ima\pi - q)=\und{S}^{\bar\gamma\alpha}_{\delta\bar\beta}(q)$.
\end{proof}

\subsubsection{Two-particle left-right scattering matrices}
In this section we give an operator definition for two-particle scattering functions 
which describe the scattering behavior of a left and right moving particle 
in the sense of Fock space excitations.  

Bernard remarked that, for the left-right scattering, two of the conditions
can be combined and thus weakened \cite{Bernard92}. The following is our precise
rendition in terms of standard subspaces.
\begin{definition}
	\label{def:SLRabstract}
Given two standard pairs $(H_\pm,T_\pm)$ on $\H_\pm$, respectively, and operators
 $R^\pm\in\S(H_\pm,T_\pm)$,
we denote by $\S(R^+,R^-)\equiv \S(R^+,H_+,T_+;R^-,H_-,T_-)$ the set of 
all $S\in \U(\H_+\otimes\H_-)$ fulfilling
\begin{enumerate}[ {(}1{)} ]
    \item \emph{Boost covariance:} $[S, \Delta_{H_+}^{\ima t} \otimes \Delta_{H_-}^{-\ima t}] = 0$. 
    \item \emph{Translation covariance:} $[S, T_+(t_+)\otimes T_-(t_-)] = 0$ 
	for all $t_+,t_-\in\RR$.
    \item \emph{Left mixed Yang-Baxter equation:} $R^+_{12}S_{13}S_{23} = S_{23}S_{13} R^+_{12}$ on $\H_+\otimes \H_+\otimes \H_-$.
     \item \emph{Right mixed Yang-Baxter equation:} $R^-_{23}S_{12}S_{13} = S_{13}S_{12}R^-_{23}$ on $\H_+\otimes\H_-\otimes \H_-$.
     \item \emph{Left locality:} $\langle g'\otimes \eta, S(f\otimes \xi)\rangle = 
 	\langle f\otimes \eta, S^\ast (g'\otimes \xi)\rangle$ for 
         all $f\in H_+$, $g'\in H_+'$ and $\xi,\eta\in\H_-$.
     \item \emph{Right locality:} $\langle \eta\otimes g', S(\xi\otimes f)\rangle = 
 	\langle \eta\otimes f, S^\ast (\xi\otimes g')\rangle$ for 
         all $f\in H_-$, $g'\in H_-'$ and $\xi,\eta\in\H_+$.
\end{enumerate}
\end{definition}

Using the physicists' notation, we will define the operator
\[
\xi\mapsto \< g'|_1 S(f\otimes \xi) \equiv
\sum_k \langle g'\otimes e^-_k,S(f\otimes \xi)\rangle\cdot e^-_k
\]
on $\H_-$, where $\{e_k\}$ is an orthonormal basis of $\H_-$ 
and analogously for ``bra'' on the second component.
Left/right locality is with this notation equivalent to self-adjointness of 
the operators $A^\pm_{f,J_{H_\pm}g} \in \B(\H_\mp)$ for all $f,g \in H_\pm$, respectively,
where $A^\pm_{f,J_{H_\pm}g}$ is defined by 
$\xi \mapsto \langle J_{H_+} g|_1 S(f\otimes \xi)$ and
$\xi \mapsto \langle J_{H_-} g|_2 S(\xi\otimes f)$, respectively.

We use the same parametrization as before for the standard pairs $(H_\pm,T_\pm)$.
The fact $[S, \Delta_{H_+}^{\ima t} \otimes \Delta_{H_-}^{-\ima t}] = 0$
and $[S, T_+(t_+)\otimes T_-(t_-)]= 0$
enables us to make for $S\in\S(R^+,R^-)$ the ansatz $S=\und{S}(Q_1+Q_2)$,~i.e.~
\begin{align}
	\label{eq:ansatz2}
	(Sf)^{\alpha\beta}(q_1,q_2) 
	=\und{S}^{\alpha\beta}_{\gamma\delta}(q_1+q_2) f^{\gamma\delta}(q_1,q_2)
\end{align}
where by abuse of notation $\und{S}(\slot)=(\und{S}^{\alpha\beta}_{\gamma\delta}(\slot))$ is 
a matrix valued function. 

The operators $S\in\S(R^+,R^-)$ are characterized as follows:
\begin{proposition}\label{prop:Smatrix}
Let $S\in\S(R^+,R^-)$     then $S$ comes from a matrix valued function $q\mapsto \und{S}(q)=(\und{S}^{\alpha\beta}_{\gamma\delta})(q)$ (using the above parametrization) fulfilling
    \begin{enumerate}[ {(}1{)} ]
        \item \emph{Unitarity:} $\und{S}(q)^\ast=\und{S}(q)^{-1}$
		for almost all $q\in\RR$.
        \item \emph{Left mixed Yang--Baxter identity:} For almost all $q,q'\in \RR$ following holds:
		\begin{align*}
			\und{R}^+(q-q')_{12}\und{S}(q)_{13}\und{S}(q')_{23} &= \und{S}(q')_{23}\und{S}(q)_{13} \und{R}^+(q-q')_{12}
			\,,
			\\
			\text{i.e.} \;\; \sum_{\alpha'\beta'\gamma'}{\und{R}^+}^{\alpha\beta}_{\alpha'\beta'}(q-q')
			\und{S}^{\alpha'\gamma}_{\alpha''\gamma'}(q)
			\und{S}^{\beta'\gamma'}_{\beta''\gamma''}(q') 
			&= \sum_{\alpha'\beta'\gamma'} \und{S}^{\beta\gamma}_{\beta'\gamma'}(q')
			\und{S}^{\alpha\gamma'}_{\alpha'\gamma''}(q)
			{\und{R}^+}^{\alpha'\beta'}_{\alpha''\beta''}(q-q') \,\text.
		\end{align*}
        \item \emph{Right mixed Yang--Baxter identity:} For almost all $q,q'\in \RR$ the following holds:
		\begin{align*}
			\und{R}^-(q-q')_{23}\cdot \und{S}(q)_{12}\cdot \und{S}(q')_{13} &=
			\und{S}(q')_{13}\cdot \und{S}(q)_{12}\cdot \und{R}^-(q-q')_{23}
			\,,
			\\
			\text{i.e.} \;\;\sum_{\alpha'\beta'\gamma'}  {\und{R}^-}^{\beta\gamma}_{\beta'\gamma'}(q-q')
			\und{S}^{\alpha\beta'}_{\alpha'\beta''}(q)\und{S}^{\alpha'\gamma'}_{\alpha''\gamma''}(q') &= \sum_{\alpha'\beta'\gamma'}\und{S}^{\alpha\gamma}_{\alpha'\gamma'}(q') 
			\und{S}^{\alpha'\beta}_{\alpha''\beta'}(q){\und{R}^-}^{\beta'\gamma'}_{\beta''\gamma''}(q-q') \,\text.
		\end{align*}
        \item \emph{Analyticity:} $q\mapsto \und{S}(q)$ is boundary value of a bounded analytic function on
		$\RR+\ima (0,\pi)$.
 	\item  \emph{Mixed unitary-crossing relation:} 
 	$\und{S}^{\alpha\beta}_{\gamma\delta} (q+\ima \pi)  = 
             \overline{\und{S}^{\bar\alpha \delta}_{\bar\gamma \beta}(q)}= 
             \overline{\und{S}^{\gamma\bar\beta}_{\alpha\bar\delta}(q)}$ holds.
    \end{enumerate}
\end{proposition}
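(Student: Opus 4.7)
The plan is to follow closely the argument of Proposition \ref{prop:Rmatrix}, adapted to the situation of two standard pairs. First, translation covariance forces $S$ to commute separately with $Q_+\otimes\1$ and $\1\otimes Q_-$, so in the joint spectral representation $\H_+\otimes\H_-\cong L^2(\RR^2,\K_+\otimes\K_-)$ the operator $S$ acts as multiplication by a measurable $\B(\K_+\otimes\K_-)$-valued function $\und{S}(q_1,q_2)$. Because $\Delta_{H_\pm}^{\ima t}$ acts as translation of $q_\pm$ by $\mp 2\pi t$, boost covariance forces $\und{S}(q_1-2\pi t,q_2+2\pi t)=\und{S}(q_1,q_2)$ for all $t$, hence $\und{S}$ depends only on $q_1+q_2$; this is the ansatz (\ref{eq:ansatz2}). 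Property (1) (unitarity) and properties (2), (3) (the two mixed Yang--Baxter identities) of the conclusion are then obtained by inserting this ansatz together with the disintegrations $R^\pm=\und{R}^\pm(Q_1-Q_2)$ provided by Proposition \ref{prop:Rmatrix} into the operator identities (3), (4) of Definition \ref{def:SLRabstract} and comparing fiberwise almost everywhere.

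The substantial work is translating left and right locality into the analyticity (4) and the mixed unitary-crossing relation (5). Mirroring the construction of $R^\b_\d(s)$ in equation (\ref{eq:R_q}), I would define for each pair $(\b,\d)$ an operator $\hat S^\b_\d(s)\in\B(\H_+)$ by $(\hat S^\b_\d(s)f)^\a(q)=\sum_\g \und{S}^{\a\b}_{\g\d}(q+s)f^\g(q)$, so that $S=\sum_{\b,\d}\int \hat S^\b_\d(q)\otimes \dd E_-(q)^\b_\d$. A direct calculation then shows that self-adjointness of $A^+_{f,J_{H_+}g}$ for all $f,g\in H_+$---i.e.\! left locality---is equivalent to $\<g',\hat S^\b_\d(s)f\>=\<\hat S^\d_\b(s)f,g'\>$ for all $f\in H_+$, $g'\in H_+'$ and almost every $s$. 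By Lemma \ref{lem:LW} this is in turn equivalent to $u\mapsto \Delta_{H_+}^{-\ima u}\hat S^\b_\d(s)\Delta_{H_+}^{\ima u}$ admitting a bounded, weakly continuous extension to the strip $\RR+\ima[0,1/2]$ with boundary value $J_{H_+}\hat S^\d_\b(s)J_{H_+}$ at $u=\ima/2$. Since $\Delta_{H_+}^{\ima u}$ implements translation of the $\H_+$-variable by $-2\pi u$ while $J_{H_+}$ conjugates the $\K_+$-indices via $\a\mapsto\bar\a$, this lifts to the assertion that $q\mapsto\und{S}(q)$ extends to a bounded analytic $\B(\K_+\otimes\K_-)$-valued function on $\RR+\ima(0,\pi)$ with $\und{S}^{\a\b}_{\g\d}(q+\ima\pi)=\overline{\und{S}^{\bar\a\d}_{\bar\g\b}(q)}$. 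A parallel argument starting from right locality, disintegrating over the $\H_+$-spectrum and using the $J_{H_-}$-conjugation $\b\mapsto\bar\b$, produces the same analyticity together with the second form $\overline{\und{S}^{\g\bar\b}_{\a\bar\d}(q)}$, so that (5) holds with both of its expressions.

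The main obstacle will be this last step: one must arrange the fiber operators so that the $J_{H_\pm}$-action reproduces the correct conjugation on the $\K$-indices, carefully track the interplay between the argument-shift $q+s$ in $\und{S}$ and the translation by $-2\pi u$ induced by $\Delta_{H_\pm}^{\ima u}$, and verify that the two analytic continuations furnished by left and right locality agree on the same strip, so that the two forms of the crossing identity in (5) are simultaneously obtained. This mutual consistency is precisely Bernard's observation \cite{Bernard92} that the two separate crossings can be packaged into a single mixed unitary-crossing condition, and it is what allows property (5) to be stated as a single relation rather than as a pair of independent analyticity requirements.
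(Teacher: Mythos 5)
Your proposal is correct and follows essentially the same route as the paper: the ansatz $\und{S}(Q_1+Q_2)$ from translation and boost covariance, fiberwise comparison for unitarity and the two mixed Yang--Baxter identities, and the reduction of left/right locality via the fiber operators $\und{S}^{\slot\b}_{\slot\d}(Q_++s)$ (resp.\ $\und{S}^{\g\slot}_{\a\slot}(s+Q_-)$) and Lemma \ref{lem:LW} to strip-analyticity with the two boundary conditions of the mixed unitary-crossing relation. The consistency worry you raise at the end is harmless, since the bounded analytic continuation from the real boundary values is unique, so the two localities simply impose two identities on the same function at $q+\ima\pi$.
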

\begin{proof}
The above ansatz by a matrix-valued function is the most general ansatz
fulfilling $[S, \Delta_{H_+}^{\ima t} \otimes \Delta_{H_-}^{-\ima t}] = 0$
and $[S, T_+(t_+)\otimes T_-(t_-)] = 0$. Then the two notions of unitarity and Yang--Baxter 
identities can be checked to be pairwise equivalent. The proof that left and right locality 
are equivalent to the analyticity and mixed unitary-crossing relation is
completely analogous to the proof of Proposition \ref{prop:Rmatrix}.

 Namely with $W^\delta_\beta(s):=\und{S}^{\slot\delta}_{\slot\beta}(Q_++s)$ 
 left locality is equivalent to 
 $\langle g',W^\delta_\beta(q) f\rangle = \langle W^\beta_\delta(q)f,g'\rangle$ for all $f\in H_+$ and $g'\in H_+'$, 
 which is
 equivalent to $\Delta_{H_+}^{-\ima s}W^\delta_\beta(q)\Delta_{H_+}^{\ima s}$ extending to a bounded weakly
 continuous
 map on the strip $\RR+\ima[0,1/2]$ with boundary value $J_{H_+}W_\delta^\beta(q)J_{H_+}$ for $s=\ima/2$.
 Similarly, with $V^\gamma_\alpha(s):=\und{S}^{\gamma\slot}_{\alpha\slot}(s+Q_-)$ 
 right locality is equivalent to 
 $\langle g',V^\gamma_\alpha(q) f\rangle = \langle V^\alpha_\gamma(q)f,g'\rangle$ for all $f\in H_-$ and $g'\in H_-'$,
 which is equivalent to 
 $\Delta_{H_-}^{-\ima s}V^\gamma_\alpha(q)\Delta_{H_-}^{\ima s}$
 extending to a bounded weakly
 continuous
 map on the strip $\RR+\ima[0,1/2]$ with boundary value $J_{H_-}V_\gamma^\alpha(q)J_{H_-}$ for $s=\ima/2$.
 
 So left and right locality is equivalent to
  $(\und{S}^{\alpha\beta}_{\gamma\delta}(q))$ being a 
 bounded analytic matrix valued function on 
 $\RR+\ima(0,\pi)$ with 
 boundary values 
 $\und{S}^{\alpha\beta}_{\gamma\delta}(q+\ima\pi)=\overline{\und{S}^{\gamma \bar \beta}_{\alpha\bar\delta}(q)}$
 and 
 $\und{S}^{\alpha\beta}_{\gamma\delta}(q+\ima\pi)=\overline{\und{S}^{\bar\alpha \delta}_{\bar\gamma\beta}(q)}$, respectively, almost everywhere.

\end{proof}

\subsubsection{Examples}\label{examples}
One can see that the conditions in our Proposition \ref{prop:Rmatrix} and
\cite[Definition 2.1]{LS12} are essentially the same: the mass parameters and 
the global gauge action can be added by hand. They assume continuity at the
boundary, but it is clear from the proof that their proof works with non-continuous
boundary values.

Hence, as for $\S(H_+, T_+)$, we have the same set of examples as \cite{LS12}.
We point out that the S-matrix of the $\mathrm O(N)$ $\s$-models
satisfies our conditions, where $H_+$ has multiplicity $N$ and
$\und{S}^{\a\a'}_{\b\b'}(q) = \s_1(q)\d^{\a}_{\a'}\d^{\b}_{\b'} + \s_2(q)\d^{\a}_{\b'}\d^{\a'}_{\b} + \s_3(q)\d^{\a}_{\b}\d^{\a'}_{\b'}$
where $\s_i$ are certain analytic functions on the strip $\RR+\ima(0,\pi)$
(see \cite{AAR01, LS12} for detail)
and $\d$ is the Kronecker Delta.

As for left-right scattering, we present a class of examples. The $\mathrm O(N)$ $\s$-models
can be used to construct examples of this class.
Let us take $R \in \S(H, T)$ and assume that for the corresponding matrix-valued function $\und{R}$ it holds that
$\und{R}^{\a\b}_{\g\d}(q) = \und{R}^{\b\a}_{\d\g}(q)$.
By introducing the component flip operator $(\und{F}\xi)^{\alpha\beta}(q_1,q_2) = \xi^{\beta\alpha}(q_1,q_2)$,
this is equivalent to $\und{F}\und{R}(q)\und{F} = \und{R}(q)$.
Let us say in this case that $R$ satisfies the {\bf component flip symmetry} (this should be distinguished from
the canonical flip which appears below).
It is clear that the S-matrices of the $\mathrm O(N)$ $\s$-models satisfy this.
We claim that $R$ itself can play the role of the left-left, right-right and left-right scatterings.

\begin{proposition}\label{pr:flip}
 If $R \in \S(H,T)$ and satisfies the flip symmetry, then $\breve S = \und{R}(Q_+\otimes \1 + \1\otimes Q_+) \in \S(R,R)$,
 where $\und{R}(\cdot)$ is the to $R$ corresponding operator-valued function,
 namely $\breve S$ is a left-right scattering for the pair $(R,R)$. 
\end{proposition}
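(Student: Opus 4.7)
The plan is to verify the six conditions of Definition \ref{def:SLRabstract} for $\breve S$ with $R^+=R^-=R$ and $\H_+=\H_-=\H$.

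Unitarity of $\breve S$ together with conditions (1)--(2) (boost and translation covariance) follow immediately from the functional form of $\breve S=\und R(Q_+\otimes\1+\1\otimes Q_+)$: the generator $Q_+\otimes\1+\1\otimes Q_+$ is invariant under conjugation by $\Delta_{H_+}^{\ima t}\otimes\Delta_{H_-}^{-\ima t}$ (which shifts the two $Q$-operators by $-2\pi t$ and $+2\pi t$, respectively), and it commutes with $T_+(t_+)\otimes T_-(t_-)$ (since $P=\ee^Q$ is a function of $Q$); the matrix part of $\und R$ is untouched by either group.

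The heart of the argument is the two mixed Yang-Baxter identities. I would regard the YBE for $R$ as the pointwise matrix identity in $\B(\K^{\otimes 3})$ parametrized by $(q_1,q_2,q_3)\in\RR^3$:
\[
\und R(q_1-q_2)_{12}\und R(q_1-q_3)_{13}\und R(q_2-q_3)_{23}
=\und R(q_2-q_3)_{23}\und R(q_1-q_3)_{13}\und R(q_1-q_2)_{12}.
\]
Substituting $q_3\mapsto-q_3$ and promoting back to operator form via spectral calculus for $Q_3$ yields the left mixed YBE $R_{12}\breve S_{13}\breve S_{23}=\breve S_{23}\breve S_{13}R_{12}$ directly. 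The right mixed YBE is the main obstacle: the analogous substitution $q_1\mapsto-q_1$ produces arguments $\und R(-q_1-q_2)$ and $\und R(-q_1-q_3)$, and one must invoke the flip symmetry $FRF=R$, which forces $\und R(-q)=\und R(q)$ (the matrix-flip part appearing symmetrically on both sides of the identity cancels), in order to obtain $\breve S_{12}\breve S_{13}R_{23}=R_{23}\breve S_{13}\breve S_{12}$.

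The locality conditions (5)--(6) are handled through the characterization of Proposition \ref{prop:Smatrix}: unitarity of $\und{\breve S}=\und R$ is immediate; analyticity on the strip $\RR+\ima(0,\pi)$ follows from the analyticity of $\und S$ (Proposition \ref{prop:Rmatrix}) via the identification $\und R^{\alpha\beta}_{\gamma\delta}(q)=\und S^{\beta\alpha}_{\gamma\delta}(q)$; and the first form of the mixed unitary-crossing relation $\und{\breve S}^{\alpha\beta}_{\gamma\delta}(q+\ima\pi)=\overline{\und{\breve S}^{\bar\alpha\delta}_{\bar\gamma\beta}(q)}$ is deduced by combining the crossing symmetry and hermitian analyticity of $\und S$, while its equivalent form $=\overline{\und{\breve S}^{\gamma\bar\beta}_{\alpha\bar\delta}(q)}$ follows by applying TCP, which for $\und R$ reads $\und R^{\alpha\beta}_{\gamma\delta}=\und R^{\bar\gamma\bar\delta}_{\bar\alpha\bar\beta}$.
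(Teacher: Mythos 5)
Your overall strategy is the same as the paper's: reduce everything to the matrix-valued function $\und{R}$ and verify the conditions of Definition \ref{def:SLRabstract} through the characterizations of Propositions \ref{prop:Rmatrix} and \ref{prop:Smatrix}. Unitarity, the two covariances, the left mixed YBE (via $q_3\mapsto -q_3$), analyticity, and the mixed unitary-crossing relation are all handled correctly; for the second form of the crossing relation the paper applies the flip symmetry where you apply TCP, and both derivations are valid.

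The gap is in the right mixed Yang--Baxter identity. The flip symmetry does \emph{not} force $\und{R}(-q)=\und{R}(q)$: it is a condition on the internal indices at fixed spectral parameter, $\und{R}^{\a\b}_{\g\d}(q)=\und{R}^{\b\a}_{\d\g}(q)$, and says nothing about $q\mapsto-q$. In fact, hermitian analyticity combined with the flip symmetry and unitarity gives $\und{R}(-q)=\und{R}(q)^{\ast}=\und{R}(q)^{-1}$, so if one also had $\und{R}(-q)=\und{R}(q)$, every $\und{R}(q)$ would be a self-adjoint unitary, i.e.\ $\und{R}(q)^{2}=\1$; this fails for the $\mathrm{O}(N)$ $\s$-model S-matrices, which are precisely the examples the proposition is meant to cover. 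Consequently your substitution $q_1\mapsto-q_1$ leaves factors $\und{R}(-q_1-q_2)_{12}$ and $\und{R}(-q_1-q_3)_{13}$ whose signs cannot be traded away. The flip symmetry enters here in a different way, namely as a symmetry of the tensor legs rather than of the spectral parameter: conjugate the pointwise YBE
\[
\und{R}(q_1-q_2)_{12}\,\und{R}(q_1-q_3)_{13}\,\und{R}(q_2-q_3)_{23}
=\und{R}(q_2-q_3)_{23}\,\und{R}(q_1-q_3)_{13}\,\und{R}(q_1-q_2)_{12}
\]
by the permutation of the second and third factors of $\K^{\otimes 3}$. This moves the last factor to the legs $(3,2)$, which the flip symmetry identifies with the same matrix at $(2,3)$; since the arguments of the $12$- and $13$-factors are now free, re-parametrizing yields
\[
\und{R}(q_1+q_3)_{13}\,\und{R}(q_1+q_2)_{12}\,\und{R}(q_2-q_3)_{23}
=\und{R}(q_2-q_3)_{23}\,\und{R}(q_1+q_2)_{12}\,\und{R}(q_1+q_3)_{13},
\]
which is the right mixed YBE in the form in which it is actually used, i.e.\ the commutation of the product of the two $\breve S$-factors with the $R$-symmetrizer $F_{23}R_{23}$ of the right-moving legs in Lemma \ref{lem:mixedYBE}. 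Note the asymmetry this reveals: the left mixed YBE is a pure re-parametrization of the YBE and holds for any $R\in\S(H,T)$, whereas the right one genuinely requires the flip hypothesis.
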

\begin{proof}
 From Proposition \ref{prop:Rmatrix} we know that the matrix-valued function
 $\und{S}^{\a\b}_{\g\d} := \und{R}^{\b\a}_{\g\d}$ satisfies the conditions listed there
 and the necessary properties of $\breve S=\und{\breve S}(Q_+\otimes \1 + \1\otimes Q_+)$ in Proposition \ref{prop:Smatrix} can be read off:
 Unitarity is trivial. Since $\breve S$ is defined through the same function
 $\und{R}$, the left mixed Yang-Baxter equations follow trivially from the Yang-Baxter
 equation for $\und{R}$. Note that Proposition \ref{prop:Rmatrix} is written in $\und{S}$ and must
 be translated in $\und{R}$: namely,
\begin{align*}
 \und{R}(q)_{12}\und{R}(q+q')_{13}\und{R}(q')_{23}&=
 \und{R}(q')_{23}\und{R}(q+q')_{13}\und{R}(q)_{12}\,.
\end{align*}
 The right mixed Yang-Baxter equation can be obtained by applying the component flip $\und{F}_{23}$ from the both sides
 to the left mixed Yang-Baxter equation and by using the component flip symmetry of $\und{R}(q)$.
 Analyticity for $\und{\breve S}$ is exactly the analyticity of $\und{R}$.
 Finally, the mixed unitary-crossing relation can be shown as follows:
\[
\und{\breve S}^{\a\b}_{\g\d}(q+\ima\pi) = \und{R}^{\a\b}_{\g\d}(q+\ima\pi) = \und{S}^{\b\a}_{\g\d}(q+\ima\pi) = \und{S}^{\bar\g\b}_{\d\bar\a}(-q) = \overline{\und{S}^{\d\bar\a}_{\bar\g\b}(q)}
= \overline{\und{R}^{\bar\a\d}_{\bar\g\b}(q)} = \overline{\und{\breve S}^{\bar\a\d}_{\bar\g\b}(q)}
\]
where we used the definition of $\und{\breve S}$, the definition of $\und{S}$, the crossing symmetry for $\und{S}$,
Hermitian analyticity of $\und{S}$ and the definitions of $\und{S}$ and $\und{\breve S}$ in this order. This is the first
of the Mixed unitary-crossing relation. The second relation is obtained by applying the component flip symmetry
to the both sides of the first relation and replacing the labels as $\a \leftrightarrow \b, \g \leftrightarrow \d$.
\end{proof}

Hence we obtain a concrete family of left-right scattering operators out of $\mathrm O(N)$ $\s$-models.
We do not know whether there are Lagrangians for our new S-matrices.
We will construct corresponding massless Borchers triples in Section \ref{massless-wedgelocality}
and massive Borchers triples in Section \ref{massive}.
This in turn gives again another family of left-left scattering.
In order to repeat this procedure, it is necessary that the starting $\und{R}$ satisfies
further symmetry $\und{R}(q) = \und{R}(\ima\pi-q)$. We do not know any such example except
constant matrices or scalar case \cite{Tanimoto13}.

\subsection{Second quantization of standard pairs}
\label{sec:2ndQBP}
\subsubsection{\texorpdfstring{$R$}{R}-symmetric Fock space}\label{R-symmetric}
\begin{proposition}
  Let $\H$ be a Hilbert space and 
 $F\equiv F_{12}:\H\otimes \H \longrightarrow \H\otimes \H$ the canonical 
 flip operator given by $F(\xi_1\otimes \xi_2) = \xi_2\otimes \xi_1$. 
Then there is a one-to-one correspondence between
\begin{enumerate}
\item unitary operators $R$ on $\H\otimes\H$ satisfying 
    $R_{21}\equiv F R F = R^\ast$ and the Yang-Baxter equation
    \begin{align*}
 	R_{12}R_{13}R_{23} &= R_{23}R_{13}R_{12}\,\text,
    \end{align*}
\item unitary involutions (i.e.~self-adjoint unitary operators) $\Phi$ on 
    $\H\otimes\H$, such that the braiding relation
    \begin{align*}
	\Phi_{12}\Phi_{23}\Phi_{12} &=\Phi_{23}\Phi_{12}\Phi_{23}
    \end{align*}
    holds, and
 \item families $(D_n:\mathfrak{S}_n \to \U(\H^{\otimes n}))_{n=2,3,\ldots}$ of 
     unitary representations of the symmetric group compatible with all
     inclusions of $\mathfrak{S}_n\subset \mathfrak{S}_m$ ($m>n$) in the following way. 
     Let $\{i_1,\cdots,i_m\}\cup\{i_{m+1},\ldots,i_n\}$ be an ordered partition
     of $\{1,\ldots,n\}$
     and let $\iota_{i_1\cdots i_m}$ be
     the inclusion of $\mathfrak{S}_m$ into $\mathfrak{S}_n$ as the subgroup of permutations of
     $\{i_1,\cdots,i_m\}$ (leaving $i_{m+1},\ldots,i_{n}$ invariant), then 
    \begin{align*}
 	D_m(\pi) &= D_n(\pi)_{i_1\cdots i_n} & 
 	    (\pi \in \iota_{i_1\cdots i_m} (\mathfrak{S}_n) \subset \mathfrak{S}_m),
    \end{align*}
    where $D_n(\pi)_{i_1\cdots i_n}$ acts on $i_1\cdots i_n$-th tensor components.
\end{enumerate}
 The correspondence given by $\Phi = FR$ and $D_n$ is defined 
 by $D_n(\tau_j)=\Phi_{j,j+1}$ for $1\leq j \leq n-1$ and $\tau_j$ is the transposition of 
$j\leftrightarrow j+1$.
\end{proposition}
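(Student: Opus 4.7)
The plan is to prove the proposition by establishing the two equivalences (1) $\Leftrightarrow$ (2) and (2) $\Leftrightarrow$ (3) separately, and then verifying that the composite correspondence is exactly $\Phi = FR$ and $D_n(\tau_j) = \Phi_{j,j+1}$. Both directions of each equivalence should be completely algebraic, built by translating the defining identities on one side into those on the other.

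For (1) $\Leftrightarrow$ (2), I would define $\Phi = FR$ and check the properties one by one. Unitarity is immediate since both factors are unitary. Self-adjointness of $\Phi$ is equivalent to $R_{21} = R^*$: starting from $R_{21} = FRF = R^*$ one multiplies by $F$ on the right to get $FR = R^*F$, which gives $\Phi^* = R^*F = FR = \Phi$, and the argument reverses. The heart of this step is turning the braid relation for $\Phi$ into Yang--Baxter for $R$. The standard tool is the conjugation identity $F_{kl} A_{ij} F_{kl} = A_{\sigma(i)\sigma(j)}$ with $\sigma = (k,l)$; applied repeatedly it allows one to pull all flip operators to the left in both $\Phi_{12}\Phi_{23}\Phi_{12}$ and $\Phi_{23}\Phi_{12}\Phi_{23}$. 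After using $F_{12}F_{23}F_{12} = F_{23}F_{12}F_{23}$, the two sides reduce to $R_{23}R_{13}R_{12}$ and $R_{12}R_{13}R_{23}$ respectively, so the braid relation for $\Phi$ becomes precisely YBE for $R$.

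For (2) $\Leftrightarrow$ (3), the natural route is via the Coxeter presentation of $\mathfrak{S}_n$ by the transpositions $\tau_j = (j,j+1)$ with relations $\tau_j^2 = e$, $\tau_j\tau_k = \tau_k\tau_j$ for $|j-k| \geq 2$, and $\tau_j\tau_{j+1}\tau_j = \tau_{j+1}\tau_j\tau_{j+1}$. Given a unitary involution $\Phi$ satisfying the braid relation, setting $D_n(\tau_j) := \Phi_{j,j+1}$ respects all three defining relations: involutivity comes from $\Phi^2 = I$, far commutativity from disjoint tensor factors, and the braid relation from the assumed identity on $\Phi$. This determines a well-defined unitary representation of $\mathfrak{S}_n$, and compatibility with the inclusions $\mathfrak{S}_m \subset \mathfrak{S}_n$ follows because the inclusion is defined on generators exactly by $\tau_j \mapsto \tau_{i_j, i_{j+1}}$ which corresponds to $\Phi_{j,j+1} \mapsto \Phi_{i_j, i_{j+1}}$. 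The converse is immediate: set $\Phi = D_2(\tau_1)$, then $\Phi$ is a unitary involution because $\tau_1^2 = e$, and the braid relation for $\Phi_{12}, \Phi_{23} \in \B(\H^{\otimes 3})$ is exactly the image under $D_3$ of $\tau_1\tau_2\tau_1 = \tau_2\tau_1\tau_2$, using compatibility to identify $D_3(\tau_j)$ with $\Phi_{j,j+1}$.

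The only part requiring real care is the bookkeeping in the braid-to-YBE translation of step (1) $\Leftrightarrow$ (2); everything else is either a one-line verification or an application of the universal property of the Coxeter presentation. A minor point worth flagging is that one must start from the assumption $R_{21} = R^*$ when extracting self-adjointness of $\Phi$, so the two conditions on $R$ in (1) are genuinely needed and map cleanly to the two conditions on $\Phi$ in (2). Composing the two equivalences yields the stated formula $D_n(\tau_j) = \Phi_{j,j+1} = (FR)_{j,j+1}$, completing the proof.
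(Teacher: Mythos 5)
Your proposal is correct and follows essentially the same route as the paper: defining $\Phi=FR$, deriving self-adjointness of $\Phi$ from $R_{21}=R^\ast$, commuting the flips to the left to turn the braid relation into the Yang--Baxter equation, and invoking the Coxeter presentation of $\mathfrak{S}_n$ for the equivalence with the compatible families $D_n$. No substantive differences to report.
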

\begin{proof}
Given unitary $R$ with $FRF = R^\ast$, define $\Phi:=FR$ and 
therefore $\Phi^\ast = R^\ast F = F R = \Phi$. 
If on the other hand a unitary involution $\Phi$ is given, by defining $R:=F\Phi$
we get $R^\ast = \Phi F = R^{-1}$ and 
$FRF = FF\Phi F= \Phi F = R^\ast$.  
It is obvious that $F_{12}F_{23}F_{12}  = F_{23}F_{12}F_{23}$ holds. 
For $F$ and $S$ we get the commutation relation
$S_{12}F_{23} = F_{23}S_{13}$ and therefore
\begin{align*}
\Phi_{23}\Phi_{12}\Phi_{23}
 	&= F_{23}R_{23}F_{12}R_{12}F_{23}R_{23}
 	\\&= F_{23}F_{12}F_{23} \circ R_{12} R_{13} R_{23}\,\text,
\\
\Phi_{12}\Phi_{23}\Phi_{12}
 	&= F_{12}R_{12}F_{23}R_{23}F_{12}R_{12}
 	\\&= F_{12}F_{23}F_{12} \circ R_{23} R_{13} R_{12}\text.\end{align*}
From this, the equivalence between the  1.~and 2.~is clear.
 	
 	For $\tau_i$ the transposition of the $i$-th and $(i+1)$-th element,
 we define $D_n(\tau_i)= \Phi_{i,i+1}$, which gives a representation
of 
$$\mathfrak{S}_n=\langle \tau_1,\ldots,\tau_{n-1}: 
 \tau_i\tau_{i+1}\tau_i=\tau_{i+1}\tau_i\tau_{i+1}\text{ and }\tau_i\tau_j=\tau_j\tau_i \text{ for }|i-j|\geq 2\rangle$$
by the properties of $\Phi$.
 Given $\{D_n\}$ we set $\Phi:=D_2(\tau_1)$ and we observe that 
 the family is already fixed by $D_n(\tau_i) =\Phi_{i,i+1}$, because the 
 transpositions generate $\mathfrak{S}_n$.
\end{proof}

For a pair $(\H,R)$ of a Hilbert space and a unitary 
$R\in\U(\H\otimes\H)$ fulfilling
$R_{21}\equiv F R F =R^\ast$ and the Yang--Baxter identity, 
i.e.~Properties (1) and (2) of Definition 
\ref{def:TwoParticleScatteringOperator},
we associated the Fock space $\F_{\H,R}$ given by 
$$\F_{\H,R}=P_R\F_\H^\Sigma,$$ where $P_R$ is the projection
$$
P_R\restriction \H^{\otimes n} = \frac1{n!} \sum_{\sigma \in S^n} D_n(\sigma)
$$
and 
$$\F_\H^\Sigma =\CC\Omega\oplus\bigoplus_{n=1}^\infty \H^{\otimes n}$$ 
is the unsymmetrized Fock space over $\H$. 
For $\|A\|\leq 1$, such that $[A\otimes A, R]=0$ there is an operator 
$\Gamma(A) = \1 \oplus A \oplus (A\otimes A)\oplus \cdots$, which
restricts to $\F_{\H,R}$.

The construction is functiorial, from the
additive (by taking direct sums) category with
\begin{description}
\item[Objects] Pairs $(\H,R)$ of a Hilbert space and a unitary $R\in\U(\H\otimes\H)$ 
fulfilling $R_{21}= F R F =R^\ast$ and the Yang--Baxter relation.
\item[Morphisms] Contractions $A:(\H_1,R_1)\to (\H_2,R_2)$ with $(A\otimes A )R_1=R_2(A\otimes A)$.
\end{description}
to the multiplicative (by taking tensor products) 
category of Hilbert spaces with contractions,
which is given by
\begin{align*}
(\H,R)&\longmapsto \F_{\H,R}\\
A:(\H_1,R_1)\to (\H_2,R_2) &\longmapsto\Gamma(A)= 1\oplus \bigoplus_{n=1}^\infty A^{\otimes n }:\F_{\H_1,R_1}\to \F_{\H_2,R_2}\,\text.
\end{align*}
We note that $\Gamma(A)$ is well defined because 
from $(A\otimes A)R_1 =R_2(A \otimes A)$ it follows that 
$P_{R_1}\Gamma(A)=\Gamma(A)P_{R_2}=P_{R_1}\Gamma(A)P_{R_2}$ where $P_{R_i}$ is 
here the projection from $\F_{\H_i}^\Sigma$ onto $\F_{\H_i,R_i}$.
It preserves adjoints 
$$
\Gamma(A^\ast)=\Gamma (A)^\ast\,\text.
$$
namely they are preserved on the full Fock space and 
$(A\otimes A) R_1 =R_2(A\otimes A)$ is equivalent to 
$(A^\ast \otimes A^\ast) R_2 =R_1(A^\ast \otimes A^\ast)$
due to $R_i^\ast = (R_i)_{21}$.
In particular, $\Gamma(U)$ is unitary if $U$ is unitary. 
There is a natural isomorphism
\begin{align*}
N:\F_{\H_1\oplus\H_2,R_1\oplus R_2} &\cong \F_{\H_1,R_1}\otimes 
\F_{\H_2,R_2}\\
N\Gamma(A_1\oplus A_2)&= \Gamma(A_1)\otimes \Gamma(A_2)N
\,\text.
\end{align*}
For $A$ antilinear with $(A\otimes A)R =R^\ast (A\otimes A)$ we define 
\begin{align*}
\hat \Gamma(A)&= 
A^0\oplus \bigoplus_{n=1}^\infty F_{1\cdots n} A^{\otimes n}\,\text.
\end{align*}
where for an antilinear operator $A$ we define $A^0$ as the complex conjugation on $\CC$
and $F_{1\cdots n} (f_1\otimes \cdots \otimes f_n)= f_n\otimes \cdots \otimes  f_1$.
This is well-defined, namely we have 
\begin{align*}
F_{1\cdots n}A^{\otimes n} D_n(\tau_i) =
F_{1\cdots n} \Phi_{i+1,i} A^{\otimes n}=
\Phi_{n-i,n-i+1}  F_{1\cdots n}  A^{\otimes n}
= D_n(\tau_{n-i})F_{1\cdots n}A^{\otimes n}
\end{align*}
hence $\hat \Gamma(A)P_R=P_R\hat \Gamma(A)$.
This can also be formulated as 
\begin{align*}
\hat \Gamma(A)\restriction P_R\H^{\otimes n}
&=A^{\otimes n} F_{1\cdots n} 
=A^{\otimes n} \prod_{1\leq i<j \leq n} R_{ij}
\,\text,
\end{align*}
where in the product the operators are lexicographically ordered
from left to right (or equivalently from right to left by YBE).
Namely, for $\psi\in \H^{\otimes n}$, the restricted vector $P_R\psi$ is 
$R$-symmetric in the sense that we have
$F_{i,i+1} P_R \psi= R_{i,i+1}\Psi_{i,i+1}P_R \psi = R_{i,i+1}P_R\psi$.
From this one can show that on $\H^{\otimes n}$ it holds that 
$F_{1\cdots n}P_R =\prod_{1\leq i<j \leq n} R_{ij}  P_R$.


\newcommand{\ri}[1]{\left\langle #1 \right|}

\subsubsection{Second quantization on \texorpdfstring{$R$}{R}--symmetric Fock space}
For $f\in \H$ let $b(f)$ be the creation operator 
on the subspace of finite particles of $\F_\H^\Sigma$, given by
$b(f) \xi = \sqrt{n+1} \cdot f\otimes \xi$ for $\xi \in \H^{\otimes n}$.
Then its adjoint is given by
$b(f)^\ast \xi= \sqrt{n} \cdot \ri f_1 \xi$,
namely 
\begin{align*}
&(h_0\otimes \cdots \otimes h_m, b(f) g_1\otimes \cdots \otimes g_n)
\\&\qquad= \delta_{mn} \sqrt{n+1} (h_0\otimes \cdots \otimes h_m,f \otimes  g_1\otimes \cdots \otimes g_n)
\\&\qquad= \delta_{mn}\sqrt{m+1} \cdot\overline{(f,h_0)}(h_1\otimes \cdots \otimes h_m,  g_1\otimes \cdots \otimes g_n)
\\&\qquad=(b(f)^\ast h_0\otimes \cdots \otimes h_m ,g_1\otimes \cdots \otimes g_n)\,\text.
\end{align*}
Let $\D$ be the vectors with finite particle 
number, i.e.~$\Psi\in\F_{\H,R}$ where 
$n$-th component vanishes for sufficiently large $n$.

We define on $\F_{\H,R}$ the compressed operators $a(f)=P_Rb(f) P_R$
and define the Segal type field $\phi(f)=a(f)+a(f)^\ast$ on $\D$ which 
is symmetric. We note that $f\mapsto \phi(f)$ is just real linear.

\begin{lemma}[{c.f.~\cite[Lemma 4.1.3.]{Lechner06}}] 
\label{lem:ParticleBounds}
Let $N$ be the number operator. On $\Psi\in\D$ holds
\begin{align*}
\|a(f)\Psi\|&\leq \|f\|\cdot \|(N+1)^{\frac12} \Psi\|\,\text,&
\|a(f)^\ast \Psi\|&\leq \|f\|\cdot \|N^{\frac12} \Psi\|\,\text. \\
\end{align*}
\end{lemma}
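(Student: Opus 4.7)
The plan is to reduce the bounds on the compressed operators $a(f), a(f)^\ast$ on the $R$-symmetric Fock space to the corresponding bounds for the uncompressed $b(f), b(f)^\ast$ on the full unsymmetrized Fock space $\F_\H^\Sigma$, which are immediate from the definitions.

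First I would work out the estimates for $b(f)$. For $\xi \in \H^{\otimes n}$, the definition gives $b(f)\xi = \sqrt{n+1}\, f\otimes \xi$, hence $\|b(f)\xi\|=\sqrt{n+1}\,\|f\|\,\|\xi\|$. For a general finite-particle vector $\Psi = \bigoplus \Psi_n$, the components $b(f)\Psi_n$ live in mutually orthogonal tensor powers, so
\begin{align*}
\|b(f)\Psi\|^2 = \sum_n (n+1)\,\|f\|^2\,\|\Psi_n\|^2 = \|f\|^2\,\|(N+1)^{1/2}\Psi\|^2.
\end{align*}
For $b(f)^\ast$, the formula $b(f)^\ast\xi = \sqrt{n}\,\langle f|_1\xi$ (already displayed in the excerpt) together with Cauchy--Schwarz applied in the first tensor slot yields $\|b(f)^\ast\xi\|\le \sqrt{n}\,\|f\|\,\|\xi\|$. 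Again, because $b(f)^\ast$ maps $\H^{\otimes n}$ into $\H^{\otimes(n-1)}$, orthogonality of the images gives
\begin{align*}
\|b(f)^\ast \Psi\|^2 \le \sum_n n\,\|f\|^2\,\|\Psi_n\|^2 = \|f\|^2\,\|N^{1/2}\Psi\|^2.
\end{align*}

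To transfer these to $a(f) = P_R b(f) P_R$ and $a(f)^\ast = P_R b(f)^\ast P_R$, I would observe two facts: (i) the projection $P_R$ preserves particle number, since $P_R\restriction \H^{\otimes n}$ is expressed via the representation $D_n$ which acts inside $\H^{\otimes n}$, hence $P_R$ commutes with $N$; and (ii) $P_R$ is an orthogonal projection, so $\|P_R\|\le 1$. For $\Psi \in \D \subset \F_{\H,R}$ we have $P_R\Psi = \Psi$ and $P_R(N+1)^{1/2}\Psi = (N+1)^{1/2}\Psi$, so
\begin{align*}
\|a(f)\Psi\| = \|P_R b(f) \Psi\| \le \|b(f)\Psi\| \le \|f\|\,\|(N+1)^{1/2}\Psi\|,
\end{align*}
and the same argument with $b(f)^\ast$ in place of $b(f)$ gives the second estimate.

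There is no serious obstacle here: the only point worth checking carefully is that $P_R$ commutes with $N$ (so inserting and removing $P_R$ does not disturb the particle-number weights $(N+1)^{1/2}$ and $N^{1/2}$). Everything else is orthogonality across particle sectors and one application of Cauchy--Schwarz for the partial inner product defining $b(f)^\ast$.
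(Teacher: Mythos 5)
Your proof is correct and follows essentially the same route as the paper: both reduce the estimates for $a(f)^{\#}=P_Rb(f)^{\#}P_R$ to the corresponding bounds for $b(f),b(f)^\ast$ on the unsymmetrized Fock space and then use that $P_R$ is a projection commuting with $N$. The only cosmetic difference is that you obtain the adjoint bound by Cauchy--Schwarz in the first tensor slot, whereas the paper gets it from $\|b(f)(N+1)^{-1/2}\|=\|f\|$ by passing to adjoints.
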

\begin{proof}
On the unsymmetrized Fock space $\F_\H^\Sigma$ with  $N\Psi_n=n\Psi_n$ one checks
$b(g)^\ast b(f)\Psi_n=(g,f) (N+1)\Psi_n$ and gets 
$b(g)^\ast b(f)=(g,f) (N+1)$ on $\D$. Hence
$\|b(f)\Psi\|^2 =\|(N+1)^{\frac12}\Psi\|^2\cdot \|f\|^2$ which
implies $\|b(f)(N+1)^{-\frac12}\|=\|f\|$. But then 
also the adjoint $(N+1)^{-\frac12}b(f)^\ast=b(f)^\ast N^{-\frac12}$ has the same norm.
Then the bounds follow from $a(f)^{\#}=P_Rb(f)^{\#}P_R$.
\end{proof}

\begin{lemma} It holds:
\begin{enumerate}
\item $\phi(f)$ is essentially self-adjoint on $\D$.
\item $f\mapsto \phi(f)$ is strongly continuous on $\D$.
    \item $f\mapsto \ee^{\ima \phi(f)}$ is strongly continuous (where
    $\phi(f)$ here is the self-adjoint extension).
\item Let $U\in\U(\H)$ with $[U\otimes U,R]=0$, then 
$\Gamma(U)\phi(f) \Gamma(U)^\ast = \phi(Uf)$ on $\D$.
\item If $H$ is cyclic then $\Omega$ is cyclic 
    for the polynomial algebra of $\phi(f)$ with $f\in H$.
\end{enumerate}
\end{lemma}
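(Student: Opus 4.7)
The plan is to exploit the particle-number bounds of Lemma \ref{lem:ParticleBounds} together with standard Fock-space techniques. For part 1, observe that $\phi(f)$ is symmetric and leaves $\D$ invariant. For $\Psi\in\D$ supported on at most $m$ particles, iterating Lemma \ref{lem:ParticleBounds} yields
\begin{align*}
\|\phi(f)^n\Psi\|\,\leq\,(2\|f\|)^n\sqrt{(m+n)!/m!}\,\|\Psi\|,
\end{align*}
so $\sqrt{(m+n)!}/n!$ grows only as $\sqrt{(n+1)\cdots(n+m)/n!}$ and the series $\sum_n t^n\|\phi(f)^n\Psi\|/n!$ has infinite radius of convergence. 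Every vector in $\D$ is therefore an entire analytic vector for $\phi(f)$, and Nelson's analytic vector theorem yields essential self-adjointness. Part 2 is immediate from real linearity together with Lemma \ref{lem:ParticleBounds}: $\|(\phi(f)-\phi(g))\Psi\|=\|\phi(f-g)\Psi\|\leq 2\|f-g\|\cdot\|(N+1)^{1/2}\Psi\|$ for $\Psi\in\D$.

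For part 3, fix $\Psi\in\D$ and $f_0\in\H$. In any ball $\{\|f-f_0\|\leq r\}$ the above bounds are uniform, so the power series $\ee^{\ima\phi(f)}\Psi=\sum_n(\ima\phi(f))^n\Psi/n!$ converges uniformly in $f$; combined with the term-by-term continuity provided by part 2, this gives strong continuity of $f\mapsto\ee^{\ima\phi(f)}\Psi$ on $\D$. Since each $\ee^{\ima\phi(f)}$ is unitary and $\D$ is dense, an $\varepsilon/3$ argument extends continuity to all of $\F_{\H,R}$. For part 4, the hypothesis $[U\otimes U,R]=0$ and the trivial identity $[U\otimes U,F]=0$ together imply $[U^{\otimes n},\Phi_{i,i+1}]=0$ for every $i$, hence $[\Gamma(U),D_n(\sigma)]=0$ for every $\sigma\in\mathfrak{S}_n$, so $\Gamma(U)$ commutes with $P_R$. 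On the unsymmetrized Fock space the intertwining $\Gamma(U)b(f)=b(Uf)\Gamma(U)$ is obvious, and passage to the adjoint followed by compression by $P_R$ yields $\Gamma(U)a(f)=a(Uf)\Gamma(U)$ and the analogous identity for $a(f)^\ast$ on $\D$; the relation $\Gamma(U)\phi(f)\Gamma(U)^\ast=\phi(Uf)$ follows.

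For part 5, denote by $\mathcal{P}$ the complex polynomial algebra generated by $\{\phi(f):f\in H\}$. Since $P_R$ is the identity on the one-particle layer, $\phi(f)\Omega=a(f)^\ast\Omega$ equals $f$ as a one-particle vector; cyclicity of $H$ (which means $H+\ima H$ is dense in $\H$) then places the entire one-particle subspace in $\overline{\mathcal{P}\Omega}$. Proceeding by induction on particle number, one verifies that $\phi(f_1)\cdots\phi(f_n)\Omega$ equals $P_R(f_1\otimes\cdots\otimes f_n)$ plus contributions of strictly lower particle number arising from the annihilator parts of the $\phi(f_k)$; these lower-order remainders already lie in $\overline{\mathcal{P}\Omega}$ by the inductive hypothesis, so one extracts $P_R(f_1\otimes\cdots\otimes f_n)\in\overline{\mathcal{P}\Omega}$, and density of $(H+\ima H)^{\otimes n}$ in $\H^{\otimes n}$ together with continuity of $P_R$ yields $P_R\H^{\otimes n}\subset\overline{\mathcal{P}\Omega}$. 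The main obstacle I anticipate is the bookkeeping in this inductive step, where one must carefully identify the ``top'' contribution $P_R(f_1\otimes\cdots\otimes f_n)$ and control all inner-product remainders produced by $a(f_k)$ acting on lower-order symmetrized tensors; the preceding items reduce to standard Fock-space arguments once the particle-number bounds of Lemma \ref{lem:ParticleBounds} are in hand.
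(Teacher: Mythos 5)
Your proof is correct and follows essentially the same route as the paper: the particle-number bounds of Lemma~\ref{lem:ParticleBounds} give entire analytic vectors and Nelson's theorem for (1), the same linearity estimate for (2), commutation of $\Gamma(U)$ with $P_R$ followed by the intertwining of $b(f)$ for (4), and induction on particle number extracting the top term $P_R(f_1\otimes\cdots\otimes f_n)$ for (5); the only variation is in (3), where you sum the exponential series uniformly on analytic vectors rather than invoking, as the paper does, the standard fact that convergence of self-adjoint generators on a common core implies strong convergence of the unitary groups --- both are valid. One small slip worth fixing: in the paper's convention $a(f)=P_Rb(f)P_R$ is the \emph{creation} part, so $\phi(f)\Omega=a(f)\Omega=f$ while $a(f)^\ast\Omega=0$; your line ``$\phi(f)\Omega=a(f)^\ast\Omega$ equals $f$'' puts the adjoint on the wrong factor, though this does not affect the substance of the argument.
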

\begin{proof}
We proceed as in \cite[Proposition 4.2.2]{Lechner06}.
For $\Psi_n\in \D$ with $N\Psi_n=n\Psi_n$ we get
with $c_f=2\|f\|$ with the help of the bounds of Lemma \ref{lem:ParticleBounds} the estimate 
$\|\phi(f)\Psi_n\|\leq \sqrt{n+1}\cdot  c_f\cdot \|\Psi_n\|$. 
Iteratively, we get  
$$
\|\phi(f)^k\Psi_n\|\leq \sqrt{(n+1)\cdots (n+k)} c_f^k\|\Psi_n\|
$$
and for every $t>0$ we have
\begin{align*}
    \sum_{k=0}^\infty \frac{\|\phi(f)\Psi_n\|}{k!} t^k\leq
    \|\Psi_n\| \sum_{k=0}^\infty \sqrt{\frac{(n+k)!}{n!}} \frac1{k!}
    \left( c_f\cdot t\right)^k &\leq \infty
    \,\text.
\end{align*}
By Nelson's Theorem \cite[Theorem X.39]{RSI}
$\phi(f)$ is essentially self-adjoint on 
$\D$.

Next we prove the continuity (c.f.~\cite[Theorem X.41]{RSII}). For 
$\psi\in P_R\H^{\otimes k}$ 
and $f_n \to f$ a sequence in $\H$ we get 
\begin{align*}
\|\phi(f_n)\psi-\phi(f)\psi\|
&= \|\phi(f_n-f)\psi\| \leq 2\sqrt{k+1}\,\left\|f_n-f\right\|\, \|\psi\|
\end{align*}
so $\phi(f_n)\psi \to \phi(f)\psi$ and thus $\phi(f_n)$ converges 
strongly to $\phi(f)$ on $\D$. Since $\D$ is a core for $\phi(f)$
and all $\phi(f_n)$'s, 
it holds that $\ee^{\ima t \phi(f_n)} \to \ee^{\ima t\phi(f)}$ strongly.

Let $U\in\U(\H)$ with $[U\otimes U,R]=0$, then $\Gamma(U)$ commutes 
with $P_R$. For $\xi\in\H^{\otimes n}$ we get 
\begin{align*}
\Gamma(U)a(f)\Gamma(U)^\ast \xi &=\sqrt{n+1} U^{\otimes (n+1)} 
P_R(f\otimes U^{\ast \otimes n}\xi)\\
&=\sqrt{n+1} P_R(Uf \otimes \xi)\\
&= a(Uf)\xi
\end{align*}
and $\Gamma(U)a(f)^\ast\Gamma(U^\ast) = (\Gamma(U)a(f)\Gamma(U^\ast) )^\ast
=   a(Uf)^\ast$, hence we obtain 4.

The cyclicity can be shown inductively, namely by applying 
$\phi(f)$ on $\Omega$ one can show that one obtain a total
set in $P_R\H^{\otimes n}$.
\end{proof}

We
define for every real subspace $H\subset \H$ the von Neumann algebra
\begin{align*}
\M_R(H) &= \left\{ \ee^{\ima \phi(f)}: f \in H \right\}'' \subset \B(\F_{\H,R})
\,\text.
\end{align*}
This can be seen as a generalization
of the CCR and CAR algebra.
\begin{proposition}
\label{prop:CSR}
Let $(\H,R)$ like before and $K,H\subset \H$ real subspaces: 
\begin{enumerate} 
\item $K\subset H$ then $\M_R(K)\subset \M_R(H)$,
\item $\M_R(K) =  \M_R(H)$ if 
$\overline K =\overline H$,
\item Let $U\in\U(\H)$ with $[U\otimes U,R]=0$, then 
    $\Gamma(U)\M_R(H) \Gamma(U)^\ast = \M_R(UH)$,
\item If $H$ is cyclic then $\Omega$ is cyclic for $\M_R(H)$.
\end{enumerate}
\end{proposition}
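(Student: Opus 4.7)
The plan for Proposition \ref{prop:CSR} is to dispatch each item in turn. Items (1), (2), and (3) follow almost at once from the previous lemma, and only item (4)---cyclicity for the von Neumann algebra rather than for the polynomial algebra---carries any content.

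For (1) the inclusion of generating sets $\{\ee^{\ima\phi(f)}:f\in K\}\subset\{\ee^{\ima\phi(f)}:f\in H\}$ survives the double commutant. For (2), given $f$ in the common closure $\overline K=\overline H$, pick $f_n\in K$ with $f_n\to f$. By the strong continuity of $g\mapsto\ee^{\ima\phi(g)}$ established in the preceding lemma, $\ee^{\ima\phi(f_n)}\to\ee^{\ima\phi(f)}$ strongly, and strong limits of operators in $\M_R(K)$ lie in $\M_R(K)$; thus every generator of $\M_R(H)$ is in $\M_R(K)$, and the other inclusion is symmetric. For (3), the intertwining $\Gamma(U)\phi(f)\Gamma(U)^\ast=\phi(Uf)$ proved in the previous lemma on the core $\D$ extends to the self-adjoint closures, so by the spectral theorem $\Gamma(U)\ee^{\ima\phi(f)}\Gamma(U)^\ast=\ee^{\ima\phi(Uf)}$; since $\Ad\Gamma(U)$ is an automorphism of $\B(\F_{\H,R})$, it commutes with taking $(\slot)''$, giving $\Gamma(U)\M_R(H)\Gamma(U)^\ast=\M_R(UH)$.

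The part requiring care is (4). Let $P$ be the orthogonal projection onto $\overline{\M_R(H)\Omega}$. This subspace is invariant under $\M_R(H)$, so $P\in\M_R(H)'$, and in particular $P$ commutes with every unitary $\ee^{\ima t\phi(f)}$ for $f\in H$ and $t\in\RR$. By Stone's theorem $P$ commutes with all bounded Borel functions of the self-adjoint operator $\phi(f)$, hence leaves its spectral projections, and therefore its domain, invariant, and commutes with $\phi(f)$ on $\mathrm{dom}(\phi(f))$. Since $\Omega\in\D\subset\mathrm{dom}(\phi(f_1)\cdots\phi(f_n))$ for all $f_i\in H$, and since $P\Omega=\Omega$, an induction on $n$ gives
\[
P\,\phi(f_1)\cdots\phi(f_n)\Omega=\phi(f_1)\cdots\phi(f_n)\,P\Omega=\phi(f_1)\cdots\phi(f_n)\Omega,
\]
so all such polynomial vectors lie in the range of $P$. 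By item (5) of the preceding lemma these vectors are total in $\F_{\H,R}$ when $H$ is cyclic, forcing $P=\1$.

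The only mildly delicate step is the spectral-theoretic argument in (4) that commutation of $P$ with the bounded exponentials $\ee^{\ima t\phi(f)}$ entails commutation with the unbounded $\phi(f)$ on its domain; everything else is formal. No new analytic input beyond the bounds and continuity of the preceding lemma is needed.
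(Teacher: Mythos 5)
Your proposal is correct. Items (1)--(3) are handled exactly as in the paper (which merely records that they are ``clear'', follow ``from continuity'', and from the covariance of $\phi(f)$, respectively); your fleshed-out versions are the intended arguments. For item (4) you take a slightly different, though equally standard, route: you let $P$ be the projection onto $\overline{\M_R(H)\Omega}$, note $P\in\M_R(H)'$, and use that commutation with the unitary group $\ee^{\ima t\phi(f)}$ forces commutation with $\phi(f)$ on its domain, so that $P$ fixes all polynomial vectors. The paper instead goes the other way around: it truncates the fields, setting $F_k(t):=\phi(f_k)E_k(t)$ with $E_k(t)$ the spectral projection of $\phi(f_k)$ onto $[-t,t]$, observes $F_k(t)\in\M_R(H)$, and lets $F_1(t)\cdots F_n(t)\Omega\to\phi(f_1)\cdots\phi(f_n)\Omega$, which exhibits the polynomial vectors directly as limits of vectors in $\M_R(H)\Omega$. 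Both arguments reduce the claim to the cyclicity of $\Omega$ for the polynomial algebra of the fields (item (5) of the preceding lemma) and are interchangeable here; yours avoids explicit truncation at the cost of the (correct, and correctly flagged) spectral-theoretic step that a bounded operator commuting with a one-parameter unitary group commutes with its generator.
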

\begin{proof}
The first statement is clear and the second follows from continuity.
The covariance with respect to unitaries with $[U\otimes U,R]=0$
follows from the covariance of $\phi(f)$.
Let $f_1,\cdots,f_n\in H$ and let $E_k(t)$ be the spectral projection
of the self-adjoint operator $\phi(f_k)$ on the 
spectral values $[-t,t]$. Then $F_k(t):=\phi(f_k)E_k(t)\in\M$ for all $t>0$ 
and $F_k(t) \to \phi(f_k)$ strongly on $\D$ and hence
$F_1(t)\cdots F_n(t)\Omega$ converges to $\phi(f_1)\cdots \phi(f_n)\Omega$ 
for $t\to\infty$. The cyclicity of $\Omega$ for $\M$ then follows from the 
cyclicity of $\Omega$ for $\phi$.
\end{proof}

\subsubsection{\texorpdfstring{$R$}{R}--symmetric second quantization of standards pairs and modular 
theory}
In this section we are interested in the construction of 
one-dimensional Borchers triples 
from a standard pair $(H,T^1)$ on $\H$. It turns out that for all
$R\in\S(H,T^1)$ it is possible to construct a one-dimensional Borchers triple
on the ``twisted Fock space'' $\F_{\H,R}$.

Before we turn to the von Neumann algebras we first need commutation 
relation of the Segal field $\phi(f)$ with the ``reflected Segal field''
$J\phi(f)J$. One can think of $\phi(f)$ for $f\in T^1(a)H$ as a field 
localized in a right half-ray $\RR_++a$ and of  $\phi'(g):=J\phi(J_Hg)J$
as a field localized in the left half-ray $\RR_-+b$ for 
$g\in T^1(b)H'$.
\begin{lemma}\label{lm:FieldLocality}
Let $(H,T^1)$ be a standard pair and 
$R\in \S(H,T^1)$ two-particle scattering operator, $\phi(f)$ the 
operator on $\D \subset \F_{\H,R}$ defined above and $J=\hat\Gamma(J_H)$.
Then for $f,g\in H$ the commutator $[J\phi(g)J,\phi(f)]$ vanishes on $\D$.
\end{lemma}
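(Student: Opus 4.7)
The plan is to verify $[J\phi(g)J,\phi(f)]\Psi=0$ on an arbitrary $\Psi\in P_R\H^{\otimes n}$ by direct computation, reducing the $n$-particle commutator to the half-line locality axiom~(5) of Definition~\ref{def:TwoParticleScatteringOperator} via its $n$-particle strengthening already provided by Lemma~\ref{lem:WedgeLocality}.

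First I would unpack $J\phi(g)J$ explicitly. Since $J=\hat\Gamma(J_H)$ acts on $P_R\H^{\otimes n}$ as $J_H^{\otimes n}F_{1\cdots n}$ and reverses the order of tensor factors, the operator $Ja(g)^\ast J$ creates $J_H g$ at the rightmost position while $Ja(g)J$ annihilates the rightmost factor against $J_H g$; schematically $Ja(g)^\ast J\,\Psi=\sqrt{n+1}\,P_R(\Psi\otimes J_H g)$. Thus $J\phi(g)J$ is the ``right-localized'' analogue of a Segal field built from $J_H g\in H'$, so the lemma is exactly the wedge-local commutativity of a left-localized field with a right-localized field.

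Next I would expand $[J\phi(g)J,\phi(f)]$ into the four pieces coming from pairs of raising and lowering operators. The creation-creation term $[Ja(g)^\ast J,a(f)^\ast]$ vanishes because inserting $f$ on the left and $J_H g$ on the right, in either order, yields the same $P_R$-symmetrized $(n+2)$-particle vector: the intermediate $R$-crossings of $f$ past $J_H g$ are absorbed by the $R$-symmetric projection thanks to the Yang--Baxter relation. The annihilation-annihilation piece vanishes similarly, or by taking adjoints. The two cross terms $[Ja(g)^\ast J,a(f)]$ and $[Ja(g)J,a(f)^\ast]$ carry the content: a careful bookkeeping of the braidings arising from commuting $b(f)^\ast$ past $Jb(g)^\ast J$ through $P_R$ produces exactly the chain
\begin{align*}
\tilde R \;=\; R_{1,n+1}R_{1,n}\cdots R_{1,2}
\end{align*}
of Lemma~\ref{lem:WedgeLocality}, paired with $f\in H$ and $J_H g\in H'$. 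The two cross contributions are then respectively $A^{\tilde R}_{f,J_H g}$ and its adjoint, and Lemma~\ref{lem:WedgeLocality} asserts precisely that $A^{\tilde R}_{f,J_H g}$ is self-adjoint, so the sum cancels. Since this holds on every $P_R\H^{\otimes n}$, the commutator vanishes on all of $\D$.

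The main obstacle is the combinatorial tracking of the $R$-matrix string that arises when pushing $f$ past all $n$ intermediate particles to meet $J_H g$ at position $n+1$; one must verify that the resulting chain is exactly the $\tilde R$ of Lemma~\ref{lem:WedgeLocality}, which relies on the Yang--Baxter relation to rewrite alternative orderings of $R$-factors. Once this identification is secured the proof reduces to the self-adjointness of $A^{\tilde R}_{f,J_H g}$, which itself is a consequence of the single two-particle half-line locality axiom. This is the natural multi-component generalization of the scalar wedge-locality computation of~\cite{Lechner03,Lechner08}, with the scalar two-particle function $S_2$ replaced by the matrix-valued $R$.
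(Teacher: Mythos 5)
Your proposal is correct and follows essentially the same route as the paper: split the commutator into the four creation/annihilation pieces, dispose of the like-type pieces by the $R$-symmetrization (equivalently by taking adjoints of the elementary slot-counting identity), and reduce the cross terms to $A^{\tilde R}_{f,J_Hg}-(A^{\tilde R}_{f,J_Hg})^\ast$ with $\tilde R=R_{1,n+1}\cdots R_{12}$, which vanishes by the self-adjointness supplied by Lemma~\ref{lem:WedgeLocality}. The only discrepancy is notational: in the paper $a(f)=P_Rb(f)P_R$ is the \emph{creation} operator and $a(f)^\ast$ the annihilation operator, so your formula $Ja(g)^\ast J\Psi=\sqrt{n+1}\,P_R(\Psi\otimes J_Hg)$ describes $Ja(g)J$ in the paper's convention; this does not affect the argument.
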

\begin{proof}
Note that $\<h|_1$ and $\<h|_n$, operators on $\F^\Sigma_\H$, preserve
$P_R \H^{\otimes n}$ because
$P_R \H^{\otimes n}$ is characterized by $R$-symmetry (see Section \ref{R-symmetric})
and $\<h|_1$ and $\<h|_n$ do not affect the decomposition of a permutation into transpositions.
For $\Psi_n\in P_R\H^{\otimes n}$ we get
\begin{align*}
J a(f)^\ast J \Psi_n &=  J a(f)^\ast
    F_{1\cdots n}     J_H^{\otimes n} \Psi_n\\
&=\sqrt{n}\cdot F_{1\cdots (n-1)} J_H^{\otimes(n-1)} \ri f_1 
F_{1\cdots n}     J_H^{\otimes n} \Psi_n\\
&=\sqrt{n}\cdot F_{1\cdots (n-1)} J_H^{\otimes(n-1)}        
    F_{1\cdots (n-1)}     J_H^{\otimes (n-1)} \ri {J_H f}_n  \Psi_n\\
&=\sqrt{n} \ri {J_Hf}_n  \Psi_n \,\text.
\end{align*}
Therefore, we have
\begin{align*}
\left[Ja(g)^\ast J,a(f)^\ast\right]\Psi_n &= \sqrt{n}\left(Ja(g)^\ast J \langle f|_1- a(f)\langle J_H g|_n\right)\Psi_n\\
&= \sqrt{(n-1)n} \left( \langle J_H g|_{n-1}\langle f|_1 - \langle f|_1\langle J_H g|_n\right)\Psi_n\\
&= \sqrt{(n-1)n} \left( \langle f|_1 \langle J_H g|_{n}- \langle f|_1\langle J_H g|_n\right)\Psi_n\\
&=0
\end{align*}
and also $[Ja(g)J,a(f)]=-[Ja(g)^\ast J,a(f)^\ast]^\ast =0$ on $\D$.
To calculate the mixed commutator, 
we first note that (c.f\! \cite[Lemma 4.1.2]{Lechner06})
\begin{align*}
	P_{R}\restriction\H\otimes P_{R}\H^{\otimes n} =\frac1{n+1}\sum_{i=1}^{n+1} X_{1i}
\end{align*}
holds, where $X_{11} = \1$ by convention and $X_{1i}:=D_{n+1}(\tau_{i-1}\cdots \tau_1)\equiv\Phi_{i-1,i}\Phi_{i-2,i-1}\cdots \Phi_{12}=F_{i-1,i}\cdots F_{12}R_{1i}R_{1,i-1}\cdots R_{12}$. 
In other words, this amounts to $R$-symmetrizing the first component since the rest is already $R$-symmetric.
Therefore, the creation operator
acts on $\Psi_n\in P_R\H^{\otimes n}$ by 
$a(f)\Psi_n =  \frac{1}{\sqrt{n+1}}\sum_{i=1}^{n+1} X_{1i} (f\otimes \Psi_n)$ and we calculate:
\begin{align*}
Ja(g)^\ast Ja(f)\Psi_n&= Ja(g)^\ast J \frac1{\sqrt{n+1}} \sum_{i=1}^{n+1} X_{1i}\left(f\otimes \Psi_n\right)\\
&=   \sum_{i=1}^{n+1} \langle J_H g |_{n+1} X_{1i}\left(f\otimes \Psi_n\right)\,\text,\\
a(f)Ja(g)^\ast J \Psi_n
&= a(f) \sqrt{n} \langle J_H g|_n\Psi_n\\
&=  \sum_{i=1}^n X_{1i} \left(f\otimes\left( \langle J_Hg|_n\Psi_n\right)\right)\\
&=  \sum_{i=1}^n \langle J_H g|_{n+1} X_{1i} \left(f\otimes \Psi_n\right)\,\text,\\
[Ja(g)^\ast J,a(f)]\Psi_n&= \langle J_H g|_{n+1} X_{1,n+1} \left(f\otimes \Psi_n\right)
\\
&= \ri{J_Hg}_1 R_{1,n+1}R_{1,n}\cdots R_{12} (f\otimes \Psi_n)\,\text.
\end{align*}
Finally, restricted to $P\H^{\otimes n}$ with $\tilde R= R_{1,n+1}R_{1,n}\cdots R_{12}$,
\begin{align*}
[J\phi(g)J,\phi(f)]
    &= [Ja(g)^\ast J,a(f)] + [Ja(g)J,a(f)^\ast]\\
    &= [Ja(g)^\ast J,a(f)] - [Ja(g)^\ast J,a(f)]^\ast\\
    &\equiv A^{\tilde R}_{f,J_H g}-(A^{\tilde R}_{f,J_H g})^\ast\\
    &=0
\end{align*} 
holds for all $f,g\in H$ because of Lemma \ref{lem:WedgeLocality}.
\end{proof}
\begin{proposition}\label{pr:VNLocality}
Let $(H,T^1)$ be a standard pair with finite multiplicity 
on $\H$ and $R\in \S(H,T^1)$, then
for the von Neumann algebra $\M_R(H)=\{\ee^{\ima \phi(f)}:f\in H\}''$ on $\F_{\H,R}$ 
it holds that:
\begin{enumerate}[ {(}1{)} ]
\item \label{item:BorchersProp}
    $T(t) \M_R(H)T(-t)\subset \M_R(H)$ for $t\geq 0$, where $T(t)=\Gamma(T^1(t))$.
\item \label{item:Standardness}
    $\Omega\in\F_{\H,R}$ is cyclic and separating for $\M_R(H)$.
\item $\Delta^{\ima t}_{(\M_R(H),\Omega)} = \Gamma(\Delta_{H}^{\ima t})$
    and $J_{(\M_R(H),\Omega)}=\hat\Gamma(J_{H})$.
\item $\Omega$ is up to phase unique translation invariant vector in $\F_{\H,R}$.
\end{enumerate}
\end{proposition}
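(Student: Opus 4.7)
The four claims split naturally into ``easy'' parts (items (\ref{item:BorchersProp}) and (4)) which reduce to covariance and a spectral argument, and ``hard'' parts (item (\ref{item:Standardness}) and the modular identification in (3)) which rely crucially on Lemma \ref{lm:FieldLocality}. My overall strategy is to first set up the candidate modular objects $\Delta_0^{\ima t} := \Gamma(\Delta_H^{\ima t})$ and $J_0 := \hat\Gamma(J_H)$, verify the geometric properties these would have \emph{if} they were the modular data, and then use the uniqueness of the Tomita--Takesaki decomposition to conclude.

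For (\ref{item:BorchersProp}), since $R\in\S(H,T^1)$ we have $[T^1(t)\otimes T^1(t),R]=0$, so $\Gamma(T^1(t))$ commutes with $P_R$ and acts on $\F_{\H,R}$. The covariance of $\phi$ from the preceding lemma gives $\Gamma(T^1(t))\phi(f)\Gamma(T^1(-t))=\phi(T^1(t)f)$; the standard-pair property $T^1(t)H\subset H$ for $t\geq 0$ then yields $T(t)\M_R(H)T(-t)\subset \M_R(H)$. Cyclicity of $\Omega$ for $\M_R(H)$ is immediate from Proposition \ref{prop:CSR}(4) since $H$ is cyclic in $\H$. Item (4) follows from the positive-energy assumption on the standard pair: the generator $P$ of $T^1$ has no point spectrum at $0$, hence neither does the generator of $T^1(t)^{\otimes n}$ on $\H^{\otimes n}$ for $n\geq 1$; thus $\Omega$ is the unique (up to phase) vector in $\F_\H^\Sigma$ fixed by all $\Gamma(T^1(t))$, and a fortiori in $\F_{\H,R}$.

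For the separating part of (\ref{item:Standardness}) I would use Lemma \ref{lm:FieldLocality} as a ``duality input'': it asserts $[J_0\phi(g)J_0,\phi(f)]=0$ on $\D$ for all $f,g\in H$, where $J_0=\hat\Gamma(J_H)$ is an antiunitary involution fixing $\Omega$. An argument as in Proposition \ref{prop:CSR} (essentially spectral truncation to produce bounded elements of the respective algebras) upgrades this to the inclusion $J_0\,\M_R(H)\,J_0 \subset \M_R(H)'$. Since $J_0\Omega=\Omega$ and $\Omega$ is cyclic for $\M_R(H)$, it is also cyclic for $J_0\M_R(H)J_0\subset\M_R(H)'$, hence separating for $\M_R(H)$. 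This proves (\ref{item:Standardness}).

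The main obstacle is item (3). The plan is to show that the closed antilinear operator $S_0:=J_0\Gamma(\Delta_H^{1/2})$ satisfies $S_0(A\Omega)=A^\ast\Omega$ on a core of the Tomita operator $S_{\M_R(H),\Omega}$; by uniqueness of the polar decomposition this forces $J=J_0$ and $\Delta^{\ima t}=\Delta_0^{\ima t}$. A natural core is spanned by vectors of the form $\ee^{\ima\phi(f_1)}\cdots\ee^{\ima\phi(f_k)}\Omega$ with $f_j\in H$. For such vectors, one uses the one-particle identities $\Delta_H^{1/2}f=J_Hf$ and $S_Hf=f$ for $f\in H$, together with the covariance $\Gamma(\Delta_H^{\ima t})\phi(f)\Gamma(\Delta_H^{-\ima t})=\phi(\Delta_H^{\ima t}f)$, to reduce the verification to a second-quantized version of $S_Hf=f$. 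The delicate point is the analytic continuation from the unitary group $\Gamma(\Delta_H^{\ima t})$ to $\Gamma(\Delta_H^{1/2})$ acting on these entire analytic vectors for $\phi$; the Nelson-type bounds from Lemma \ref{lem:ParticleBounds} make $\phi(f)$ an analytic vector of sufficient class, so this step works uniformly in the particle number. Finally, to rule out the possibility that $S_0$ differs from $S$ by a unitary in $\M_R(H)\cap \M_R(H)'$, one invokes standardness (cyclicity and separatingness, already proved) and the fact that $J_0$ already implements an antiunitary mapping $\M_R(H)$ into $\M_R(H)'$, which pins down the modular involution. This completes (3).
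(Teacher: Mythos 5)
Items (\ref{item:BorchersProp}) and (4) are fine, and your overall architecture (candidate modular objects $\Gamma(\Delta_H^{\ima t})$, $\hat\Gamma(J_H)$, then uniqueness of the Tomita operator) matches the route the paper takes via \cite[Proposition 3.1]{BL04}.

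The genuine gap is in your proof of item (\ref{item:Standardness}), at the sentence ``An argument as in Proposition \ref{prop:CSR} (essentially spectral truncation\dots) upgrades this to the inclusion $J_0\M_R(H)J_0\subset\M_R(H)'$.'' Lemma \ref{lm:FieldLocality} only gives $[J\phi(g)J,\phi(f)]=0$ on the dense invariant domain $\D$; for \emph{unbounded} self-adjoint operators, commutation on a common core does \emph{not} imply strong commutation (commutation of the spectral projections, equivalently of the unitaries $\ee^{\ima\phi(f)}$ and $\ee^{\ima J\phi(g)J}$) --- this is exactly the content of Nelson's counterexample. The spectral-truncation device of Proposition \ref{prop:CSR} is used there only to approximate $\phi(f_1)\cdots\phi(f_n)\Omega$ by vectors in $\M\Omega$ (cyclicity); it produces bounded operators $\phi(f_k)E_k(t)$ that need not lie in $\M_R(H)$ nor commute with anything in $J\M_R(H)J$, so it cannot deliver the commutant inclusion. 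The paper closes this hole with a genuinely analytic input: energy bounds. One introduces $P_0=\dd\Gamma(P_1+1/P_1)\geq 2N$, uses Lemma \ref{lem:ParticleBounds} and finite multiplicity to get $\|(\1+P_0)^{-1/2}\phi(f)\|<\infty$ together with control of the commutators $[P_0,\phi(h)]=\phi(\partial_0 h)$ (and the same for $J\phi(g)J$), and then invokes the Driessler--Fr\"ohlich commutator theorem \cite{DF77} to conclude that the exponentials commute; only then does $[\M,\M_2]=\{0\}$ and hence the separating property follow. Without some such self-adjointness/commutator criterion your step fails as written.

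A secondary, smaller point: in item (3) your verification of $S_0(A\Omega)=A^\ast\Omega$ on products of Weyl-type unitaries again needs the bounded-operator commutation just discussed (to know these products lie in $\M_R(H)$ and that $J_0\M_R(H)J_0\subset\M_R(H)'$ so the $S_0\subset S\subset S_0$ sandwich closes); once the energy-bound argument is in place, the rest of your sketch for (3) is the standard one and is what the paper delegates to \cite[Proposition 3.1]{BL04}.
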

\begin{proof}
(\ref{item:BorchersProp}): This follows from the inclusion of one-particle spaces.

(\ref{item:Standardness}): 
We define $\M_2:=\{\ee^{\ima J\phi(f)J}: f\in H\}$.
Analogously to the case of $\M=\M_R(H)$, it can be shown that $\Omega$ is cyclic for $\M_2$,
so that $\Omega$ is separating for $\M$ 
can be shown by proving $[\M,\M_2]=\{0\}$.

To show that $\M$ and $\M_2$ commute we need to 
use energy bounds. Let $P_0=\dd\Gamma(P_1+1/P_1)\geq 2$ 
with domain $\D_0$ be
the generator of $\Gamma(\ee^{\ima  t (P_1+1/P_1)})$.
We get $P_0 \geq 2N$.
We will see in Section \ref{projection} (only for the irreducible case, but
reducible cases are just parallel) that $P_1$ and $1/P_1$ can be
identified with the generators of positive and negative lightlike
translations in a massive representation. Hence $P_1 + 1/P_1$ is
the generator of the timelike translations.
Real Schwartz test functions with support in $W_\R$ are mapped densely into $H$
as we will see in Section \ref{projection}.
We get bounds from the proof of Lemma \ref{lem:ParticleBounds} and because 
the multiplicity is finite, it holds that 
$\|(1+P_0)^{-\frac12}\phi(f)\| < \infty$ on $\D_0$
and similar for the commutator $[P_0,\phi(h)] = \phi(\partial_0 h)$,
where $h$ is a test function with support in $W_\R$, $\partial_0 h$ is the timelike derivative
and $\phi(h)$ is defined through the mapping mentioned above,
and for $J\phi(h)J, [P_0, J\phi(h)J]$ (see also the argument in \cite[Proposition 3.1]{BL04}).
By the commutator theorem \cite{DF77} one can conclude 
that $\ee^{\ima \phi(h)}$ and $\ee^{\ima J\phi(g)J}$ 
commute for all such $h,g$ which by continuity 
implies that $\M$ and $\M_2$ commute.

The property of the modular operators (3) is proved as in
\cite[Proposition 3.1]{BL04} and $\Omega$ is the unique translation invariant vector,
because we assume that standard pairs are non-degenerate.
\end{proof}

\begin{corollary} For each standard pair (with finite multiplicity in the reducible case)
$(H,T^1)$ and $R\in\S(H,T^1)$ there exists a one-dimensional Borchers triple $(\M_R(H), T, \Q)$ 
and therefore a half-ray local dilation translation covariant net on $\RR$.
\end{corollary}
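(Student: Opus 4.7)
The statement is obtained by assembling the content of Proposition \ref{pr:VNLocality}. Set $\M:=\M_R(H)$, $T(t):=\Gamma(T^1(t))$ on the $R$-symmetric Fock space $\F_{\H,R}$, and let $\Omega\in\F_{\H,R}$ be the Fock vacuum. The generator of $T$ is $\dd\Gamma(P^1)$, which is positive since the generator $P^1$ of $T^1$ is positive by the defining property of a non-degenerate standard pair; thus $T$ is a strongly continuous unitary representation of $\RR$ with positive generator. Item (\ref{item:BorchersProp}) of Proposition \ref{pr:VNLocality} gives $\Ad T(t)(\M)\subset\M$ for $t\geq 0$, and item (\ref{item:Standardness}) gives that $\Omega$ is cyclic and separating for $\M$. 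These are exactly the three defining properties of a one-dimensional Borchers triple recalled in Section \ref{half-line}, so $(\M,T,\Omega)$ is such a triple.

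To produce the associated half-ray local, dilation-translation covariant net on $\RR$, define for every $a\in\RR$
\begin{align*}
\A(\RR_+ + a) &:= \Ad T(a)(\M), & \A(\RR_- + a) &:= \Ad T(a)(\M').
\end{align*}
Translation covariance is built into the definition. Half-ray locality, namely $\A(\RR_+ + a)\subset \A(\RR_- + b)'$ whenever $a\geq b$, reduces via $\A(\RR_- + b)'=\Ad T(b)(\M'')=\Ad T(b)(\M)$ to the inclusion $\Ad T(a-b)(\M)\subset\M$, which is precisely the first Borchers-triple axiom.

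Dilation covariance comes from the modular group. By item (3) of Proposition \ref{pr:VNLocality}, $\Delta^{\ima t}=\Gamma(\Delta_H^{\ima t})$, and in the Schr\"odinger parametrization $\Delta_H^{\ima s}$ acts as a translation in $Q=\ln P$, i.e., as a dilation on the momentum variable $P$. Borchers' theorem applied to $(\M,\Omega)$ and the positive-energy translation group $T$ then yields the affine commutation relation
\[
\Delta^{\ima t}T(a)\Delta^{-\ima t}=T(\ee^{-2\pi t}a),
\]
so that $T$ and $\{\Delta^{\ima t}\}$ jointly generate a unitary representation of the translation-dilation group under which the net is covariant. The whole corollary is essentially book-keeping, and there is no genuine obstacle once Proposition \ref{pr:VNLocality} is in hand.
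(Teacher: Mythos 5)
Your proposal is correct and follows the same route as the paper, which states this corollary without further argument precisely because it is immediate from Proposition \ref{pr:VNLocality}: items (1) and (2) plus positivity of $\dd\Gamma(P^1)$ give the one-dimensional Borchers triple, and the half-ray local, translation-dilation covariant net is the standard construction with the dilations supplied by the modular group via Borchers' commutation theorem (equivalently, by second-quantizing the one-particle relation $\Delta_H^{\ima t}T^1(a)\Delta_H^{-\ima t}=T^1(\ee^{-2\pi t}a)$). Your book-keeping of locality and covariance is accurate.
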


Special cases of such models were constructed in \cite{BLM11} 
and were proposed as scaling limits of two-dimensional models with 
factorizing S-matrices. We will present a direct relation to massive models 
in two dimensions via a class of Longo-Witten unitaries like in Section \ref{holography},
in other words via the idea of lightfront holography.

\begin{remark}\label{rem:gauge}
Let us note that each $V_1\in \E(H,T^1)$ with $[V_1\otimes V_1,R]=0$ gives
a Longo-Witten unitary $V=\Gamma(V_1)$ for the 
one-dimensional Borchers triple $(\M_R(H),\Gamma(T^1),\Omega)$.
An \textbf{internal symmetry} of a Borchers triple $(\M,T,\Omega)$ is a unitary $U$ 
leaving $\Omega$ invariant with $[U,T(t)]=0$ and $\Ad U(\M)=\M$.
So as a special case, we get internal symmetries by 
second quantization of elements in $\{V_1\in \E(H,T^1): V_1H=H, [V_1\otimes V_1,R]=0\}$. 
Using the characterization of Longo-Witten unitaries in $\E(H,T^1)$
in \cite{LW11}
by matrices of analytic function, we get that these are exactly constant 
matrices in $\U(\CC^n)$ commuting with $R$ in the above sense where $n$ is the multiplicity of $H$.
Therefore, we can associate with $(H,T^1,R)$ a compact group $G\subset \mathrm U(n)$ acting 
by internal symmetries.
\end{remark}

\begin{remark} 
	\label{rem:masses}
	Let us define an operator $M$ as 
	$(Mf)^\alpha(q) = m^\alpha f^\alpha(q)$ with 
	constants $m^\alpha=m^{\bar\alpha}>0$ and define 
	${T^1}'(t)=\ee^{\ima tM^2P^{-1}}$, i.e.\ 
	$({T^1}'(t)f)^\alpha(q)= \ee^{\ima t (m^\alpha)^2\ee^{-q}} f^\alpha(q)$.
	As in \cite{LW11} we get ${T^1}'(t)\in\E(H,T^1)$.
	 We want that 
	$[{T^1}'(t)\otimes {T^1}'(t),R]=0$, so that $T'(t)=\Gamma({T^1}'(t))$ is well-defined and therefore defines a Longo-Witten unitary for $t\leq0$.
	In the notation with matrix-valued functions, this is equivalent to $\ee^{-q_1}m_\g^2 + \ee^{-q_2}m_\d^2 = \ee^{-q_1}m_\a^2 + \ee^{-q_2}n_\b^2$
	if $R^{\a\b}_{\g\d}(q_1-q_2) \neq 0$ and it is further equivalent to that
	$m^\a\neq m^\g$ implies that $R^{\a\bullet}_{\g\bullet}(q)=0$ and
	$m^\b\neq m^\d$ implies that $R^{\bullet\b}_{\bullet\d}(q)=0$ for almost all $q\in\RR$, respectively.
	
	As in Remark \ref{rem:gauge} we can associate a compact group $G$ with $(H,T^1,{T^1}',R)$
	by asking besides $[T^1,V_1]=0$ and $V_1H=H$ that also $[{T^1}',V_1]=0$ holds.
\end{remark}
\subsection{Construction of massless wedge-local models from scattering operators}\label{massless-wedgelocality}

Given two standard pairs $(H_\pm,T^1_\pm)$ on $\H_\pm$, respectively,
and two operators $R^\pm \in\S(H_\pm,T^1_\pm)$ we obtain two one-dimensional Borchers triples
$(\M_\pm,T_\pm,\Omega_\pm)$ by the construction of Section \ref{sec:2ndQBP}.

We show that every $S\in\S(R^+,R^-)$ gives rise to a wave-scattering matrix
$\tilde S$ as in Proposition \ref{pr:massless-structure}.

Let us define the operator $\tilde S=\bigoplus_{m,n}  S^{(m,n)}$ on full Fock space 
$\F^\Sigma_{\H_+}\otimes\F^\Sigma_{\H_-}$ by 
\begin{align*}
	\B(\CC\Omega_+ \otimes \H_-^{\otimes n})\ni S^{(0,n)} &= \1\\
	\B(\H_+^{\otimes m}\otimes \CC\Omega_-)\ni S^{(m,0)} &= \1\\
	\B(\H_+^{\otimes m}\otimes \H_-^{\otimes n})\ni S^{(m,n)}	
	&=S_{1|1}S_{2|1}\cdots S_{m|1}S_{1|2}\cdots S_{m|n}
\end{align*}
where we denote for $1\leq i\leq m$ and $1\leq j\leq n$ by $S_{i|j}\equiv
S^{m|n}_{i|j}$ the operator on $\H_+^{\otimes m}\otimes \H_-^{\otimes n}$
given by $S_{i,j+m}$ (we omit $m|n$ when no confusion arises).
We will use notation as $\<f|_{1|}$, $R^+_{ij|}$ and $R^-_{|ij}$ as well.
Namely if one side of $|$ is empty, then the operator acts trivially on
that side.
\begin{lemma}
	\label{lem:mixedYBE} 
	Let $(H_\pm,T_\pm^1)$ be two standard pairs on $\H_\pm$ respectively and $R^\pm \in \S(\H_\pm,T_\pm^1)$.
	Given an operator $S$ 
	fulfilling the properties (1) and (2) of Definition 
	\ref{def:SLRabstract} and let $\tilde S$ be defined as above.
	Then the following hold.
	\begin{itemize}
		\item $[\tilde S, P_{R^+}\otimes \1_{\F_{\H_-}^\Sigma}]=0$ if and only if the left YBE holds;
		\item $[\tilde S, \1_{\F_{\H_+}^\Sigma}\otimes P_{R^-}]=0$ if and only if the right YBE holds.
	
	\end{itemize}

\end{lemma}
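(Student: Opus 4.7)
The plan is to reduce the commutation of $\tilde S$ with the projection $P_{R^\pm}$ to the commutation with the single braid generators $\Phi^\pm_{k,k+1}$ that generate the representation $D_n$ of $\mathfrak{S}_n$ underlying $P_{R^\pm}$. Commutation with each generator immediately implies commutation with $P_{R^\pm}$, since the latter is an average over the represented symmetric group. Conversely, restricting to the minimal tensor component---$\H_+^{\otimes 2}\otimes \H_-$ for the left statement and $\H_+\otimes \H_-^{\otimes 2}$ for the right---the projection reduces to $\tfrac{1}{2}(\1+\Phi^\pm_{12})$, so commutation with $P_{R^\pm}$ forces commutation with the single generator and the equivalence with the YBE can be checked by direct computation.

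For the forward direction of the left case I would use the factorization $\tilde S = A_1 A_2\cdots A_n$ with $A_j = S_{1|j}S_{2|j}\cdots S_{m|j}$, which is exactly the order prescribed by the definition. Within each $A_j$, only the consecutive pair $S_{k|j}S_{k+1|j}$ fails to commute with $R^+_{k,k+1|}$, and the left mixed YBE supplies the exchange $R^+_{k,k+1|}S_{k|j}S_{k+1|j} = S_{k+1|j}S_{k|j}R^+_{k,k+1|}$. Pushing $R^+_{k,k+1|}$ from left to right through $A_j$ replaces $A_j$ by its $F_{k,k+1|}$-conjugate, and iterating over $j$ while telescoping via $F_{k,k+1|}^2=\1$ yields the clean identity $R^+_{k,k+1|}\tilde S = F_{k,k+1|}\tilde S F_{k,k+1|}R^+_{k,k+1|}$, equivalently $[\tilde S, \Phi^+_{k,k+1|}]=0$. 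For the converse I specialize to $m=2, n=1$: with $\tilde S = S_{1|1}S_{2|1}$, unpacking $[\tilde S,\Phi^+_{12|}]=0$ via the conjugations $F_{12|}S_{1|1}F_{12|}=S_{2|1}$ and $F_{12|}S_{2|1}F_{12|}=S_{1|1}$ reduces the equation to $R^+_{12|}S_{1|1}S_{2|1} = S_{2|1}S_{1|1}R^+_{12|}$, which is precisely the left mixed YBE.

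The right case is handled symmetrically: I switch to the alternative factorization $\tilde S = C_1 C_2\cdots C_m$ with $C_i = S_{i|1}\cdots S_{i|n}$, which equals the original product because $S_{i|j}$ and $S_{i'|j'}$ commute whenever $i\ne i'$ and $j\ne j'$. I then push the right-side braid generator through each $C_i$ using the right mixed YBE (the analogue of the left exchange identity but with $R^-$ on the right of a pair of $S$'s), and for the converse restrict to $m=1, n=2$. The main technical obstacle is the careful telescoping: one must track how the $F$-conjugations accumulate and cancel as the generator is iteratively passed through the product, and confirm that the final rearrangement commutes $\tilde S$ with the specific braid involution defining $P_{R^\pm}$.
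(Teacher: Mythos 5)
Your overall strategy---reduce commutation with $P_{R^\pm}$ to commutation with the braid generators $\Phi^\pm_{k,k+1}$, prove the forward direction by pushing the $R$-factor through the ordered product defining $\tilde S$, and prove the converse by restricting to the lowest nontrivial sector---is exactly the paper's, and your treatment of the left statement (both directions) is correct and in fact more explicit than the paper's own.

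The gap is in the right statement, at precisely the step you defer as ``careful telescoping.'' Pushing $R^-_{|k,k+1}$ into $C_i=S_{i|1}\cdots S_{i|n}$ \emph{from the right} via the right mixed YBE $S_{i|k}S_{i|k+1}R^-_{|k,k+1}=R^-_{|k,k+1}S_{i|k+1}S_{i|k}$ and telescoping yields $\tilde S\,R^-_{|k,k+1}=R^-_{|k,k+1}F_{|k,k+1}\tilde S F_{|k,k+1}$, i.e.\ $[\tilde S,\,R^-_{|k,k+1}F_{|k,k+1}]=0$. But $P_{R^-}$ is built from $\Phi^-_{|k,k+1}=F_{|k,k+1}R^-_{|k,k+1}$, and $R^-F\neq FR^-$ unless $R^-$ is self-adjoint: they are conjugate involutions, $R^-F=F\Phi^-F$, generating in general a different representation of $\mathfrak{S}_n$ and hence a different projection. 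Concretely, on $\H_+\otimes\H_-^{\otimes 2}$ the relation $[S_{1|1}S_{1|2},\Phi^-_{|12}]=0$ is equivalent to $R^-_{|12}S_{1|1}S_{1|2}=S_{1|2}S_{1|1}R^-_{|12}$, whereas the stated right mixed YBE reads $R^-_{|12}S_{1|2}S_{1|1}=S_{1|1}S_{1|2}R^-_{|12}$; the former is the latter with $R^-$ replaced by $(R^-)^\ast$, and neither unitarity nor the reflection property $R^-_{21}=(R^-)^\ast$ converts one into the other (conjugating the YBE by $F_{|12}$ and using the reflection property just reproduces the YBE). So the left/right symmetry you invoke is broken by the orientation of the right YBE relative to the increasing order of the $\H_-$ indices in $S^{(m,n)}$: what the stated YBE gives directly is that $\Phi^-_{|k,k+1}$ commutes with the \emph{reversed} pair $S_{i|k+1}S_{i|k}$. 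To close the argument one must either order the $\H_-$ factors decreasingly in the product defining $\tilde S$ or apply the right YBE in the orientation $R^-_{23}S_{13}S_{12}=S_{12}S_{13}R^-_{23}$ that matches the product order. The paper disposes of the right case with the single word ``similarly,'' so you are in good company, but the step you flag as a bookkeeping obstacle is where an actual argument (or a fix of conventions) is required, not just care.
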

\begin{proof} 
Fix $m\geq 2$. The left mixed YBE 
$R^+_{12}S_{13}S_{23}=S_{23}S_{13}R^+_{12}$ implies on $\H_+^{\otimes m}
\otimes \H_-^{\otimes n}$ the equality
$\Phi^+_{i,i+1} S_{i|\bullet}S_{i+1|\bullet}=S_{i|\bullet}S_{i+1|\bullet}\Phi^+_{i,i+1}$ for $1\leq i\leq m-1$, where $\Phi^+_{ij}=F_{ij}R^+_{ij}$. 
Furthermore $[S_{k|\bullet},\Phi_{i,i+1}]=0$ holds trivially for $1\leq k\leq m$ with $k\notin \{i,i+1\}$.
Because the $P^{(m)}_{R^+}:=P_{R^+}\restriction \H_+^{\otimes m}$ is given by the linear combination
of products of $\Phi_{i,i+1}$ and because $S_{i+1|\bullet}$ always appears next to
$S_{i|\bullet}$ in the definition of $S^{(m,n)}$,
we can conclude that $S^{(m,n)}$ commutes with $P^{(m)}_{R^+}\otimes \1_{\H_-^{\otimes n}}$ for all $n$.
For the second statement, note that we can rewrite $S^{(m,n)}$ as follows:
\[
 S^{(m,n)} =S_{1|1}S_{1|2}\cdots\cdots S_{1|n}S_{2|1}\cdots S_{m|n},
\]
because one has only to exchange the orders of elements which are acting on different tensor components.
Now one proves from the right mixed YBE that $\1_{\H_+^{\otimes m}}\otimes P^{(n)}_{R^-}$ commutes 
with $S^{(m,n)}$ for $m\geq1,n\geq 2$.

The converse holds because the left and right mixed YBE are equivalent to the commutation of 
$\tilde S$ with $P_{R^+}\otimes  \1_{\F_{\H_-}}$ and $\1_{\F_{\H_+}}\otimes P_{R^-}$ restricted to 
$\H_+\otimes \H_+\otimes \H_-$ and $\H_+\otimes\H_-\otimes\H_-$, respectively.
\end{proof}
Therefore, $\tilde S$ canonically restricts to an operator on $\F_{\H_+,R^+}\otimes \F_{\H_-,R^-}$
if and only if the left and right YBE are fulfilled. By abuse of notation we denote the 
restricted operator also by $\tilde S$.

\begin{lemma}
	\label{lem:10.3.3}
	Let $(H_\pm,T_\pm^1)$ be two standard pairs on $\H_\pm$ respectively and $R^\pm \in \S(\H_\pm,T_\pm^1)$.
	Given an operator $S$ 
	fulfilling the properties (1)-(4) of Definition 
	\ref{def:SLRabstract}.

	If left locality holds, then for 
	$\tilde R^+ = R^+_{1,m+1|}\cdots R^+_{12|}S_{1|1}\cdots S_{1|n}$ and 
	the operator  
	$A^{\tilde R^+}_{f,J_{H_+}g}:\Psi_m\otimes \Phi_n\mapsto \langle J_{H_+} g|_1
	 \tilde R^+ (g\otimes \Psi_m\otimes \Phi_n)$
	on $\H_{+,m}\otimes \H_{-,n}$ is self-adjoint for all $f,g \in H_+$.
	
	If right locality holds, then 
	$\tilde R^- = R^-_{|1,n+1}\cdots R^-_{|12}S_{1|1}\cdots S_{m|1}$ and 
	$B^{\tilde R^+}_{f,J_{H_-}g}:\Psi_m\otimes \Phi_n\mapsto 
	\langle J_{H_-} g|_{|1} 
	\tilde R^- (\Psi_m\otimes f\otimes \Phi_n)$
	on $\H_{+,m}\otimes \H_{-,n}$ is self-adjoint for all $f,g \in H_-$.
\end{lemma}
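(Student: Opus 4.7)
The plan is to extend the argument of Lemma \ref{lem:WedgeLocality} to the mixed chain $\tilde R^+$, whose factors are of two types ($R^+$-type and $S$-type) but which all act on the first $\H_+$ tensor factor. Because the two assertions of the lemma differ only by interchanging the roles of $\pm$ and left/right locality, I would treat the first claim and obtain the second by the symmetric argument.

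First, I would partially disintegrate every factor of $\tilde R^+$ so as to place its non-scalar part on the first $\H_+$ factor. By the same calculation as in (\ref{eq:R_q}), each $R^+_{1,k|}$ becomes $\sum_{\beta,\delta}\int R^{+,\beta}_\delta(q)\otimes \dd E^+(q)^\beta_\delta$ with $\dd E^+$ acting on the $k$-th $\H_+$ factor, and each $S_{1|j}$, disintegrated over the $j$-th $\H_-$ factor, becomes $\sum_{\beta,\delta}\int W^\delta_\beta(q)\otimes \dd E^-(q)^\beta_\delta$ with $W^\delta_\beta(q)=\und{S}^{\slot\delta}_{\slot\beta}(Q_++q)$ as in the proof of Proposition \ref{prop:Smatrix}. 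Since the spectral measures on distinct tensor positions commute and are independent of the first $\H_+$ factor, $\tilde R^+$ decomposes as a multiple spectral integral of products of these operator-valued functions on the first $\H_+$ factor, tensored with mutually commuting spectral projections on the remaining positions.

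Second, I would invoke the operator form of the two locality conditions, both obtained in the proofs of the preceding propositions via Lemma \ref{lem:LW}: half-line locality of $R^+$ gives $R^{+,\beta}_\delta(q)\,S_{H_+}\subset S_{H_+}\,R^{+,\delta}_\beta(q)$ a.e., and left locality of $S$ gives $W^\delta_\beta(q)\,S_{H_+}\subset S_{H_+}\,W^\beta_\delta(q)$ a.e. Each individual factor therefore intertwines with $S_{H_+}$ by swapping its two indices, so the whole product intertwines with $S_{H_+}$ by swapping all index pairs simultaneously. Applying Lemma \ref{lem:LW} in the reverse direction and integrating against the commuting spectral measures, with $g'=J_{H_+}g\in H_+'$, yields the matrix-coefficient identity equivalent to $A^{\tilde R^+}_{f,J_{H_+}g}=(A^{\tilde R^+}_{f,J_{H_+}g})^\ast$ for all $f,g\in H_+$. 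The second statement follows by the symmetric argument after relocating the operator part onto the first $\H_-$ factor and using right locality of $S$ together with half-line locality of $R^-$; the analogous operator-valued function on $\H_-$ arising from $S_{i|1}$ is the $V^\gamma_\alpha(q)=\und{S}^{\gamma\slot}_{\alpha\slot}(q+Q_-)$ from the proof of Proposition \ref{prop:Smatrix}, which intertwines with $S_{H_-}$ in exactly the analogous manner.

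The main subtlety I expect is bookkeeping: one must verify that the two families of operator-valued functions, originating from partial disintegrations over disjoint sets of spectral measures, combine into a single spectral integral amenable to a one-variable intertwining with $S_{H_+}$. This works because all operator-valued functions act on the same first $\H_+$ factor while their spectral ``coordinates'' live on independent tensor positions and commute among themselves, reducing the multi-variable intertwining to the factor-by-factor single-variable cases already handled in Propositions \ref{prop:Rmatrix} and \ref{prop:Smatrix}.
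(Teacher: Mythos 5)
Your proposal is correct and follows essentially the same route as the paper: partial disintegration of each factor of $\tilde R^+$ onto the first $\H_+$ tensor slot, the intertwining relations $(R^+)^\beta_\delta(q)S_{H_+}\subset S_{H_+}(R^+)^\delta_\beta(q)$ and $W^\beta_\delta(q)S_{H_+}\subset S_{H_+}W^\delta_\beta(q)$ extracted from half-line and left locality via Lemma \ref{lem:LW}, propagation of the intertwining to the product, and a final application of Lemma \ref{lem:LW} as in Lemma \ref{lem:WedgeLocality}. The paper's proof is just a terser statement of the same argument, so no further comparison is needed.
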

\begin{proof}
The proof is analogous to the proof of Lemma \ref{lem:WedgeLocality}. 
For example for the left case we write
\begin{align*}
	R^+&=\sum_{\beta,\delta}\int (R^+)^\beta_{\delta}(q)\otimes \dd E(q)^\beta_\delta\,\text,&
	S&=\sum_{\beta,\delta}\int W^\beta_{\delta}(q)\otimes \dd E(q)^\beta_\delta
\end{align*}
and from left locality it holds that $W^\beta_{\delta}(q)S_{H_+}\subset S_{H_+} W^\delta_{\beta}(q)$. 
Together with $(R^+)^\beta_{\delta}(q)S_{H_+}\subset S_{H_+} (R^+)^\delta_{\beta}(q)$
it follows like in the above-mentioned proof that $A^{\tilde R^+}$ is self-adjoint.
\end{proof}

\begin{proposition}
	\label{pr:LRlocality}
	Let $(H_\pm,T^1_\pm)$ be two standard pairs, $R^\pm\in \S(H_\pm,T^1_\pm)$
	and the associated one-dimensional Borchers triples $(\M_\pm,T_\pm,\Omega_\pm)$. 
	Let $S$ be an operator fulfilling (1)-(4) of Definition \ref{def:SLRabstract}.
	And let $\tilde S$ be the operator on $\F_{\H_+,R^+}
	\otimes \F_{\H_-,R^-}$ associated with $S$ as above. Then
	\begin{itemize}
		\item $x'\otimes \1$ commutes with $\Ad \tilde S(x\otimes\1 )$ for $x'\in\M_{R^+}'$ and $x\in\M_{R^+}$
	if and only if $S$ satisfies left locality.
		\item $\Ad \tilde S(\1\otimes y)$ commutes with $\1\otimes y'$ for $y\in\M_{R^-}$ and $y'\in\M_{R^-}'$
	if and only if $S$ satisfies right locality.
	\end{itemize}	
\end{proposition}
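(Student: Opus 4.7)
I sketch the left-side statement; the right-side statement is obtained by an entirely parallel argument with $S^{(m,n)}$ read from the right.

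The first step is to reduce the claim to a statement about generators at the level of unbounded fields. By Proposition \ref{pr:VNLocality}, the modular conjugation of $(\M_{R^+},\Omega_+)$ is $J_+ := \hat\Gamma(J_{H_+})$, so $\M_{R^+}'$ is generated by $\{\ee^{\ima J_+\phi_+(f)J_+}: f \in H_+\}$. Arguing as in Proposition \ref{pr:VNLocality}(\ref{item:Standardness}) via the commutator theorem of \cite{DF77} (with the energy bounds supplied by Lemma \ref{lem:ParticleBounds} together with the fact that $\tilde S$ commutes with translations), strong commutativity of $\ee^{\ima J_+\phi_+(g)J_+}\otimes\1$ and $\Ad\tilde S(\ee^{\ima\phi_+(f)}\otimes\1)$ is equivalent to vanishing on the finite-particle domain of
\[
[\,J_+\phi_+(g)J_+\otimes \1,\ \tilde S(\phi_+(f)\otimes \1)\tilde S^\ast\,]
\]
for all $f,g \in H_+$.

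Next I would compute the twisted creation. Adding a new particle in the first left position converts $S^{(m,n)}$ into $S^{(m+1,n)}$; since for $i \geq 2$ and $j' \neq j$ the operators $S_{i|j}$ and $S_{1|j'}$ act on disjoint tensor factors and hence commute, one obtains the factorization $S^{(m+1,n)} = (S_{1|1}S_{1|2}\cdots S_{1|n})\cdot \widehat{S}^{(m,n)}$, where $\widehat{S}^{(m,n)}$ has all its left indices shifted up by one and therefore acts as $S^{(m,n)}$ on the last $m$ left particles and all right particles. Combining this with the identity $P_{R^+}^{(m+1)}\restriction\H_+\otimes P_{R^+}\H_+^{\otimes m} = \tfrac{1}{m+1}\sum_{i=1}^{m+1}X_{1i}$ used in the proof of Lemma \ref{lm:FieldLocality}, together with Lemma \ref{lem:mixedYBE} to pull $P_{R^+}^{(m+1)}\otimes\1$ past $\tilde S$, a direct calculation yields
\[
\tilde S(a_+(f)^\ast\otimes\1)\tilde S^\ast \big|_{P_{R^+}\H_+^{\otimes m}\otimes P_{R^-}\H_-^{\otimes n}} = \sqrt{m+1}\,(P_{R^+}^{(m+1)}\otimes\1)(S_{1|1}\cdots S_{1|n})(f\otimes\slot),
\]
a ``twisted'' creation whose extra factor records the scattering of the new left particle through all the right particles.

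Substituting this expression into the commutator above and repeating the bookkeeping of Lemma \ref{lm:FieldLocality} verbatim (in particular the creation-creation and annihilation-annihilation pieces vanish as there), one finds that on the $(m,n)$-sector the commutator collapses to $A^{\tilde R^+}_{f,J_{H_+}g} - (A^{\tilde R^+}_{f,J_{H_+}g})^\ast$, where $\tilde R^+ = R^+_{1,m+1|}R^+_{1,m|}\cdots R^+_{12|}\,S_{1|1}S_{1|2}\cdots S_{1|n}$ is exactly the operator of Lemma \ref{lem:10.3.3}. By that lemma, left locality of $S$ is precisely what makes $A^{\tilde R^+}_{f,J_{H_+}g}$ self-adjoint, so the commutator vanishes on every sector, proving the ``if'' direction. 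For the converse, I would specialize to the $(1,1)$-sector, where $\tilde R^+$ degenerates to $S$ itself, so the vanishing commutator is equivalent to self-adjointness of $A^S_{f,J_{H_+}g}$ for all $f,g \in H_+$, i.e.\ to left locality of $S$ in the sense of Definition \ref{def:SLRabstract}. The main technical obstacle will be the combinatorial bookkeeping in the middle paragraph: keeping track of how the $R^+$-symmetrizer of $m+1$ particles interacts with the row $S_{1|1}\cdots S_{1|n}$ and with the shifted block $\widehat{S}^{(m,n)}$, and verifying that the net effect is exactly the appearance of $\tilde R^+$; this reduction is controlled by the left mixed Yang-Baxter identity (already used in Lemma \ref{lem:mixedYBE}) and reduces the multi-particle identity to the two-particle content of Lemma \ref{lem:WedgeLocality}, which Lemma \ref{lem:10.3.3} generalizes.
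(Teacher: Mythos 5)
Your argument is correct and follows essentially the same route as the paper's proof: reduction to the field commutator via the energy bounds of Proposition \ref{pr:VNLocality}, the explicit computation of $\tilde S(a(f)\otimes\1)\tilde S^\ast$ as a twisted creation operator (your factorization $S^{(m+1,n)}=(S_{1|1}\cdots S_{1|n})\widehat S^{(m,n)}$ is exactly what the paper uses implicitly, together with Lemma \ref{lem:mixedYBE}), the collapse of the mixed commutator to $A^{\tilde R^+}_{f,J_{H_+}g}-(A^{\tilde R^+}_{f,J_{H_+}g})^\ast$, and Lemma \ref{lem:10.3.3} for the vanishing. The only slips are notational: in the paper's convention $a(f)$ (not $a(f)^\ast$) is the compressed creation operator, and for the converse the relevant input sector is $(m,n)=(0,1)$, where $\tilde R^+$ reduces to $S_{1|1}=S$; with $m=n=1$ you would instead get $R^+_{12}S_{1|1}$, which does not directly yield left locality.
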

\begin{proof} 
Let us assume left locality holds. We need to show that  
$[J \phi(g)J\otimes \1, \tilde S(\phi(f)  \otimes \1)\tilde S^\ast]=0$ 
holds, then the statement follows using the energy bounds as in Proposition \ref{pr:VNLocality}.

Let $\Psi_m\otimes \Phi_n \in
  P_{R^+}\H_+^{\otimes m} \otimes P_{R^-} \H_-^{\otimes n}$.
We calculate on the full tensor product space $\H_+^{\otimes m}\otimes \H_-^{\otimes n}$
\begin{align*}
	&\frac1{\sqrt{(n-1)n}}
		[Ja(g)^\ast J\otimes \1,
		\tilde S(a(f)^\ast\otimes \1)\tilde S^\ast](\Psi_m\otimes \Phi_n)\\
	&~=\frac1{\sqrt{n-1}}
		\left(
			(Ja(g)^\ast J\otimes \1 )\langle f|_{1|} S^\ast_{1|n}
			\cdots S_{1|1}^\ast	
			-\tilde S(a(f)^\ast\otimes \1)\tilde S^\ast 
			\langle J_{H_+} g|_{m|}
		\right)(\Psi_m\otimes \Phi_n)\\
	&~=\left(
		\langle J_{H_+}g|_{m-1|}\langle f|_{1|} 
		S^\ast_{1|n}\cdots S_{1|1}^\ast 	-\langle f|_{1|} S^{\ast}_{1|n}\cdots S^\ast_{1|1}
		\langle J_{H_+} g|_{m|}\right)(\Psi_m\otimes \Phi_n)\\
	&~=0,
\end{align*}
where we again used that $\langle h|_\bullet$ preserves
the $R^\pm$-symmetric Fock space (see Lemma \ref{lm:FieldLocality})
as do $\tilde S$ due to Lemma \ref{lem:mixedYBE}. Therefore, we get 
$[Ja(g)^\ast J\otimes \1,\tilde S(a(f)^\ast\otimes \1)\tilde S^\ast]=0$ and 
$[Ja(g)J\otimes \1,\tilde S(a(f)\otimes \1)\tilde S^\ast]
= 0$ by taking the adjoint on $\D$.
To compute mixed commutators we proceed as follows, noting that
$\tilde S$ commutes with $R^+$-symmetrization:
\begin{align*}
	&(Ja(g)^\ast J\otimes \1)
		\tilde S(a(f)\otimes \1)\tilde S^\ast(\Psi_m\otimes \Phi_n) =
	\\
	&\qquad=\frac1{\sqrt{m+1}}(Ja(g)^\ast J\otimes \1)
	\sum_{i=1}^{m+1} X_{1i}S_{1|1}\cdots S_{1|n}
	(f\otimes \Psi_m\otimes \Phi_n)
	\\
	&\qquad= \sum_{i=1}^{m+1}\langle J_{H_+} g|_{m+1|}
		 X_{1i}S_{1|1}\cdots S_{1|n}(f\otimes \Psi_m\otimes \Phi_n)
	\\
	& \tilde S(a(f)\otimes \1)\tilde S^\ast
		(Ja(g)^\ast J\otimes \1)(\Psi_m\otimes \Phi_n) =
	\\
	&\qquad=\sqrt{m}\tilde S(a(f)\otimes \1)\tilde S^\ast 
		\langle J_{H_+}g|_{m|}(\Psi_m\otimes \Phi_n) 
	\\
	&\qquad=\sum_{i=1}^{m} X_{1i}
		S_{1|1}\cdots S_{1|n} \langle J_{H_+} g|_{m+1|}
		(f\otimes \Psi_m\otimes \Phi_n)
	\\ 
	&[Ja(g)^\ast J\otimes \1,\tilde S(a(f)\otimes \1)\tilde S^\ast]
		(\Psi_m\otimes \Phi_n)
	\\
	&\qquad= \langle J_{H_+} g|_{m+1|}
		 X_{1,m+1}S_{1|1}\cdots S_{1|n}(f\otimes \Psi_m\otimes \Phi_n)
	\\
	&\qquad= \langle J_{H_+} g|_{m+1|}
		 F_{m,m+1}\cdots F_{12}R^+_{1,m+1}\cdots R^+_{12} S_{1|1}\cdots S_{1|n}(f\otimes \Psi_m\otimes \Phi_n)
	\\
	&\qquad= \langle J_{H_+} g|_{1|}
		 R^+_{1,m+1}\cdots R^+_{12} S_{1|1}\cdots S_{1|n}(f\otimes \Psi_m\otimes \Phi_n)
	\\
	&\qquad=: A^{\tilde R^+}_{f, J_{H_+}g}(\Psi_m\otimes \Phi_n)
\end{align*}
and it holds on finite particle states that
\begin{align*}
	&[J\phi(g)J\otimes \1,\tilde S(\phi(f)\otimes \1)\tilde S^\ast] \\
	&= 
	[Ja(g)^\ast J\otimes \1,\tilde S(a(f)\otimes \1)\tilde S^\ast]  +
	[Ja(g) J\otimes \1,\tilde S(a(f)^\ast\otimes \1)\tilde S^\ast] \\
	&=[Ja(g)^\ast J\otimes \1,\tilde S(a(f)\otimes \1)\tilde S^\ast]  -
	[Ja(g)^\ast J\otimes \1,\tilde S(a(f)\otimes \1)\tilde S^\ast]^\ast
	\\
	&=0
\end{align*}
where we use that the operator 
$A^{\tilde R^+}_{f,J_{H_+}g}$ 
is self-adjoint for all $f,g \in H_+$ by Lemma \ref{lem:10.3.3}.

For the second statement similar calculation leads to 
\begin{align*}
	&[(\1\otimes Ja(g)^\ast J),
		\tilde S(\1\otimes a(f))\tilde S^\ast](\Psi_n\otimes \Phi_m)
	\\
	&\qquad= \langle J_{H_-} g|_{|1} R^-_{|1,n+1}\cdots R^-_{|1,1} S_{1|1}
		\cdots S_{m|1} (\Psi_m \otimes f \otimes \Phi_n)
	\\
	&\qquad= B_{f, J_{H_-}g}^{\tilde R^-}  (\Psi_m \otimes \Phi_n)
\end{align*}
and the same arguments as above hold.

For the only if part we realize that the commutation of $x'\otimes \1$ with 
$\Ad \tilde S(x\otimes \1)$ implies that $[J \phi(g)J\otimes \1, \tilde S(\phi(f)  \otimes \1)\tilde S^\ast]=0$ on a dense domain. The above calculation 
for the case $m=0$ and $n=1$ shows that left locality holds and right locality is analogous.
\end{proof}

\begin{remark}
Lemma \ref{lem:10.3.3} and Proposition \ref{pr:LRlocality} show that 
Definition \ref{def:SLRabstract} leads to the most general 
form of operators $S$ giving rise to a wave S-matrix
as in Proposition \ref{pr:massless-structure} 
using the Fock space structure. 
But there are known examples where the S-matrix is not of this form.
Namely, for the case $H_\pm$ the irreducible standard pair and $R^\pm= \1$ 
a more general family of wave S-matrix not compatible with the 
Fock structure, has been implicitly constructed 
in \cite{BT12}.
\end{remark}

We summarize the construction.
\begin{proposition}\label{pr:lrnet}
For each pair of standard pairs  $(H_\pm,T^1_\pm)$ with finite multiplicity and 
operators $R^\pm\in\S(H_\pm,T^1_\pm)$, $S\in\S(R^+,R^-)$ there 
is an asymptotically complete (in the sense of waves) Borchers triple
$(\M_{\tilde S}, T, \Omega)$ with wave S-matrix $\tilde S$, defined as in Proposition \ref{pr:massless-structure}.
\end{proposition}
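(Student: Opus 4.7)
The plan is to apply Proposition \ref{pr:massless-structure}, which reduces the construction of a two-dimensional, asymptotically complete massless Borchers triple to the assembly of two one-dimensional Borchers triples together with a wave S-matrix on the tensor product Hilbert space satisfying translation covariance, pointwise invariance of the one-particle subspaces, and left/right locality relative to the one-dimensional algebras. All the ingredients have been prepared in the preceding subsections, so the task is to check that each clause of Proposition \ref{pr:massless-structure} is met.

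First, I would invoke the corollary following Proposition \ref{pr:VNLocality} on each triple $(H_\pm, T^1_\pm, R^\pm)$ to obtain the one-dimensional Borchers triples $(\M_\pm, T_\pm, \Omega_\pm) := (\M_{R^\pm}(H_\pm), \Gamma(T^1_\pm), \Omega_\pm)$ acting on the twisted Fock spaces $\F_{\H_\pm, R^\pm}$. These furnish the factors $\M_+' \otimes \1$ and $\1 \otimes \M_-$ needed to build the wedge algebra.

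Second, I would define $\tilde S := \bigoplus_{m,n} S^{(m,n)}$ on the full unsymmetrized Fock space. By Lemma \ref{lem:mixedYBE}, the left and right mixed Yang--Baxter identities of Definition \ref{def:SLRabstract} ensure that $\tilde S$ commutes with $P_{R^+} \otimes \1$ and $\1 \otimes P_{R^-}$, so it restricts to $\F_{\H_+, R^+} \otimes \F_{\H_-, R^-}$. Unitarity is immediate from unitarity of each factor $S_{i|j}$; the commutation $[\tilde S, T_+ \otimes T_-] = 0$ follows from the translation covariance of $S$ in Definition \ref{def:SLRabstract}, since each $S_{i|j}$ acts only on its two designated tensor components; and the pointwise invariance of $\H_+ \otimes \Omega_-$ and $\Omega_+ \otimes \H_-$ is immediate from the stipulations $S^{(0,n)} = S^{(m,0)} = \1$.

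The main nontrivial clause is the left/right locality condition in Proposition \ref{pr:massless-structure}, that $x' \otimes \1$ commutes with $\Ad \tilde S(x \otimes \1)$ for $x \in \M_+$, $x' \in \M_+'$, and symmetrically on the right. This is precisely the content of Proposition \ref{pr:LRlocality}, which in turn rests on Lemma \ref{lem:10.3.3} providing the self-adjointness of the operators $A^{\tilde R^+}_{f, J_{H_+} g}$ and $B^{\tilde R^-}_{f, J_{H_-} g}$ arising as commutator values, together with the energy bounds used in Proposition \ref{pr:VNLocality} to promote the commutation of fields on the finite-particle domain $\D$ to the commutation of the generated von Neumann algebras. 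With all these pieces in hand, setting $\M := (\M_+' \otimes \1) \vee \Ad \tilde S(\1 \otimes \M_-)$, $T := T_+ \otimes T_-$, $\Omega := \Omega_+ \otimes \Omega_-$ produces the desired Borchers triple by the correspondence of Proposition \ref{pr:massless-structure}, and both asymptotic completeness for waves and the identification of $\tilde S$ as the associated wave S-matrix are built into that correspondence.
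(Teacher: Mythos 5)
Your proposal is correct and follows essentially the same route as the paper: the paper offers no separate proof of Proposition \ref{pr:lrnet} beyond the phrase ``we summarize the construction,'' precisely because the statement is assembled from the corollary to Proposition \ref{pr:VNLocality} (the one-dimensional triples), Lemma \ref{lem:mixedYBE} (restriction of $\tilde S$ to the twisted Fock spaces), Proposition \ref{pr:LRlocality} with Lemma \ref{lem:10.3.3} (the locality clauses), and the correspondence of Proposition \ref{pr:massless-structure}, exactly as you lay out. Your additional explicit checks of unitarity, translation covariance, and the pointwise invariance of $\F_{\H_+,R^+}\otimes\Omega_-$ and $\Omega_+\otimes\F_{\H_-,R^-}$ are the right routine verifications that the paper leaves tacit.
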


\begin{remark}
 We recall that in \cite{Tanimoto12-2, BT12} we proved the corresponding
 commutation by decomposing the S-matrix into Longo-Witten unitaries.
 In this paper we took a slightly different strategy.
 This was necessary for nondiagonal S-matrix, which is more complicated
 and does not admit a simple decomposition into Longo-Witten unitaries.
 On the other hand, the commutation relation
 we needed is $[x'\otimes \1, \Ad \tilde S(x\otimes \1)] = 0$ and
 it is sufficient that $\Ad \tilde S(x\otimes \1) \in \M\otimes \B(\F_{H_-, R^-})$
 hence on the $\B(\F_{H_-, R^-})$ side one has a greater freedom.
 One has to consider not Longo-Witten endomorphisms of $\M$ but commutation relations on a
 larger space. After this observation one can follow the same line of the proofs in \cite{Tanimoto12-2}.
 
 The connection of these extended commutation relations to nets with boundary \cite{LW11} is unclear.
\end{remark}

We showed in \cite[Section 3]{Tanimoto12-2} that the asymptotic chiral components
are conformal if the two-dimensional Borchers triple is strictly local.
Conversely, in order to construct strictly local Borchers triples, one has
to take strictly local one-dimensional components from the beginning.
The question whether one-dimensional Borchers triples can be strictly local
has been considered in \cite{BLM11}, which largely remains open.

From the bootstrap approach, there have been found form factors of some
local operators in certain massless models \cite{DMS95, MS97}.
However, the existence of form factors by no means implies the existence
of the corresponding Haag-Kastler net.
Indeed, we showed \cite{Tanimoto12-2, Tanimoto13} that in massless models with
a prescribed S-matrix, the strict locality can fail. This should be connected
with the well-known problem of the convergence of form factors, which is
clearly worse in massless cases.
Yet, the possibility that one-dimensional Borchers triples can be strictly
local is a very interesting problem. We will discuss this point later in
Section \ref{holography}.

\section{Massive models from left-right scattering}\label{massive}
In this short section we construct massive Borchers triples.
For a given standard pair $(H_+,T_+^1)$, we define the opposite
standard pair as follows: Let $P_+^1$ be the generator of $T_+^1$.
We put $T_+^{1\prime}(t) = \ee^{\ima t/P_+^1}$ and
$T_+'(t) = \Gamma(T_+^{1\prime})$.
\begin{lemma}\label{lm:opposite}
 The pair $(H_+', T_+^{1\prime})$ is a standard pair.
\end{lemma}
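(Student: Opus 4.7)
The plan is to check the three axioms defining a (non-degenerate) standard pair for $(H_+', T_+^{1\prime})$.

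The first two conditions are immediate. As noted earlier in the paper, the symplectic complement of a standard subspace is again standard, so $H_+'$ is a standard subspace of $\H_+$. The generator of $T_+^{1\prime}$ is $(P_+^1)^{-1}$, defined via the Borel functional calculus for $P_+^1$; it is positive because $P_+^1$ is positive, and it has no point spectrum at $0$, since a nonzero element of its kernel would be supported at $+\infty$ in the spectral resolution of $P_+^1$, which is impossible.

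The main step is the semigroup invariance $T_+^{1\prime}(t) H_+' \subset H_+'$ for $t \ge 0$. Here I would invoke the Longo-Witten theorem from \cite{LW11}: the family $\{e^{\ima s/P_+^1}\}_{s \le 0}$ is a one-parameter semigroup of Longo-Witten endomorphisms of $(H_+, T_+^1)$, so $T_+^{1\prime}(s) H_+ \subset H_+$ for every $s \le 0$. This has the ``wrong'' sign compared to what is needed for $(H_+', T_+^{1\prime})$, but the sign is reversed when one passes to the symplectic complement.

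Concretely, the Borchers-type relation $J_{H_+} T_+^1(s) J_{H_+} = T_+^1(-s)$, valid for any standard pair, implies that $J_{H_+}$ commutes with $P_+^1$ and hence $J_{H_+} T_+^{1\prime}(t) J_{H_+} = T_+^{1\prime}(-t)$ for every $t \in \RR$. Combining this with $H_+' = J_{H_+} H_+$ yields
\[
T_+^{1\prime}(t) H_+' = T_+^{1\prime}(t) J_{H_+} H_+ = J_{H_+} T_+^{1\prime}(-t) H_+ \subset J_{H_+} H_+ = H_+',
\]
for $t \ge 0$, since then $-t \le 0$ and the preceding Longo-Witten inclusion applies. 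The only non-routine ingredient is the invocation of the semigroup of endomorphisms $\{e^{\ima s/P}\}_{s\le 0}$ from \cite{LW11}; the remainder is bookkeeping with modular objects. Should one wish to avoid the citation, one can decompose $(H_+, T_+^1)$ into copies of the Schr\"odinger standard pair $(H_0, T_0)$ and verify the inclusion directly by an analytic-continuation argument on the strip $\RR - \ima(0,\pi)$, using the explicit form of $T_+^{1\prime}(t)$ as multiplication by $e^{\ima t \, e^{-q}}$.
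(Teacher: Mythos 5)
Your proof is correct and follows essentially the same route as the paper: both reduce to the Longo--Witten result that $T_+^{1\prime}(s)H_+\subset H_+$ for $s\le 0$ (via the direct-sum decomposition into irreducible standard pairs) and then pass to the symplectic complement to reverse the sign of the semigroup parameter. The only cosmetic difference is that you carry out the passage to $H_+'$ via $H_+'=J_{H_+}H_+$ and the relation $J_{H_+}T_+^{1\prime}(t)J_{H_+}=T_+^{1\prime}(-t)$, whereas one could equally argue directly from the definition of the symplectic complement; the paper leaves this step implicit.
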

\begin{proof}
A standard pair admits the direct sum decomposition as in \cite{LW11}.
With this decomposition, our claim follows from the result for the irreducible
pair \cite[Theorem 2.6]{LW11}, namely $T_+^{1\prime}(t)H_+ \subset H_+$ for $t \le 0$.

One can use the converse of the one-particle Borchers theorem as well \cite[Theorem 2.2.3]{Longo08}.
\end{proof}
If $R^+ \in \S(H,T^1)$, then $[R^+, T_+^{1\prime}\otimes \1] = [R^+, \1\otimes T_+^{1\prime}] = 0$, since
$T_+^{1\prime}$ is defined by a functional calculus of $T_+^1$.
Hence it is clear that the second quantization $T_+' = \Gamma(T_+^{1\prime})$ restricts to
the $R^+$-symmetrized Fock space $\F_{H_+, R^+}$.

Let us recall that one can construct a Borchers triple $(\M_+, T_+, \Q_+)$ (Section \ref{sec:2ndQBP}).
From Lemma \ref{lm:opposite} it follows that $\Ad T_+'(t)$ preserves $\M_+$ for $t \le 0$.
This is equivalent to that $\Ad T_+'(t)$ preserves $\M_+'$ for $t \ge 0$.
Two representations $T_+$ and $T_+'$ obviously commute, both have the positive generator.
Hence the joint spectrum of the combined representation $T_+(t_+)T_+'(t_-)$ of $\RR^2$ is
contained in $\overline{V_+}$.
Furthermore, if $(t_+, t_-) \in W_\R$, or equivalently if $t_+ \le 0$ and $t_- \ge 0$
(see Figure \ref{fig:wedge-local net} and note an unusual definitions of $t_+, t_-$),
then $\Ad T_+(t_+)T_+'(t_-)(\M_+') \subset \M_+'$.
Namely $(\M_+', T_+T_+', \Q_+)$ is a two-dimensional Borchers triple.

By a parallel reasoning, one sees that $(\M_-, T_-'T_-, \Q_-)$ is a two-dimensional Borchers triple,
where $T_-'$ is constructed analogously, but here $t_+$-lightlike translations are
given by $T_-'$ and $t_-$-translations by $T_-$.

\begin{theorem}
 Let $S \in \S(R^+, R^-)$. Then $(\tilde \M_S, \tilde T, \tilde \Q)$ is a Borchers triple,
 where
 \begin{itemize}
  \item $\tilde \M_S = \M_+'\otimes\1\vee \Ad \tilde S(\1\otimes \M_-)$,
  \item $\tilde T(t_+,t_-) = T_+(t_+)T_+(t_-)'\otimes T_-'(t_+)T_-(t_-)$,
  \item $\tilde \Q = \Omega_+\otimes\Omega_-$.
 \end{itemize}
\end{theorem}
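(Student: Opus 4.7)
The plan is to verify the three Borchers-triple axioms --- wedge-covariance, positive-energy spectrum, and cyclic--separating vacuum --- for $(\tilde \M_S, \tilde T, \tilde \Q)$, in close parallel to the massless construction of Proposition \ref{pr:lrnet}, using the two auxiliary facts recorded just before the theorem: that $(\M_+', T_+T_+', \Q_+)$ and $(\M_-, T_-'T_-, \Q_-)$ are themselves two-dimensional Borchers triples. The only genuinely new input compared with the massless case is that $\tilde T$ now mixes, on each chiral tensor factor, the original translation with its opposite partner from Lemma \ref{lm:opposite}, and one must check that the same twist $\tilde S$ remains compatible with this richer translation. The spectrum condition is immediate: on the first factor the joint one-particle generator of $T_+(t_+)T_+'(t_-)$ in $(t_+, t_-)$ coordinates is $(P_+^1, 1/P_+^1)$, whose spectrum sits in $\overline{V_+}$; the same holds for the second factor, and the combined representation on the tensor product has joint spectrum in $\overline{V_+}$ as well.

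For covariance, the key step is that $\tilde S$ commutes with $\tilde T(t_+, t_-)$ for every $(t_+, t_-)$. Specializing property~(2) of Definition \ref{def:SLRabstract} to one parameter zero gives $[S, T_+^1(t)\otimes\1] = 0 = [S, \1 \otimes T_-^1(t)]$, and because $T_\pm^{1\prime}$ are bounded Borel functions of $P_\pm^1$ these commutations persist with $T_\pm^{1\prime}$ in place of $T_\pm^1$; lifting to the $R^\pm$-symmetric Fock space (using Lemma \ref{lem:mixedYBE} to see that $\tilde S$ really restricts there) yields $[\tilde S, \tilde T(t_+, t_-)] = 0$. Covariance of $\M_+' \otimes \1$ under $\Ad \tilde T$ for $(t_+, t_-) \in W_R$ is then precisely the statement that $(\M_+', T_+T_+', \Q_+)$ is a two-dimensional Borchers triple; for $\Ad \tilde S(\1 \otimes \M_-)$ the commutation $[\tilde T, \tilde S] = 0$ allows one to rewrite
\[
\Ad \tilde T(t_+, t_-) \Ad \tilde S(\1 \otimes \M_-) = \Ad \tilde S\bigl(\1 \otimes \Ad[T_-'(t_+)T_-(t_-)](\M_-)\bigr)
\]
and invoke the corresponding statement for $(\M_-, T_-'T_-, \Q_-)$.

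For the cyclic--separating property I exploit the combinatorial identities $S^{(0,n)} = \1 = S^{(m,0)}$, which imply that $\tilde S$ acts as the identity on both $\Q_+ \otimes \F_{\H_-, R^-}$ and $\F_{\H_+, R^+} \otimes \Q_-$; in particular $\tilde S^\ast \tilde \Q = \tilde \Q$. Cyclicity of $\tilde \Q$ for $\tilde \M_S$ then follows from
\[
\tilde \M_S \tilde \Q \supset (\M_+' \otimes \1)\,\Ad \tilde S(\1 \otimes \M_-)\,\tilde \Q = \M_+' \Q_+ \otimes \M_- \Q_-,
\]
which is dense by the standardness of the one-dimensional triples (Proposition \ref{pr:VNLocality}). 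For separating, I would show that $\tilde \M_S'$ contains both $\1 \otimes \M_-'$ (via right locality of Proposition \ref{pr:LRlocality} together with the trivial commutation with $\M_+' \otimes \1$) and $\Ad \tilde S(\M_+ \otimes \1)$ (via left locality and by transporting $[\M_+ \otimes \1,\, \1 \otimes \M_-] = 0$ through $\tilde S$); an identical density argument then delivers $\M_+ \Q_+ \otimes \M_-' \Q_- \subset \overline{\tilde \M_S' \tilde \Q}$, proving cyclicity of $\tilde \Q$ for $\tilde \M_S'$, hence the separating property.

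The main obstacle is not technical but conceptual: one must see that the same operator $\tilde S$ can twist both the massless translation $T_+ \otimes T_-$ of Proposition \ref{pr:lrnet} and the genuinely two-dimensional translation $T_+T_+' \otimes T_-'T_-$ appearing here. This rests on the spectral representation $S = \und{S}(Q_+ + Q_-)$, which automatically upgrades the chiral commutation of $S$ with $T_\pm^1$ to commutation with the functional-calculus partners $T_\pm^{1\prime}$; once this compatibility is secured the Borchers axioms for the massive triple reduce, essentially line by line, to those of the chiral two-dimensional building blocks combined with the locality of $\tilde S$ furnished by Proposition \ref{pr:LRlocality}.
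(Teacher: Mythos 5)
Your proof is correct and follows essentially the same route as the paper: the paper likewise reduces wedge-covariance to the commutation of $\tilde S$ with $\tilde T$ (deduced from the two-particle components, exactly as in your functional-calculus argument upgrading $[S,T_\pm^1]=0$ to $[S,T_\pm^{1\prime}]=0$) and then cites Proposition \ref{pr:lrnet} for cyclicity and separation, since $\tilde\M_S$ and $\tilde\Q$ are literally the same as in the massless construction and those properties do not involve the translation. Your explicit re-derivation of the cyclic--separating property via $S^{(0,n)}=S^{(m,0)}=\1$ and the left/right locality of Proposition \ref{pr:LRlocality} is just an unpacking of that citation, not a different argument.
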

\begin{proof}
The properties for $\tilde T$ and $\tilde \Q$ are obvious.
It follows from the properties of their two-particle components
that $\tilde T$ and $\tilde S$ commute, hence $\Ad \tilde T(t_+,t_-) (\tilde \M) \subset \tilde \M$
for $(t_+,t_-) \in W_\R$.
The cyclicity and separating property of $\tilde \Q$ have been already proven
in Proposition \ref{pr:lrnet}.
\end{proof}

We will see in Section \ref{projection} that if $(H_+, T_+^1)$ is irreducible,
then $T_+(t_+)T_+(t_-)'$ is a massive representation. It follows immediately that
for a reducible pair $(H_+, T_+^1)$ the representation $T_+(t_+)'T_+(t_-)$ is just the
massive representation with the same multiplicity. Accordingly, we can call
$(\tilde \M_S, \tilde T, \tilde \Q)$ a massive Borchers triple.

It can be easily realized that the construction here is a generalization
of \cite[Section 6]{Tanimoto13}. Indeed, the present construction takes
two standard pairs, not only irreducible ones, and promotes them by
$R^\pm$-symmetric second quantization, not only by symmetric or antisymmetric
second quantization. Finally, the operator $S$ is allowed to have matrix-value,
not only scalar. It is also a generalization of \cite[Section 3]{Tanimoto13},
because $S$ can depend on the rapidity. However, here we will not investigate
the strict locality.

One may wonder if the S-matrices from our previous work \cite{BT12} can be
used, which does not preserve the two-particle space.
This does not work, at least straightforwardly,
because it is not clear whether the S-matrix commutes with
the opposite translation $T_+'\otimes \1$.

Finally, we remark that our construction in this section is a special case
of \cite{LS12}. To see this, it is enough to extract a Zamolodchikov-Fadeev
algebra from our algebra.
This can be done exactly as in \cite[Section 6]{Tanimoto13}	
and we omit the proof.
As in \cite{Tanimoto13}, our von Neumann algebra is a tensor product
twisted by $\tilde S$, hence the scattering inside a component remains the same.
We just illustrate how the two-particle S-matrix looks like:
As one sees from the construction, the first component is parity-transformed (c.f.\ Section \ref{holography}),
hence the scattering is determined by $\und{R}^{+\prime}(q) = \und{R}^+(\ima \pi-q)$.
If both the multiplicities of $(H_\pm, T_\pm^1)$ are two, then understanding
the $q$-dependence implicitly, it is given by
{\fontsize{8}{3}
\[
\left(\begin{array}{cccc|cccc|cccc|cccc}
{\und{R}^{+\prime}}^{11}_{11} & {\und{R}^{+\prime}}^{12}_{11} & & &            {\und{R}^{+\prime}}^{21}_{11} & {\und{R}^{+\prime}}^{22}_{11}& & &                                & & & &                           & & & \\
{\und{R}^{+\prime}}^{11}_{21} & {\und{R}^{+\prime}}^{12}_{12} & & &            {\und{R}^{+\prime}}^{21}_{21} & {\und{R}^{+\prime}}^{22}_{21} & & &                                & & & &                           & & & \\
                & & & &              & & & &          \overline{\und{S}^{11}_{11}} & \overline{\und{S}^{11}_{21}} & & &   \overline{\und{S}^{11}_{12}} & \overline{\und{S}^{11}_{22}} & & \\
                & & & &              & & & &          \overline{\und{S}^{12}_{11}} & \overline{\und{S}^{12}_{21}} & & &   \overline{\und{S}^{12}_{12}} & \overline{\und{S}^{12}_{22}} & & \\
\hline
{\und{R}^{+\prime}}^{11}_{12} & {\und{R}^{+\prime}}^{12}_{12} & & &            {\und{R}^{+\prime}}^{21}_{12} & {\und{R}^{+\prime}}^{22}_{12} & & &          &                        & & &   &                       & & \\
{\und{R}^{+\prime}}^{11}_{22} & {\und{R}^{+\prime}}^{12}_{22} & & &            {\und{R}^{+\prime}}^{21}_{22} & {\und{R}^{+\prime}}^{22}_{22} & & &          &                        & & &   &                       & & \\
 &   & & &           &   & & &          \overline{\und{S}^{21}_{11}} & \overline{\und{S}^{21}_{21}} & & &   \overline{\und{S}^{21}_{12}} & \overline{\und{S}^{21}_{22}} & & \\
 &   & & &           &   & & &          \overline{\und{S}^{22}_{11}} & \overline{\und{S}^{22}_{21}} & & &   \overline{\und{S}^{22}_{12}} & \overline{\und{S}^{22}_{22}}          & & \\
\hline
 & & \und{S}^{11}_{11} & \und{S}^{12}_{11} &    & & \und{S}^{21}_{11} & \und{S}^{22}_{11} &    & &   & &          & &   & \\
 & & \und{S}^{11}_{21} & \und{S}^{12}_{21} &    & & \und{S}^{21}_{21} & \und{S}^{22}_{21} &    & &   & &          & &   & \\
 & &                        & &    & &              & &    & & {\und{R}^-}^{11}_{11} & {\und{R}^-}^{12}_{11} &          & & {\und{R}^-}^{21}_{11} & {\und{R}^-}^{22}_{11} \\
 & &                        & &    & &              & &    & & {\und{R}^-}^{11}_{21} & {\und{R}^-}^{12}_{21} &          & & {\und{R}^-}^{21}_{21} & {\und{R}^-}^{22}_{21} \\
\hline
 & & \und{S}^{11}_{12} & \und{S}^{12}_{12}    & & & \und{S}^{21}_{12} & \und{S}^{22}_{12} &   & & &   &          & & & \\
 & & \und{S}^{11}_{22} & \und{S}^{12}_{22}    & & & \und{S}^{21}_{22} & \und{S}^{22}_{22} &   & & &   &          & & & \\
 & & &                       &     & & &                        &   & & {\und{R}^-}^{11}_{12} & {\und{R}^-}^{12}_{12} &          & & {\und{R}^-}^{21}_{12} & {\und{R}^-}^{22}_{12}\\
 & & &                       &     & & &                        &   & & {\und{R}^-}^{11}_{22} & {\und{R}^-}^{12}_{22} &          & & {\und{R}^-}^{21}_{22} & {\und{R}^-}^{22}_{22}
\end{array}\right).
\]}
Using the convention of \cite{LS12} and with an appropriate basis,
an S-matrix of this form could be called {\it block diagonal}.
Of course, such an S-matrix has been already treated in \cite{LS12} in more generality.
The point here is that one can obtain concrete examples from massless
left-right scattering.

\section{Further construction of massive models}\label{holography}
Here we investigate another connection between two-
and one-dimensional Borchers triples. In Sections \ref{massless}, \ref{massive}
our construction has always been carried out on the tensor product Hilbert space.
In this Section we work on a single Hilbert space.

A similar connection has been proposed under the name of algebraic lightfront holography
\cite{Schroer05}. There has been also an effort to reconstruct a full QFT net
from a set of a few von Neumann algebras and some additional structure \cite{Wiesbrock98}
where, however, strict locality remains open.
We present a simple sufficient condition in order to reconstruct a strictly local
Borchers triple out of a conformal net. This sufficient condition turns out to be
hard to satisfy, but we believe that it is of some interest, because techniques to construct models
are rather scarce.

The idea to recover the massive free field from the $\uone$-current through the endomorphisms
associated with the functions $\ee^{\ima t/p}$ is due to Roberto Longo.
Some of the results in this Section have already appeared in the Ph.D. thesis
of the author (M.B.) \cite{Bischoff12}.

\subsection{Holographic projection and reconstruction}
Let $(\M, T, \Omega)$ be a (two-dimensional) Borchers triple.
As we explained in Section \ref{half-line}, $T$ can be restricted to
the lightray $t_+ = 0$, the restriction we denote by $T^+$,
and the triple $(\M, T^+, \Q)$ is a one-dimensional Borchers triple.
We observe that the negative lightlike translation $T^{+\prime}$ is now reinterpreted
as a one-parameter semigroup of Longo-Witten endomorphisms. Indeed, $T^{+\prime}$ obviously commutes with
$T^+$ and $\Ad T^{+\prime}(t_+)$ preserves $\M$ for $t_+ \le 0$.
Furthermore, $T^{+\prime}(\cdot)$ has the positive generator.
These properties of $T^{+\prime}$ are actually very rare if we exclude
the massless asymptotically complete case which we considered in Section \ref{massless}.

Now let us reformulate the situation the other way around.
Let $(\M,T^+,\Omega)$ 
be a one-dimensional Borchers triple and $V(t)$
be a one-parameter semigroup of Longo-Witten endomorphism for $t \le 0$ with positive generator.
Let $T(t_+,t_-) = V(t_+)T^+(t_-)$. By assumption $T^+$ and $V$ commute,
hence $T$ is a representation of $\RR^2$. By the assumed spectral conditions,
$\sp T \subset \overline{V_+}$.
Then we have the following.

\begin{theorem}\label{th:holography}
The triple $(\M, T^+, \Q)$ is a Borchers triple.
If $(\M, T^+, \Q)$ is strictly local, then so is $(\M, T, \Q)$.
\end{theorem}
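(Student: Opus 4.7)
I read the theorem as asserting two things: first, that the combined data $(\M, T, \Q)$ with $T(t_+, t_-) = V(t_+) T^+(t_-)$ satisfies the axioms of a (two-dimensional) Borchers triple, and second, that strict locality of the one-dimensional restriction $(\M, T^+, \Q)$ implies strict locality of $(\M, T, \Q)$. I read the first sentence as naming the full two-dimensional data, since the second sentence about strict locality of $(\M,T,\Q)$ only makes sense once $(\M,T,\Q)$ is known to be a Borchers triple.

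For the first part, three axioms must be checked. Cyclicity and separating property of $\Omega$ for $\M$ are immediate from the hypothesis that $(\M, T^+, \Omega)$ is a one-dimensional Borchers triple. The spectrum condition $\sp T \subset \overline{V_+}$ is already recorded in the setup, following from the positivity of the generators of both $T^+$ and $V$ together with their commutativity. The remaining axiom is wedge covariance. For $a = (t_+, t_-) \in W_\R$, using the lightcone convention of the paper one checks that $t_+ \le 0$ and $t_- \ge 0$, and then factors
\[
\Ad T(a) \;=\; \Ad V(t_+) \circ \Ad T^+(t_-).
\]
The second factor preserves $\M$ because $(\M, T^+, \Omega)$ is a one-dimensional Borchers triple with $t_- \ge 0$, and the first factor preserves $\M$ because $V$ is a semigroup of Longo-Witten endomorphisms on $t \le 0$.

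For the strict locality part, fix $a = (t_+, t_-) \in W_\R$. The key step is the algebra inclusion
\[
\M \cap \Ad T(a)(\M') \;\supset\; \M \cap \Ad T^+(t_-)(\M').
\]
To obtain this, start from $\Ad V(t_+)(\M) \subset \M$ (Longo-Witten for $t_+ \le 0$) and take commutants to get $\Ad V(t_+)(\M') \supset \M'$. Since $V$ and $T^+$ commute, setting $\N = \Ad T^+(t_-)(\M)$ one still has $\Ad V(t_+)(\N) \subset \N$, hence $\Ad V(t_+)(\N') \supset \N'$; that is, $\Ad T(a)(\M') \supset \Ad T^+(t_-)(\M')$. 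Intersecting with $\M$ yields the displayed inclusion. Strict locality of the one-dimensional triple provides cyclicity of $\Omega$ for the smaller algebra on the right, and cyclicity for a subalgebra automatically implies cyclicity for any containing algebra, so $\Omega$ is cyclic for $\M \cap \Ad T(a)(\M')$ as required.

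There is no serious technical obstacle; the proof is a matter of correctly bookkeeping the lightcone coordinate conventions and arranging the commutant-taking so that the Longo-Witten inclusion can be transported past the action of $T^+(t_-)$ using the commutativity of $V$ with $T^+$. The conceptual point worth emphasizing is that the extra one-parameter family $V(t_+)$ only enlarges the relative commutant at the wedge level, so adjoining a Longo-Witten semigroup with positive generator can only preserve or improve strict locality, never break it.
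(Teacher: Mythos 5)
Your proof is correct and follows essentially the same route as the paper: the wedge inclusion is verified by factoring $\Ad T(a)$ into the Longo-Witten endomorphism and the one-dimensional translation, and strict locality follows from the inclusion $\M \cap \Ad T(a)(\M') \supset \M \cap \Ad T^+(t_-)(\M')$, exactly as in the paper's argument. Your reading of the first sentence as referring to the two-dimensional triple $(\M,T,\Q)$ matches the paper's intent, and your explicit commutant bookkeeping just spells out what the paper leaves as "one observes".
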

\begin{proof}
The first statement is clear from the definition.

We assume that $(\M, T^+, \Q)$ is strictly local.
Let $(t_+, t_-) \in W_\R$, in other words $t_+ < 0$ and $t_- > 0$.
One observes that $\Ad T^+(t_-) \circ \Ad V(t_+)(\M) \subset \Ad T^+(t_-)(\M)$.
The intersection in question is
\begin{eqnarray*}
 \M \cap \Ad T(t_+, t_-)(\M') &=& \M \cap \Ad T^+(t_-) \circ \Ad V(t_+)(\M') \\
&\supset& \M \cap \Ad T^+(t_-)(\M')
\end{eqnarray*}
and $\Q$ is cyclic for the right-hand side by assumption.
\end{proof}

As a strictly local Borchers triple corresponds to a Haag-Kastler net,
this Theorem gives a simple construction strategy.
However, as a natural consequence of difficulty in constructing Haag-Kastler nets,
examples of such Longo-Witten endomorphisms seem very rare.

Let us take a closer look at this phenomenon.
We take the Borchers triple $(\M, T^+, \Q)$ associated with the $\uone$-current
net. Among the endomorphisms found by Longo and Witten, the only one-parameter
family with positive generator (negative in their convention \cite{LW11})
is given by the function $\f(p) = \ee^{\ima t/p}$ with $t\leq 0$. 
As we will see, if we take $V_\f = \Gamma(\f(P_1))$, the above prescription gives just the free massive field net,
hence is not very interesting. However, this endomorphisms
is expected not to extend to any extension of the $\uone$-current net,
due to the failure of H\"older continuity of the function $\ee^{\ima t/p}$
at $p = 0$ for $t<0$.
We found another family of such endomorphisms in \cite{BT12}. We will discuss it
in Section \ref{mixing}.

General properties of such endomorphisms have been studied in \cite{Borchers97}.
It is very interesting to find out how to construct more examples of
one-parameter semigroup of Longo-Witten endomorphisms with the semibounded generator,
which would immediately lead to Haag-Kastler nets.

\subsection{Examples}\label{projection}
\subsubsection*{Standard pairs and two-dimensional Wigner representations}\label{wigner}
First we show that from a irreducible standard 
pair we can obtain a representation of the two-dimensional Poincar\'e
group. Everything could be done abstractly by using Borchers commutation 
relations, but we rather give a proof using an explicit representation
to get in contact with models constructed in the literature.

Let $U_m$ be the irreducible positive-energy representation of the 
the two-dimensional proper Poincar\'e group $\P_+$ with mass $m>0$ on a Hilbert space denoted by $\H_m$.
We can identify $\H_m=L^2(\RR,\dd \theta)$ and the action is given by 
\begin{align*}
    p_m(\theta)&= (m\cosh\theta,m\sinh\theta)\\
    (p^0,p^1)\cdot (a^0,a^1) & = p^0a^0-p^1a^1\\
    (U_m(x,\lambda)f)(\theta) &=\ee^{\ima p_m(\theta)\cdot x}\psi(\theta-\lambda)\\
    J_m\psi(\theta)&=\overline{\psi(\theta)}
    \,\text,
\end{align*}
where $J_m=U_m(-I)$ is the anti-unitary representation of $(a^0,a^1)\mapsto(-a^0,-a^1)$.
We remind that we can associate a standard space $H_m(W_\R)$ with the 
right wedge using modular localization \cite{BGL02}, namely 
$H_m(W_\R)=\ker(\1-S_m)$ is the standard space associated with $S_m=J_m\Delta_m^\frac12$,
where $\Delta_m^{\ima t}=U_m(0,-2\pi t)$.

For the irreducible standard pair it is convenient to 
take the restriction to the translation subgroup $\{T_0(t)\}_{t\in\RR}$ 
of the lowest weight $1$ positive energy 
representation of the M\"obius group $\Mob$ on $\H_0$ and the 
standard subspace $H_0=H_0(\RR_+)$ defined again through modular localization \cite{Longo08}.
It can be represented on  $\H_0=L^2(\RR_+,p\dd p)$ by
\begin{align*}
    (T_0(t)f)(p)&=\ee^{\ima t p} f(p)\\
    (\Delta_0^{\ima t})f(p) &= \ee^{-2\pi t}f(\ee^{-2\pi t} p)\\
     J_0f(p)&=\overline{ f(p) }
\end{align*}
such that $(J_0,\Delta_0)$ are the modular objects for $H_0$.
\begin{proposition}
    \label{prop:OneParticleSpaceMassiv}
    Let $(H_0,T_0)$ be the irreducible standard pair and $V_m(s)=\ee^{\ima m^2s/P_0}$, where $T_0(t)=\ee^{\ima t P_0}$.
    Then
    $U_m(a,\lambda)=V_m(\frac12(a^0-a^1))T_0(\frac12(a^0+a^1))\Delta_0^{-\ima\frac\lambda{2\pi}}$
    gives the mass $m$ representation and $H_0$ is identified 
    with $H_m(W_\R)$. 
\end{proposition}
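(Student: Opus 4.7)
The plan is to establish the three assertions in order: first that $U_m$ as defined yields a unitary representation of $\P_+^\uparrow$, then that it is (unitarily equivalent to) the mass-$m$ irreducible positive-energy representation, and finally that under this equivalence $H_0$ coincides with $H_m(W_{\mathrm R})$.

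For the group-law step, I would collect the three basic commutation relations needed to recombine the factors $V_m(\tfrac12(a^0-a^1))$, $T_0(\tfrac12(a^0+a^1))$ and $\Delta_0^{-\ima\lambda/2\pi}$. Two of them are automatic: $T_0$ and $V_m$ commute since both are functional-calculus expressions in $P_0$, and the standard pair identity $\Delta_0^{\ima t}T_0(s)\Delta_0^{-\ima t}=T_0(\ee^{-2\pi t}s)$ is the usual one-particle Borchers relation for $(H_0,T_0)$. The third, $\Delta_0^{\ima t}V_m(s)\Delta_0^{-\ima t}=V_m(\ee^{2\pi t}s)$, is obtained by applying the functional calculus of $P_0$: from $\Delta_0^{\ima t}P_0\Delta_0^{-\ima t}=\ee^{-2\pi t}P_0$ one gets $\Delta_0^{\ima t}(m^2/P_0)\Delta_0^{-\ima t}=\ee^{2\pi t}(m^2/P_0)$, which exponentiates to the claimed relation (the fact that $P_0$ has no kernel, built into the definition of standard pair, ensures $1/P_0$ is a well-defined positive self-adjoint operator). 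Computing $U_m(a,\lambda)U_m(a',\lambda')$ and pushing the outer $\Delta_0^{-\ima\lambda/2\pi}$ through the inner translations then produces the Poincar\'e product $U_m(a+\Lambda(\lambda)a',\lambda+\lambda')$ once one notes that a boost by $\lambda$ scales the two lightlike components of $a'$ by $\ee^{\mp\lambda}$.

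For the identification with the standard mass-$m$ representation I would introduce the explicit unitary $V\colon L^2(\RR_+,p\dd p)\to L^2(\RR,\dd\theta)$ given by $(Vf)(\theta)=m\ee^\theta f(m\ee^\theta)$. Under $V$ the operator $P_0$ becomes multiplication by $m\ee^\theta=p_-(\theta)$ and $m^2/P_0$ becomes multiplication by $m\ee^{-\theta}=p_+(\theta)$, so
\[
V\,T_0(\tfrac12(a^0+a^1))V^\ast=\ee^{\frac{\ima}{2}(a^0+a^1)p_-(\theta)},\qquad
V\,V_m(\tfrac12(a^0-a^1))V^\ast=\ee^{\frac{\ima}{2}(a^0-a^1)p_+(\theta)},
\]
and the product is multiplication by $\ee^{\ima p_m(\theta)\cdot a}$ in view of the lightray identity $p\cdot a=\tfrac12(p_+a_-+p_-a_+)$. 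Likewise $V\Delta_0^{-\ima\lambda/2\pi}V^\ast$ acts as the shift $\psi(\theta)\mapsto\psi(\theta-\lambda)$, giving the boost. Hence $VU_m(\slot,\slot)V^\ast$ is exactly the representation written in the preamble.

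For the final identification of $H_0$ with $H_m(W_{\mathrm R})$, recall that a standard subspace is uniquely determined by its modular data $(J,\Delta^{1/2})$. From the formula defining $U_m$ we read off $U_m(0,-2\pi t)=\Delta_0^{\ima t}$, so $\Delta_{H_m(W_{\mathrm R})}^{\ima t}=\Delta_0^{\ima t}$. The antiunitary $J_m$ acts as complex conjugation in the $\theta$-realization, and one checks that complex conjugation on $L^2(\RR_+,p\dd p)$ intertwines with it under $V$ (since $m\ee^\theta$ is real), so $J_{H_m(W_{\mathrm R})}=J_0$. Consequently $S_{H_m(W_{\mathrm R})}=S_{H_0}$, and the two standard subspaces coincide.

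The only genuinely delicate point is the commutation relation $\Delta_0^{\ima t}V_m(s)\Delta_0^{-\ima t}=V_m(\ee^{2\pi t}s)$: it hinges on the fact that $1/P_0$ is a bona fide self-adjoint operator, which in turn uses the non-degeneracy assumption on the standard pair. Everything else reduces to bookkeeping with the Schr\"odinger realization and the explicit intertwiner $V$.
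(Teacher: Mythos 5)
Your strategy is the same as the paper's: write down an explicit unitary from $L^2(\RR_+,p\dd p)$ to $L^2(\RR,\dd\theta)$, check that it carries $V_m(\frac12(a^0-a^1))T_0(\frac12(a^0+a^1))\Delta_0^{-\ima\lambda/2\pi}$ to the standard mass-$m$ representation, and then identify $H_0$ with $H_m(W_\R)$ by matching the modular data $(J,\Delta)$. The extra group-law verification via the commutation relations $\Delta_0^{\ima t}T_0(s)\Delta_0^{-\ima t}=T_0(\ee^{-2\pi t}s)$ and $\Delta_0^{\ima t}V_m(s)\Delta_0^{-\ima t}=V_m(\ee^{2\pi t}s)$ is correct but redundant once the intertwiner is in place.

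There is, however, a concrete error in your intertwiner: you take $(Vf)(\theta)=m\ee^{\theta}f(m\ee^{\theta})$, i.e.\ you substitute $p=m\ee^{\theta}$. With that choice $P_0$ becomes multiplication by $m\ee^{\theta}=p^0+p^1$, so $VT_0(\frac12(a^0+a^1))V^\ast$ is multiplication by $\ee^{\frac{\ima}{2}(a^0+a^1)(p^0+p^1)}$ and $VV_m(\frac12(a^0-a^1))V^\ast$ by $\ee^{\frac{\ima}{2}(a^0-a^1)(p^0-p^1)}$; their product is $\ee^{\ima(p^0a^0+p^1a^1)}$, not $\ee^{\ima p_m(\theta)\cdot a}=\ee^{\ima(p^0a^0-p^1a^1)}$ --- your own identity $p\cdot a=\frac12(p_+a_-+p_-a_+)$ pairs each lightlike momentum with the \emph{opposite} lightlike coordinate, whereas your formulas pair like with like. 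For the same reason $V\Delta_0^{-\ima\lambda/2\pi}V^\ast$ comes out as the shift $\psi(\theta)\mapsto\psi(\theta+\lambda)$, not $\psi(\theta-\lambda)$. The two mistakes are a consistent parity flip ($\theta\mapsto-\theta$), so the statement survives, but the displayed identities are false as written. The fix is to use $p=m\ee^{-\theta}$, i.e.\ $(Vf)(\theta)=m\ee^{-\theta}f(m\ee^{-\theta})$ (this is the paper's $R_m$); then $P_0$ becomes multiplication by $p^0-p^1$, the two exponentials assemble into $\ee^{\ima p_m(\theta)\cdot a}$, and the dilation becomes the shift by $-\lambda$. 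The final step identifying $H_0$ with $H_m(W_\R)$ via $S_{H_0}=J_0\Delta_0^{1/2}\mapsto J_m\Delta_m^{1/2}=S_{H_m(W_\R)}$ is fine and is exactly what the paper does.
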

\begin{proof}
    We show using the explicit parametrization. First we note that
    \begin{align*}
        R_m:L^2(\RR_+,p\dd p) & \longrightarrow L^2(\RR,\dd \theta)\\
            f&\longmapsto (\theta\mapsto m\ee^{-\theta}f(m\ee^{-\theta}))
    \end{align*}
    defines a unitary, namely
    \begin{align*}
        (R_mf,R_mg)_{L^2(\RR_+,p\dd p)} &=\int_\RR \overline{R_mf(\theta)}R_mg(\theta) \dd\theta\\
            &= \int_\RR\overline{f(\ee^{-\theta+\ln m})}g(\ee^{-\theta+\ln m})\ee^{-2\theta+2\ln m}\dd\theta\\
            &= \int_\RR\overline{f(\ee^{-\theta})}g(\ee^{-\theta})\ee^{-2\theta}\dd\theta\\
            &= \int_\RR\overline{f(p)}g(p) p \dd p\\
            &=(f,g)_{L^2(\RR_+,p\dd p)}
    \end{align*}
    shows unitarity. Then using 
    \begin{align*}
        \left(V_m(\textstyle\frac12(a^0-a^1))T_0(\textstyle\frac12(a^0+a^1))\Delta_0^{-\ima\frac\lambda{2\pi}}f\right)(p)
        &=\ee^\lambda\ee^{\ima\frac12(a^0+a^1) p+\ima\frac{m^2}2(a^0-a^1)/ p}f(\ee^\lambda p)
    \end{align*}
    we get:
    \begin{align*}
        \left(R_mV_m(\textstyle\frac12(a^0-a^1))T(\textstyle \frac12(a^0+a^1))\Delta^{-\ima\frac\lambda{2\pi}}f\right)(\theta)
        &=m\ee^{-\theta+ \lambda}\ee^{\ima\frac m2(a^0+a^1)\ee^{- \theta}+\ima\frac m2(a^0-a^1)\ee^{\theta}}
        f(m\ee^{-\theta+ \lambda})
        \\
        &=\ee^{\ima p_m(\theta)\cdot a}(R_mf)(\theta-\lambda)\\
        &=U(x,\lambda)(R_mf)(\theta)\,\text,
    \end{align*}
    in particular, one has $R_mT_0(\frac12(a^0+a^1))
    V_m(\frac12(a^0-a^1))\Delta_0^{-\ima\frac\lambda{2\pi}}=U_m(x,\lambda)R_m$.
    $J_\bullet$ acts in both representation 
    by complex conjugation, so it holds also $R_mJ_m=J_0R_m$.
\end{proof}

\subsubsection*{Factorizing S-matrix models}
We exhibited some examples of previously known Borchers triples in
Section \ref{borchers-triples}. The restriction to the lightray gives
one-dimensional Borchers triples as we observed in Section \ref{half-line}.
On the other hand, the scaling limit of the models \cite{Lechner08}
has been investigated and some one-dimensional Borchers triples
(half local quantum fields, in their terminology) have been introduced \cite[Section 4]{BLM11}.
Here we observe that they simply coincide.
As a special case, the lightray-restriction of the massive free net corresponds
to the $\uone$-current net, which we used in \cite{Tanimoto13}.

The one-dimensional Borchers triples in \cite{BLM11} are given as follows:
Let us fix $S_2$. The Hilbert space is the same $S_2$-symmetric
Fock space $\H_{S_2}$ based on the irreducible one-particle space
$L^2(\RR, \dd\theta)$.
The representation $T$ is restricted to the positive lightray,
which acts on the one-particle space as $T(t)(\xi)(\theta) = \ee^{\ima t\ee^\theta}\xi(\theta)$.
For a test function $g$ on $\RR$, the von Neumann algebra is in our notation given by
\begin{gather*}
 \phi_{S_2}(g) = z_{S_2}^\dagger(\hat g^+) + z_{S_2}(J_1\hat g^-), \\
 \N_{S_2} = \{\ee^{\ima\phi_{S_2}(g)}: \supp g \subset \RR_-\}'. 
\end{gather*}
where $\hat f^\pm(\theta) = \pm \ima \ee^\theta \int f(t)\ee^{\ima t\ee^{\pm\theta}}\dd  t = \pm \ima \ee^\theta\tilde f(\pm \theta)$,
where $\tilde f$ is the Fourier transform of $f$.
Note that in our notation $z_{S_2}(\cdot)$ is antilinear, while \cite{BLM11} it is
linear. $T_{S_1}$ and $\Q_{S_2}$ are same as before.

Let us compare this with the von Neumann algebra of the two-dimensional
Borchers triple. It is almost the same:
\begin{gather*}
\M_{S_2} := \{\ee^{\ima\phi_{S_2}(f)}: \supp f \subset W_\L\}', \\
\phi_{S_2}(f) := z_{S_2}^\dagger(f^+) + z_{S_2}(J_mf^-), 
\qquad f^\pm(\theta) = \frac{1}{2\pi}\int d^2af(a)\ee^{\pm \ima p(\theta)\cdot a},
\end{gather*}
Let us consider a function $f(t_+,t_-) = g_1(t_+)g_2(t_-)$.
Then $f^\pm(\theta) = \tilde g_1(-\ee^\theta)\tilde g_2(\ee^\theta)$.
If we take $g_1$ which is the derivative of $g$ above and $g_2$ which approximates the
delta function, it is clear that $f^\pm$ approximate $\hat g^\pm$,
hence we obtain $\N_{S_2} \subset \M_{S_2}$. By the standard argument using
the cyclicity of $\Omega_{S_2}$ and Takesaki's theory (see, e.g.\! the final paragraph of
\cite[Theorem 2.4]{LST12}) one can conclude that $\N_{S_2} = \M_{S_2}$.
Namely the one-dimensional Borchers triples coincide.

Finally, we observe that the case $S_2(\theta)=1$ corresponds to the $\uone$-current net.
The one-particle Hilbert spaces are identified as above and the full spaces are
the symmetric Fock spaces, thus they coincide. Translations are also identified.
In this case one can directly take $\Mr := \{\ee^{\ima\phi_{\mathrm{r}}(f)}: \supp f \subset W_\R\}''$
If one takes $f^+$ as in the previous paragraph where
$f$ is a test function supported in $W_\R$, then as shown in \cite{Lechner03},
$f^+(\theta -\lambda)$ has an analytic continuation in $\RR + \ima(0,\pi)$
and $f^+(\theta -i\pi) = f^-(\theta)$ and it is clear that $J_m f^- = f^+$.
In other words, $f^+ \in \ker (\1- J_m\Delta_m^\frac{1}{2}$).
As the wedge-algebra of the $\uone$-current net is generated
by the exponentiated fields $\A_{\uone}(\RR_+) = \{\ee^{\ima \phi(f^+)}: f^+ \in H(\RR_+)\}''$,
this coincides with $\Mr$ again by Takesaki's theorem.

\subsubsection*{From 1D Borchers pairs}
For a Borchers pair $(H,T^1)$ and $R\in\S(H,T^1)$ and 
a operator $M$ given in our representation by
$(Mf)^\alpha(q)=m^\alpha f^\alpha(q)$ as in Remark \ref{rem:masses}
we can define a massive Borchers triple
$(\M_R(H), TT',\Omega)$ where 
$T'(t)=\Gamma({T^1}'(t))$ and ${T^1}'(t)=\ee^{\ima tM^2P^{-1}}$.
The one-particle spaces can be identified with a direct sum of the spaces 
$\H_{m_\alpha}$ like in Proposition \ref{prop:OneParticleSpaceMassiv}.
Each $\alpha$ corresponds, therefore, to a massive particle with mass
$m_\alpha$.
It is clear that the former example is just a special case 
and one obtains	 in this way the models in \cite{LS12}, namely
from these assumptions about the particle spectrum and the two particle 
scattering operator one can construct the needed data $(H,T^1,{T^1}',R)$.

\subsubsection*{Conjecture on the SU(2)-current algebra}
Zamolodchikov and Zamolodchikov conjectured \cite{ZZ92} that, in our terminology,
the one-dimensional Borchers triples constructed out of the S-matrix of
the $\mathrm{SU}(2)$-symmetric Thirring model is
equivalent to the $\mathrm{SU}(2)$-current algebra, the chiral component
of a conformal field theory. This conjecture, if it turns out to be true,
would imply that the $\mathrm{SU}(2)$-current net admits a one-parameter
semigroup of Longo-Witten endomorphisms with positive generator,
which comes from the negative lightlike translation in the $\mathrm{SU}(2)$-Thirring model.
As remarked before, no such semigroup is so far known for
the $\mathrm{SU}(2)$-current net, hence this would be already new.
Furthermore, as we see in the next Section, if we have two such semigroups,
under suitable technical conditions we can ``mix'' them to obtain
new strictly local Borchers triples, or equivalently Haag-Kastler nets,
which would be a striking consequence.

As far as the authors understand, the conjecture remains open.
Nakayashiki found a quite large family of form factors of the $\mathrm{SU}(2)$-Thirring model
which have the same character as the $\mathrm{SU}(2)$-current algebra at level $1$ \cite{Nakayashiki04}.
However, it is not known whether the current algebra itself is appropriately represented.
As another evidence, it has been revealed that both the $\mathrm{SU}(2)$ Thirring model
and the $\mathrm{SU}(2)$-current algebra admit the same symmetry,
so-called Yangian symmetry \cite{BL93, MacKay05}. Yet the equivalence of the two-models
is unknown.

\subsection{Mixing models by the Trotter formula}\label{mixing}
Here we present a novel idea to construct strictly local Borchers triples.
This has not led to any new example, but the authors expect that there should be concrete situations
where it can apply, as we explain later.

\begin{proposition}\label{pr:trotter}
Let $(\M, T^+, \Q)$ be a one-dimensional Borchers triple and
assume that $\Ad V_1(t), \Ad V_2(t)$, $t\le 0$ are one-parameter Longo-Witten endomorphisms
with positive generators $Q_1, Q_2$.
Furthermore, we assume that $Q_1 + Q_2$ is essentially self-adjoint.
Then $V(t) := \ee^{\ima t(Q_1+Q_2)}$ implement a one-parameter semigroup of Longo-Witten
endomorphism for $t\le 0$ with positive generator.
The triple $(\M, T, \Q)$ is a two-dimensional Borchers triple,
where $T(t_+,t_-) := V(t_+)T^+(t_-)$. It is strictly local if so is $(\M, T^+, \Q)$.
\end{proposition}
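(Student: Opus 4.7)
The plan is to show that $V(t) = \ee^{\ima t(Q_1+Q_2)}$ furnishes a one-parameter semigroup of Longo-Witten endomorphisms of the one-dimensional triple $(\M, T^+, \Omega)$ with positive generator, and then apply Theorem \ref{th:holography} to obtain the two-dimensional Borchers triple and the transfer of strict locality in one stroke.

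Since $Q_1, Q_2$ are positive self-adjoint operators and their sum is essentially self-adjoint by hypothesis, Stone's theorem yields a strongly continuous unitary group $V(t)$ whose generator is the closure of $Q_1+Q_2$ and is positive. Vacuum invariance is immediate: as Longo-Witten unitaries fix $\Omega$, one has $Q_1\Omega = Q_2\Omega = 0$, hence $V(t)\Omega = \Omega$. Commutation with $T^+$ follows from the Trotter--Kato product formula (applicable because $Q_1+Q_2$ is essentially self-adjoint and semibounded),
\begin{align*}
V(t) = \slim_{n\to\infty}\bigl(V_1(t/n)V_2(t/n)\bigr)^n;
\end{align*}
each factor commutes with $T^+(u)$, so the strong limit does as well.

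The real work is the endomorphism property $\Ad V(t)(\M)\subset \M$ for $t\le 0$. Fix such $t$ and set $U_n := (V_1(t/n)V_2(t/n))^n$. Since $t/n \le 0$, each $V_i(t/n)$ preserves $\M$ under conjugation, so $\Ad U_n(\M)\subset \M$. For $x\in\M$ with $\|x\|\le 1$, the element $U_n x U_n^*$ lies in the unit ball of $\M$. Strong convergence $U_n \to V(t)$ automatically entails strong convergence $U_n^* \to V(t)^*$ (via the identity $\|U_n^*\xi - V(t)^*\xi\| = \|\xi - U_n V(t)^*\xi\|$ together with unitarity), and a standard three-term estimate gives $U_n x U_n^* \to V(t) x V(t)^*$ strongly. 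Since the unit ball of $\M$ is strongly closed, $V(t) x V(t)^*\in \M$, proving the claim after extending by linearity.

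Thus $V$ is a one-parameter semigroup of Longo-Witten endomorphisms with positive generator, so Theorem \ref{th:holography} applies verbatim to yield the two-dimensional Borchers triple $(\M, T, \Omega)$ and, when $(\M, T^+, \Omega)$ is strictly local, the strict locality of $(\M, T, \Omega)$. The principal technical obstacle is the essential self-adjointness assumption on $Q_1+Q_2$: a sum of two positive self-adjoint operators always admits a positive Friedrichs extension, but essential self-adjointness on a natural common core is substantially stronger and must be verified case by case. This condition, together with the well-documented scarcity of Longo-Witten semigroups with positive generator beyond the $\ee^{\ima t/p}$ family on the $\uone$-current net, is also why the mixing construction has not so far produced genuinely new strictly local examples.
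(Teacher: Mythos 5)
Your proof is correct and follows essentially the same route as the paper's: the Trotter product formula for $Q_1+Q_2$, preservation of $\M$ by each approximant $(V_1(t/n)V_2(t/n))^n$, passage to the strong limit, and then Theorem \ref{th:holography}. Your spelling out of the limit step via strong closedness of the unit ball of $\M$ (and the vacuum invariance) is just a more explicit rendering of what the paper leaves as "it is clear that".
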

\begin{proof}
The Trotter product formula (proved in \cite{Chernoff68} under the assumption here)
tells us that $V(t) = \lim_n \left(V_1(t/n)V_2(t/n)\right)^n$.
Then it is clear that
\[
\Ad V(t)(\M) = \lim_n \Ad \left(V_1(t/n)V_2(t/n)\right)^n(\M)
\subset \bigvee_n \Ad \left(V_1(t/n)V_2(t/n)\right)^n(\M) \subset \M,
\]
since both $\Ad V_1(t/n)$ and $\Ad V_2(t/n)$ are endomorphisms of $\M$.
Analogously $T^+$ commutes with $V$ since so do both $V_1$ and $V_2$.
Hence $V$ implements a one-parameter semigroup of Longo-Witten endomorphisms.
Positivity of the generator $Q_1 + Q_2$ is trivial from the assumptions.

The last statement is just a corollary of Theorem \ref{th:holography}.
\end{proof}
The assumption on the generators could be weakened in order to obtain the
same result, see \cite{Chernoff70, Chernoff74}.

This Proposition indicates that it is important to investigate the set of
one-parameter semigroups of Longo-Witten endomorphisms. Some general properties
have been obtained in \cite{Borchers97}, however, if one aims at constructing
models, it is necessary to study concrete examples.
Even in the best-known case where the Borchers triple comes from
the $\uone$-current net, the known examples of such one-parameter semigroups
are scarce.

There is some hope in models with asymptotic freedom.
Certain integrable models are expected to be asymptotically free \cite{Zinn-Justin93, AAR01},
including the  $\mathrm{O}(N)$ $\sigma$-models treated in \cite{LS12}.
In terms of Algebraic QFT, asymptotic freedom should imply that the scaling limit net
is equivalent to the massless free field net, hence to the tensor product of the
$\uone$-current nets. For an integrable model, the scaling limit net
should be constructed from the one-dimensional Borchers triples as seen in \cite{BLM11},
which is expected to be equivalent to the $\uone$-current net for an asymptotically free models.
Then one conjectures that there are two different one-parameter semigroup of Longo-Witten
endomorphisms, one coming from the free field and the other coming from the interacting field
(they cannot be the same because such a semigroup directly reproduces the net through Theorem \ref{th:holography}).
They could be mixed as Proposition \ref{pr:trotter}, producing further different nets.

\subsubsection*{Trivial examples}
For any Borchers triple $(\M, T, \Q)$ one can take the one-dimensional
reduction $(\M, T^+, \Q)$ and take the two copies of the $t_+$-translation $T^{+\prime}$.
The construction of Proposition \ref{pr:trotter} gives simply
the dilated $T^{+\prime}$. One can take also arbitrarily dilated translation $T^{+\prime}(\k\,\cdot), \k > 0$.
The resulting Borchers triple is just the dilation in the negative lightlike direction.

A slightly more complicated example can be found in \cite{Tanimoto13}.
We take the Borchers triple $(\Mc, \Tc, \Omc)$
which comes from the free massive complex free field.
There is an action of the global gauge group $\uone$, implemented by
$\ee^{\ima 2\pi t\Qc}$. For $\k \in \RR$, one can construct a new Borchers triple
\[
\Mtc := \Mc\otimes \1 \vee \Ad \ee^{\ima 2\pi\k\Qc\otimes\Qc}(\1\otimes\Mc),
\;\;\;\;\;\; \Ttc = \Tc\otimes\Tc,
\;\;\;\;\;\; \Omtc := \Omc\otimes\Omc.
\]
It turned out that this is strictly local.
It is easy to observe that $\Tc^{+\prime}\otimes\1$ and $\1\otimes\Tc^{+\prime}$
are both one-parameter semigroups of Longo-Witten endomorphisms with
the positive generators, since they commute with $\Qc\otimes\Qc$.
Proposition \ref{pr:trotter} changes simply the mass of the left or right
component, correspondingly. A similar observation holds for the construction
in Section \ref{massive}.

With the example above from \cite{Tanimoto13}, it is possible to determine
the lightlike intersection $\Mtc \cap \Ad \Ttc^+(t)(\Mtc)'$.
Indeed, as we know how the commutant looks like, it takes explicitly the form
\[
 \left(\Mc\otimes \1\vee \Ad \ee^{\ima 2\pi\k\Qc\otimes\Qc} \left(\1\otimes\Mc\right)\right)
\cap \left(\Ad \ee^{\ima 2\pi\k\Qc\otimes\Qc} \left(\Mc(t)'\otimes\1\right)\vee \1\otimes\Mc(t)'\right),
\]
where $\Mc(t) = \Ad \Tc^+(t)(\Mc)$.
We have considered this intersection in \cite[Section 4.4]{Tanimoto13} (with the change
of the action from $\ZZ_N$ to $S^1$, which does not affect the argument).
The result is that the above intersection is equal to
$\left(\Mc^\a\cap \Mc^\a(t)'\right)\otimes \left(\Mc^\a\cap\Mc^\a(t)'\right)$,
where $\Mc^\a$ is the fixed point with respect to $\Ad \ee^{\ima 2\pi\a \Qc}$.
The vacuum is clearly not cyclic for this von Neumann algebra if $\a \notin \ZZ$
(if $\a \in \ZZ$, $\ee^{\ima 2\pi\k\Qc\otimes\Qc} = \1$ and this case is not interesting).
In other words, the one-dimensional Borchers triple $(\Mtc, \Ttc^+, \Omtc)$ does not satisfy
strict locality.

\subsubsection*{A non example}
Here we show that on the $\uone$-current net $\u1net$, there is a nontrivial
semigroup of Longo-Witten endomorphisms with positive generator.
The fundamental idea is the boson-fermion correspondence, which we reformulated
in the operator-algebraic approach in \cite[Section 3.3]{BT12}.
In short, the $\uone$-current net can be embedded in the free complex fermion net
$\FerC$, where there is the $\uone$-action by inner symmetry and
it holds that $\u1net = \FerC^{\uone}$, the fixed point subnet.

The net $\FerC$ acts on the fermionic Fock space, where the ``one-particle space''
has actually multiplicity two as the (projective) representation of the M\"obius group
with the lowest weight $\frac12$.
Let us denote by $P_1$ the generator of the translation group on this
``one-particle space''.
The argument of \cite{LW11} works without any essential change for the fermionic
case and one sees that $\Lambda(\ee^{\frac{\ima t}{P_1}})$ implements a Longo-Witten
endomorphism for $t \le 0$, where $\Lambda$ is the fermionic second quantization.
This operator obviously commutes with the inner symmetry and therefore
restricts to the bosonic subspace and implements a Longo-Witten endomorphism
of the $\uone$-current net. The generator is again the restriction and is positive.

One can observe that if the procedure of Theorem \ref{th:holography} is applied to
the $\FerC$ and $\Lambda(\ee^{\frac{it}{P_1}})$, one obtains the massive
free complex fermion net. The proof will be just analogous as in Section \ref{projection}.
This still admits the $\uone$-action. It is immediate that the construction
of Theorem \ref{th:holography}, applied to the $\uone$-current net and this restriction
of the fermionic translation, leads to this $\uone$-fixed point subnet of
the two-dimensional free fermion net.

A natural question arises as to what happens if we mix the two one-parameter semigroups:
one coming from the restriction of the fermionic translation
and the other coming from the bosonic translation, by Proposition \ref{pr:trotter}.
Unfortunately, but interestingly, the self-adjointness condition is crucial.

We show that the common domain does not contain the bosonic one-particle space.
As we calculated in \cite[Section 3.3]{BT12}, the bosonic one-particle space
$L^2(\RR_+, p\dd p)$ can be embedded in the fermionic ``two-particle space''
$\left(L^2(\RR_+, \dd q_+)\oplus L^2(\RR_-,\dd q_-)\right)^{\otimes 2}$ as follows:
For $\Psi \in L^2(\RR_+, p\dd p)$, there corresponds a function
\[
 \iota(\Psi)(q_1,q_2) = -\frac{1}{2\pi}\Psi(q_1-q_2), \;\;\;\;\; \mbox{ for } q_1 > 0, q_2 < 0,
\]
$\iota(\Psi) = 0$ if $q_1$ and $q_2$ have the same sign and on the region
$q_1 < 0 ,q_2 > 0$ it is determined by antisymmetry (note the slight modification of
notation from \cite{BT12}).
The generator of fermionic one-particle translation $P_1$ acts as the multiplication by
$|q|$, hence $\frac{1}{P_1}$ acts by $\frac{1}{|q|}$.
Now we see that any function $\Psi \in L^2(\RR_+, p\dd p)$ is not in the domain of $\frac{1}{P_1}$.
Indeed, we may assume that the support of $\Psi$ contains some $p_0 > 0$.
The multiplication by $\frac{1}{q_1} - \frac{1}{q_2}$ in the fermionic two-particle space gives the function
\[
 \left(\frac{1}{q_1} - \frac{1}{q_2}\right)\iota(\Psi(q_1,q_2))
 = -\frac{1}{2\pi}\left(\frac{1}{q_1} - \frac{1}{q_2}\right)\Psi(q_1-q_2),
\]
which has divergences like $\frac{1}{q_1}$ and $\frac{1}{q_2}$
around $(0,-p_0)$ and $(p_0,0)$, respectively, hence is clearly not in
$L^2(\RR_+, \dd q_1)\otimes L^2(\RR_-, \dd q_2)$.
This implies that $\iota(\Psi)$ is not in the domain of $\Lambda(\frac{1}{P_1})$.

Therefore, we cannot find a common domain in such an elementary way
to apply Proposition \ref{pr:trotter}.
There are still weaker conditions which enable such an addition of
two generators \cite{Chernoff74}, but we are so far not able to check them in this situation.
To the authors' opinion, it is curious that the very existence of Haag-Kastler net is immediately
related to such a domain problem.

\section{Outlook}\label{outlook}
We constructed families of Borchers triples, massless ones with
multiple particle components and nontrivial left-left, right-right and
left-right scatterings and massive ones with block diagonal S-matrix.
Strict locality of these models remains open.
One should note that integrable models with bound states (S-matrix has
poles in the strip \cite{Smirnov92, Quella99}) have not
been treated in the operator-algebraic framework (c.f.\! \cite{LS12}).

We presented also relations between massive models and one-dimensional
Borchers triples accompanied with a one-parameter semigroup of Longo-Witten
endomorphisms with the semibounded generator.
Many open problems in integrable models are relevant with this observation. We discussed
the conjectured relations between the $\mathrm{SU}(2)$-current algebra
and the $\mathrm{SU}(2)$-symmetric Thirring model and asymptotic freedom
in integrable models. We argued that an affirmative solution of any of
these conjectures could lead to further new constructions of strictly local
Borchers triples.

Another correspondence between an integrable model and a conformal field theory
has been recently proposed, e.g.\! \cite{EPSS12}, in connection with higher dimensional gauge theory \cite{AGT10}.
The authors would like to see possible consequences in the operator-algebraic framework.

\subsubsection*{Acknowledgment}
We thank Gerardo Morsella, Roberto Longo, Masaki Oshikawa, Karl-Henning Rehren and L\'aszl\'o Zsid\'o
for interesting discussions.
We are also grateful to Giuseppe Mussardo, Atsushi Nakayashiki and Fedor Smirnov
for answering questions.

\appendix
\section{A Lemma on standard subspaces}
We need a straightforward generalization of a well-known result
(the special case $T=T'$ is basically \cite[Theorem 3.18]{Longo08}).
A general operator $T$ should not be confused with translation.
We use this notation only in this Appendix.

\begin{lemma}[{cf.~\cite[Theorem 3.18]{Longo08}, \cite[Lemma 2.1.]{LW11}}]
	\label{lem:LW}
	Let $K\subset \K$ and $H\subset \H$ be two standard subspaces 
	and $T,T'\in \B(\K,\H)$. Then the following are equivalent:
	\begin{enumerate}
		\item 	\label{it:1}
			$\langle g',Tf\rangle = \langle T'f,g'\rangle$ for all
			$f\in K$, $g'\in H'$.
		\item 	\label{it:e}
			$TS_K\subset S_H T'$.
		\item 	\label{it:f}
			$\Delta_H^{1/2} T \Delta_K^{-1/2}$ is defined  on $\D(\Delta^{-1/2}_K)$ and 
			its closure coincides with $J_HT' J_K$.
		\item \label{it:2}
			The map $T(s):=\Delta_H^{-\ima s} T \Delta_K^{\ima s} \in \B(\K,\H)$,
			$s\in\RR$ extends to a bounded weakly
 			continuous map 
			on $\RR+\ima[0,1/2]$ analytic in $\RR+\ima(0,1/2)$ and satisfying
			$T(\ima/2)=J_HT'J_K$.
	\end{enumerate}	
\end{lemma}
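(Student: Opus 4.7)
The plan is to prove the equivalences along the cycle $(\ref{it:1}) \Leftrightarrow (\ref{it:e}) \Leftrightarrow (\ref{it:f}) \Leftrightarrow (\ref{it:2})$, adapting the argument of \cite[Theorem 3.18]{Longo08} to the setting of two distinct standard subspaces and two distinct bounded operators. The natural breakdown is: the equivalence $(\ref{it:1}) \Leftrightarrow (\ref{it:e})$ is the purely algebraic/duality step, the equivalence $(\ref{it:e}) \Leftrightarrow (\ref{it:f})$ is pure Tomita bookkeeping, and the equivalence $(\ref{it:f}) \Leftrightarrow (\ref{it:2})$ is the complex-interpolation step.

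For $(\ref{it:1}) \Leftrightarrow (\ref{it:e})$, the key facts I would invoke are: the symplectic complement $H'$ satisfies $S_{H'} = S_H^*$, and a vector $v$ lies in $\D(S_H) = \D((S_H^*)^*)$ with $S_H v = w$ precisely when $\langle g', v\rangle = \overline{\langle S_H^* g', w \rangle}$ for all $g' \in \D(S_H^*)$ (this is the definition of the adjoint of a closed antilinear operator, combined with $(S_H^*)^* = S_H$). Since $S_{H'} g' = g'$ for $g' \in H'$, condition (\ref{it:1}) rewrites as $\langle g', Tf\rangle = \overline{\langle S_H^* g', T'f\rangle}$ for all $g' \in H'$ and $f \in K$. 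Extending by linearity and density from $H'$ to $H' + \ima H' = \D(S_H^*)$ (here one uses standardness of $H'$) gives exactly $Tf \in \D(S_H)$ with $S_H T f = T' f$ for all $f \in K$, i.e. (\ref{it:e}). The reverse direction is immediate by unwinding $\langle g', S_H T f\rangle = \overline{\langle S_H^* g', Tf\rangle} = \overline{\langle g', Tf\rangle}$ for $g' \in H'$.

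For $(\ref{it:e}) \Leftrightarrow (\ref{it:f})$, I would substitute $S_K = J_K\Delta_K^{1/2}$ and $S_H = J_H\Delta_H^{1/2}$ into the inclusion $TS_K \subset S_H T'$, rewrite as $J_H T J_K \Delta_K^{1/2} \subset \Delta_H^{1/2} T'$ on $\D(\Delta_K^{1/2})$, and then use the Tomita relation $J_K \Delta_K^{1/2} J_K = \Delta_K^{-1/2}$ to translate this into the statement that $\Delta_H^{1/2} T \Delta_K^{-1/2}$ agrees with $J_H T' J_K$ on $\D(\Delta_K^{-1/2})$. Because $J_H T' J_K$ is bounded on all of $\mathcal{K}$, the closure statement in (\ref{it:f}) follows, and conversely the inclusion can be recovered by reading the computation backwards.

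For $(\ref{it:f}) \Leftrightarrow (\ref{it:2})$, the map $T(s)$ is bounded and weakly continuous on $\RR$ because $\Delta_H^{-\ima s}$ and $\Delta_K^{\ima s}$ are unitary groups, and $\|T(s)\| = \|T\|$. On a dense set of smooth vectors for $\Delta_K$ and $\Delta_H$ (e.g.\ spectral subspaces for bounded intervals of $\ln\Delta_K$, $\ln\Delta_H$), the matrix element $s \mapsto \langle u, T(s) v\rangle$ extends by the spectral theorem to an entire analytic function. Hypothesis (\ref{it:f}) provides the boundary value at $s = \ima/2$, namely $J_H T' J_K$, which is bounded by $\|T'\|$. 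A three-lines (Phragm\'en--Lindel\"of) argument then gives a uniform bound on the strip and upgrades the extension to a bounded weakly continuous, interior-analytic family, establishing (\ref{it:2}). The converse direction consists simply of taking the boundary value $s \to \ima/2$ of the analytic extension and comparing with the spectral-calculus formula for $\Delta_H^{1/2} T \Delta_K^{-1/2}$ on $\D(\Delta_K^{-1/2})$.

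I expect the main subtlety to lie in (\ref{it:f}) $\Rightarrow$ (\ref{it:2}): keeping the two independent scaling groups $\Delta_H$ and $\Delta_K$ properly coordinated, choosing appropriate dense cores of analytic vectors on each side, and making the three-lines bound rigorous on operator-valued functions (rather than scalar-valued matrix elements) demands some care. The algebraic equivalences (\ref{it:1}) $\Leftrightarrow$ (\ref{it:e}) $\Leftrightarrow$ (\ref{it:f}) are essentially formal manipulations once the standard-subspace machinery is in place.
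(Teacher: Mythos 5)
Your proposal is correct and follows essentially the same route as the paper: the adjoint/duality characterization of $S_H$ via $S_{H'}=S_H^*$ for (\ref{it:1})$\Leftrightarrow$(\ref{it:e}), the Tomita relations $S_\bullet=J_\bullet\Delta_\bullet^{1/2}=\Delta_\bullet^{-1/2}J_\bullet$ for (\ref{it:e})$\Leftrightarrow$(\ref{it:f}), and entire vectors plus the three-line theorem on the scalar matrix elements $z\mapsto\langle\xi,\Delta_H^{-\ima z}T\Delta_K^{\ima z}\eta\rangle$ for (\ref{it:f})$\Leftrightarrow$(\ref{it:2}). The only cosmetic difference is that you phrase (\ref{it:1})$\Rightarrow$(\ref{it:e}) as ``$Tf\in\D(S_H)$ with $S_HTf=T'f$'' rather than ``$T'f\in\D(S_H)$ with $S_HT'f=Tf$''; since $S_H$ is an involution on its domain these are equivalent, and both extend antilinearly over $K+\ima K=\D(S_K)$ to give $TS_K\subset S_HT'$.
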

\begin{proof}
	To see \ref{it:e} $\Leftrightarrow$ \ref{it:f} we note that 
	$J_\bullet$ is an involution and $S_\bullet=\Delta_\bullet^{-1/2} J_\bullet$ for $\bullet=H,K$. Therefore,
	$TS_H\subset S_HT'$ is equivalent to $T\Delta_K^{-1/2}\subset \Delta_H^{-1/2} J_H T' J_K$.
	This is equivalent to $\Delta_K^{1/2} T\Delta_H^{-1/2} \xi=J_HT'J_K \xi$ 
	for $\xi \in \D(\Delta_K^{-1/2})$.

	For \ref{it:f} $\Rightarrow$ \ref{it:2} let $\xi\in\H$ and $\eta\in\K$ be entire
	vectors of exponential growth for $\Delta_H$ and $\Delta_K$, respectively.
	We define
	$$
		f_{\xi,\eta}(z):= \langle\xi, \Delta_H^{-\ima z} T\Delta_K^{\ima z} \eta\rangle
			\equiv \langle\Delta_H^{\overline{-\ima z}}\xi, T\Delta_K^{\ima z}\eta\rangle
	$$
 	which is an entire function with $f_{\xi,\eta}(t+\ima/2)= \langle\Delta_H^{1/2}\Delta_H^{\ima t}\xi, T\Delta_K^{-1/2}\Delta^{\ima t}_K\eta \rangle$, which equals 
$\langle \Delta_H^{\ima t} \xi, J_HT'J_K \Delta^{\ima t}_K\eta\rangle$ 
by assuming \ref{it:f}. A priori one has
	the estimate
\[
\|f_{\xi,\eta}(z)\|
        \leq \max \{\|T\|,\|T'\|\} \left\|(\1+ \Delta_H^{-1/2})\xi\right\| \left\|(\1+ \Delta_K^{-1/2})\eta\right\|
\]
        and this can be improved to
        $\|f_{\xi,\eta}(z)\|
        \leq \max \{\|T\|,\|T'\|\} \|\xi\|\, \|\eta\|$ by the three-line theorem (e.g.\! \cite{RSII}).
	By density of $\xi$'s in $H$ and $\eta$'s in $K$, \ref{it:2} follows.
	
	\ref{it:2} $\Rightarrow$ \ref{it:f}: Assuming \ref{it:2} as in the step before 
	$\langle T\Delta_K^{-1/2}\eta,\Delta_H^{1/2}\xi\rangle=\langle J_H T' J_K\eta,\xi\rangle$
	holds and the $\eta$'s and  $\xi$'s form a core for $\Delta_K^{-1/2}$ 
	and $\Delta_H^{-1/2}$, respectively. It follows that the equation holds 
	for all $\eta \in \D(S_K)$ and $\xi \in \D(S_H)$.
	This implies $\Delta_H^{1/2}T\Delta_K^{-1/2}\eta=J_HT'J_K\eta$ for all 
	$\eta \in \D(S_K)$, namely \ref{it:f} holds.

	To see that \ref{it:e} implies \ref{it:1} we calculate for $g'\in H'$ and $f\in K$ 
	\begin{align*} 
		\langle g',Tf\rangle &= \langle g',TS_K f\rangle \\
			&=\langle g',S_H T' f\rangle \\
			&=\langle T'f,S_H^\ast g'\rangle \\
			&=\langle T'f,g'\rangle \,\text.
	\end{align*}

	By assuming $\langle g',Tf \rangle = \langle T'f,g'\rangle$ for all $f\in K$ 
	and $g'\in H'$ we get 
	\begin{align*}
		\langle g',S_H T'f\rangle 
		&= \langle T'f,S_H^\ast g'\rangle
		= \langle T'f,g'\rangle = \<g', Tf\>
	\end{align*}
	and because $H'+\ima H'$ is dense in $\H$ it holds $S_H T'f=Tf$ for all $f\in K$.
	Then we have for $f,g\in K$
	\begin{align*}
		TS_K(f+\ima g) &= Tf-\ima Tg
			= S_H T'f - \ima S_H T'g
			= S_H T'(f+\ima g)
	\end{align*}
	and in particular $T S_K\subset S_H T'$ and we showed \ref{it:1} implies \ref{it:e}.
\end{proof}

\def\cprime{$'$}


\begin{thebibliography}{10}

\bibitem{AAR01}
Elcio Abdalla, M.~Cristina~B. Abdalla, and Klaus~D. Rothe.
\newblock {\em Non-perturbative methods in 2 dimensional quantum field theory}.
\newblock World Scientific Publishing Co. Inc., River Edge, NJ, second edition,
  2001.

\bibitem{AGT10}
Luis~F. Alday, Davide Gaiotto, and Yuji Tachikawa.
\newblock Liouville correlation functions from four-dimensional gauge theories.
\newblock {\em Lett. Math. Phys.}, 91(2):167--197, 2010.

\bibitem{AZ05}
Huzihiro Araki and L{\'a}szl{\'o} Zsid{\'o}.
\newblock Extension of the structure theorem of {B}orchers and its application
  to half-sided modular inclusions.
\newblock {\em Rev. Math. Phys.}, 17(5):491--543, 2005.

\bibitem{Bernard92}
Denis Bernard.
\newblock On symmetries of some massless {$2$}{D} field theories.
\newblock {\em Phys. Lett. B}, 279(1-2):78--86, 1992.

\bibitem{BL93}
Denis Bernard and Andr{\'e} LeClair.
\newblock The quantum double in integrable quantum field theory.
\newblock {\em Nuclear Phys. B}, 399(2-3):709--748, 1993.

\bibitem{Bischoff12}
Marcel Bischoff.
\newblock Construction of models in low-dimensional quantum field theory using
  operator algebraic methods.
\newblock 2012.
\newblock Ph.D. Thesis, Universit\`a di Roma ``Tor Vergata''.

\bibitem{BT12}
Marcel Bischoff and Yoh Tanimoto.
\newblock Construction of {W}edge-{L}ocal {N}ets of {O}bservables through
  {L}ongo-{W}itten {E}ndomorphisms. {II}.
\newblock {\em Comm. Math. Phys.}, 317(3):667--695, 2013.

\bibitem{Borchers92}
H.-J. Borchers.
\newblock The {CPT}-theorem in two-dimensional theories of local observables.
\newblock {\em Comm. Math. Phys.}, 143(2):315--332, 1992.

\bibitem{Borchers97}
H.~J. Borchers.
\newblock On the lattice of subalgebras associated with the principle of
  half-sided modular inclusion.
\newblock {\em Lett. Math. Phys.}, 40(4):371--390, 1997.

\bibitem{BLM11}
Henning Bostelmann, Gandalf Lechner, and Gerardo Morsella.
\newblock Scaling limits of integrable quantum field theories.
\newblock {\em Rev. Math. Phys.}, 23(10):1115--1156, 2011.

\bibitem{BGL02}
R.~Brunetti, D.~Guido, and R.~Longo.
\newblock Modular localization and {W}igner particles.
\newblock {\em Rev. Math. Phys.}, 14(7-8):759--785, 2002.
\newblock Dedicated to Professor Huzihiro Araki on the occasion of his 70th
  birthday.

\bibitem{Buchholz75}
D.~Buchholz.
\newblock Collision theory for waves in two dimensions and a characterization
  of models with trivial {$S$}-matrix.
\newblock {\em Comm. Math. Phys.}, 45(1):1--8, 1975.

\bibitem{BL04}
Detlev Buchholz and Gandalf Lechner.
\newblock Modular nuclearity and localization.
\newblock {\em Ann. Henri Poincar{\'e}}, 5(6):1065--1080, 2004.

\bibitem{Chernoff68}
Paul~R. Chernoff.
\newblock Note on product formulas for operator semigroups.
\newblock {\em J. Functional Analysis}, 2:238--242, 1968.

\bibitem{Chernoff70}
Paul~R. Chernoff.
\newblock Semigroup product formulas and addition of unbounded operators.
\newblock {\em Bull. Amer. Math. Soc.}, 76:395--398, 1970.

\bibitem{Chernoff74}
Paul~R. Chernoff.
\newblock {\em Product formulas, nonlinear semigroups, and addition of
  unbounded operators}.
\newblock American Mathematical Society, Providence, R. I., 1974.
\newblock Memoirs of the American Mathematical Society, No. 140.

\bibitem{DMS95}
G.~Delfino, G.~Mussardo, and P.~Simonetti.
\newblock Correlation functions along a massless flow.
\newblock {\em Phys. Rev. D}, 51:R6620--R6624, Jun 1995.

\bibitem{DF77}
W.~Driessler and J.~Fr{\"o}hlich.
\newblock The reconstruction of local observable algebras from the euclidean
  green's functions of relativistic quantum field theory.
\newblock In {\em Annales de L'Institut Henri Poincare Section Physique
  Theorique}, volume~27, pages 221--236, 1977.

\bibitem{DT11}
Wojciech Dybalski and Yoh Tanimoto.
\newblock Asymptotic completeness in a class of massless relativistic quantum
  field theories.
\newblock {\em Comm. Math. Phys.}, 305(2):427--440, 2011.

\bibitem{EPSS12}
Benoit Estienne, Vincent Pasquier, Raoul Santachiara, and Didina Serban.
\newblock Conformal blocks in {V}irasoro and {W} theories: duality and the
  {C}alogero-{S}utherland model.
\newblock {\em Nuclear Phys. B}, 860(3):377--420, 2012.

\bibitem{FS94}
P.~Fendley and H.~Saleur.
\newblock Massless integrable quantum field theories and massless scattering in
  {$1+1$} dimensions.
\newblock In {\em High energy physics and cosmology ({T}rieste, 1993)},
  volume~10 of {\em ICTP Ser. Theoret. Phys.}, pages 301--332. World Sci.
  Publ., River Edge, NJ, 1994.
\newblock arXiv:hep-th/9310058v1.

\bibitem{GLW98}
D.~Guido, R.~Longo, and H.-W. Wiesbrock.
\newblock Extensions of conformal nets and superselection structures.
\newblock {\em Comm. Math. Phys.}, 192(1):217--244, 1998.

\bibitem{Haag96}
Rudolf Haag.
\newblock {\em Local quantum physics}.
\newblock Texts and Monographs in Physics. Springer-Verlag, Berlin, second
  edition, 1996.
\newblock Fields, particles, algebras.

\bibitem{Lechner03}
Gandalf Lechner.
\newblock Polarization-free quantum fields and interaction.
\newblock {\em Lett. Math. Phys.}, 64(2):137--154, 2003.

\bibitem{Lechner06}
Gandalf Lechner.
\newblock On the construction of quantum field theories with factorizing
  s-matrices.
\newblock 2006.
\newblock Ph.D. Thesis, Universit\"at G\"ottingen. ar{X}iv:math-ph/0611050.

\bibitem{Lechner08}
Gandalf Lechner.
\newblock Construction of quantum field theories with factorizing
  {$S$}-matrices.
\newblock {\em Comm. Math. Phys.}, 277(3):821--860, 2008.

\bibitem{LST12}
Gandalf Lechner, Jan Schlemmer, and Yoh Tanimoto.
\newblock On the equivalence of two deformation schemes in quantum field
  theory.
\newblock {\em Lett. Math. Phys.}, 103(4):421--437, 2013.

\bibitem{LS12}
Gandalf Lechner and Christian Sch\"utzenhofer.
\newblock Towards an operator-algebraic construction of integrable global gauge
  theories.
\newblock {\em Ann.\! Henri Poincar\'e}, 15(4):645--678, 2014.

\bibitem{Longo08}
Roberto Longo.
\newblock Real {H}ilbert subspaces, modular theory, {${\rm SL}(2,{\bf R})$} and
  {CFT}.
\newblock In {\em Von {N}eumann algebas in {S}ibiu: {C}onference
  {P}roceedings}, 33--91. Theta, Bucharest, 2008.

\bibitem{LW11}
Roberto Longo and Edward Witten.
\newblock An algebraic construction of boundary quantum field theory.
\newblock {\em Comm. Math. Phys.}, 303(1):213--232, 2011.

\bibitem{MacKay05}
N.~J. MacKay.
\newblock Introduction to {Y}angian symmetry in integrable field theory.
\newblock {\em Internat. J. Modern Phys. A}, 20(30):7189--7217, 2005.

\bibitem{MS97}
P.~Mejean and F.~A. Smirnov.
\newblock Form factors for principal chiral field model with
  {W}ess-{Z}umino-{N}ovikov-{W}itten term.
\newblock {\em Internat. J. Modern Phys. A}, 12(19):3383--3395, 1997.

\bibitem{Nakayashiki04}
Atsushi Nakayashiki.
\newblock The chiral space of local operators in {${\rm SU}(2)$}-invariant
  {T}hirring model.
\newblock {\em Comm. Math. Phys.}, 245(2):279--296, 2004.

\bibitem{Quella99}
Thomas Quella.
\newblock Formfactors and locality in integrable models of quantum field theory
  in 1+1 dimensions.
\newblock 1999.
\newblock Diploma thesis, Freie Universit\"at Berlin.

\bibitem{RSII}
Michael Reed and Barry Simon.
\newblock {\em Methods of modern mathematical physics. {II}. {F}ourier
  analysis, self-adjointness}.
\newblock Academic Press [Harcourt Brace Jovanovich Publishers], New York,
  1975.

\bibitem{RSI}
Michael Reed and Barry Simon.
\newblock {\em Methods of modern mathematical physics. {I}}.
\newblock Academic Press Inc. [Harcourt Brace Jovanovich Publishers], New York,
  second edition, 1980.
\newblock Functional analysis.

\bibitem{Schroer97}
Bert Schroer.
\newblock Modular localization and the bootstrap-formfactor program.
\newblock {\em Nuclear Phys. B}, 499(3):547--568, 1997.

\bibitem{Schroer99}
Bert Schroer.
\newblock Modular wedge localization and the {$d=1+1$} formfactor program.
\newblock {\em Ann. Physics}, 275(2):190--223, 1999.

\bibitem{Schroer05}
Bert Schroer.
\newblock Constructive proposals for {QFT} based on the crossing property and
  on lightfront holography.
\newblock {\em Ann. Physics}, 319(1):48--91, 2005.

\bibitem{Smirnov92}
F.~A. Smirnov.
\newblock {\em Form factors in completely integrable models of quantum field
  theory}, volume~14 of {\em Advanced Series in Mathematical Physics}.
\newblock World Scientific Publishing Co. Inc., River Edge, NJ, 1992.

\bibitem{TakesakiII}
M.~Takesaki.
\newblock {\em Theory of operator algebras. {II}}, volume 125 of {\em
  Encyclopaedia of Mathematical Sciences}.
\newblock Springer-Verlag, Berlin, 2003.
\newblock Operator Algebras and Non-commutative Geometry, 6.

\bibitem{Tanimoto12-2}
Yoh Tanimoto.
\newblock Construction of {W}edge-{L}ocal {N}ets of {O}bservables {T}hrough
  {L}ongo-{W}itten {E}ndomorphisms.
\newblock {\em Comm. Math. Phys.}, 314(2):443--469, 2012.

\bibitem{Tanimoto12-1}
Yoh Tanimoto.
\newblock Noninteraction of {W}aves in {T}wo-dimensional {C}onformal {F}ield
  {T}heory.
\newblock {\em Comm. Math. Phys.}, 314(2):419--441, 2012.

\bibitem{Tanimoto13}
Yoh Tanimoto.
\newblock Construction of two-dimensional quantum field models through
  {L}ongo-{W}itten endomorphisms.
\newblock {\em Forum of Mathematics, Sigma}, 2:e7, 2014.

\bibitem{Wiesbrock93}
Hans-Werner Wiesbrock.
\newblock Half-sided modular inclusions of von-{N}eumann-algebras.
\newblock {\em Comm. Math. Phys.}, 157(1):83--92, 1993.

\bibitem{Wiesbrock98}
Hans-Werner Wiesbrock.
\newblock Modular intersections of von {N}eumann algebras in quantum field
  theory.
\newblock {\em Comm. Math. Phys.}, 193(2):269--285, 1998.

\bibitem{ZZ92}
A.~B. Zamolodchikov and Al.~B. Zamolodchikov.
\newblock Massless factorized scattering and sigma models with topological
  terms.
\newblock {\em Nuclear Phys. B}, 379(3):602--623, 1992.

\bibitem{Zinn-Justin93}
J.~Zinn-Justin.
\newblock {\em Quantum field theory and critical phenomena}, volume~85 of {\em
  International Series of Monographs on Physics}.
\newblock The Clarendon Press Oxford University Press, New York, second
  edition, 1993.
\newblock Oxford Science Publications.

\end{thebibliography}
\end{document}